\documentclass[sigconf,screen]{acmart}
\copyrightyear{2026}
\acmYear{2026}
\setcopyright{cc}
\setcctype{by}
\acmConference[SPAA '26]{38th ACM Symposium on Parallelism in Algorithms and Architectures}{July 06--10, 2026}{London, United Kingdom}
\acmBooktitle{38th ACM Symposium on Parallelism in Algorithms and Architectures (SPAA '26), July 06--10, 2026, London, United Kingdom}
\acmDOI{10.1145/3816782.3819216}
\acmISBN{979-8-4007-2761-0/2026/07}

\usepackage{amsmath}
\usepackage{amsfonts}
\usepackage{thm-restate}
\usepackage{enumitem}
\usepackage{booktabs}
\usepackage{graphicx}
\usepackage{tabularx}
\usepackage{multirow}
\usepackage{xspace}
\usepackage[framemethod=tikz]{mdframed}

\usepackage[dvipsnames,svgnames]{xcolor}
\definecolor{mygreen}{RGB}{200, 255, 200}
\usepackage{hf-tikz}

\usepackage[linesnumbered, ruled, vlined]{algorithm2e}
\DontPrintSemicolon
\SetArgSty{textnormal}
\SetKwInOut{Input}{Input}
\SetKwInOut{Output}{Output}
\SetKwProg{function}{Function}{:}{}
\SetKwProg{parfor}{parallel for}{\:do}{}

\newtheorem{question}{Question}

\newtheorem{invariant}{Invariant}

\newcommand{\myparagraph}[1]{\vspace{.03in}\noindent {\bf #1.}}
\newcommand{\defn}[1]{{\textit{\textbf{\boldmath #1}}}}
\newcommand{\fname}[1]{\textsc{#1}}
\newcommand{\vname}[1]{\textit{#1}}
\newcommand{\oname}[1]{\mathsf{#1}}
\newcommand{\whp}{whp\xspace}

\newcommand{\myjoin}{\fname{Join}\xspace}
\newcommand{\mysplit}{\fname{Split}\xspace}
\newcommand{\getrep}{\fname{GetRepresentative}\xspace}
\newcommand{\getpred}{\fname{GetPredecessor}\xspace}
\newcommand{\getsucc}{\fname{GetSuccessor}\xspace}
\newcommand{\gethead}{\fname{GetHead}\xspace}
\newcommand{\gettail}{\fname{GetTail}\xspace}

\newcommand{\batchjoin}{\fname{BatchJoin}\xspace}
\newcommand{\batchsplit}{\fname{BatchSplit}\xspace}
\newcommand{\batchgetrep}{\fname{BatchGetRepresentative}\xspace}
\newcommand{\batchgetpred}{\fname{BatchGetPredecessor}\xspace}
\newcommand{\batchgetsucc}{\fname{BatchGetSuccessor}\xspace}
\newcommand{\batchgethead}{\fname{BatchGetHead}\xspace}
\newcommand{\batchgettail}{\fname{BatchGetTail}\xspace}

\newcommand{\find}{\fname{Find}\xspace}
\newcommand{\batchfind}{\fname{BatchFind}\xspace}

\newcommand{\batchupdate}{\fname{BatchUpdateWeight}\xspace}

\newcommand{\link}{\fname{Link}\xspace}
\newcommand{\cut}{\fname{Cut}\xspace}

\newcommand{\batchlink}{\fname{BatchLink}\xspace}
\newcommand{\batchcut}{\fname{BatchCut}\xspace}

\newcommand{\parent}{\vname{parent}\xspace}
\newcommand{\child}{\vname{child}\xspace}
\newcommand{\priority}{\vname{prio}\xspace}
\newcommand{\parentdirection}{\vname{dir\_from\_parent}\xspace}

\newcommand{\splice}{\fname{Splice}\xspace}
\newcommand{\expose}{\fname{Expose}\xspace}
\newcommand{\evert}{\fname{Evert}\xspace}
\newcommand{\reverse}{\fname{Reverse}\xspace}

\newcommand{\revision}[1]{#1}

\newif\ifsubmission
\newcommand{\appref}[1]{%
  \ifsubmission
    the full paper%
  \else
    #1%
  \fi
}

\begin{document}

\title{Fast and Theoretically-Efficient Batch-Parallel Link-Cut Trees, Euler Tour Trees, and Treaps}

\author{Quinten De Man}
\affiliation{
    \institution{University of Maryland}
    \city{College Park}
    \state{MD}
    \country{USA}
}
\email{deman@umd.edu}

\author{Laxman Dhulipala}
\affiliation{
    \institution{University of Maryland}
   \city{College Park}
   \state{MD}
   \country{USA}
}
\email{laxman@umd.edu}

\begin{abstract}

Parallel batch-dynamic trees are a fundamental building block in recent theoretical and practical advances in dynamic graph algorithms.
However, all existing parallel batch-dynamic tree data structures, including Euler tour trees, UFO trees, topology trees, and rake-compress trees, are all significantly outperformed in the sequential setting by link-cut trees, which have been the sequential state-of-the-art for over 40 years.
Despite their excellent performance in the sequential setting, designing efficient batch-parallel link-cut trees has remained a major open problem.

In this paper, we close this gap by introducing MOJOS, a unified framework for theoretically- and practically-efficient parallel batch-dynamic trees.
We exploit the fact that both Euler tour trees and link-cut trees rely on a common dynamic sequence abstraction that supports splitting and joining.
We introduce a new batch-dynamic sequence built using treaps that achieves optimal work and depth, and outperforms existing parallel skip list and treap implementations for batch updates, queries, and memory usage.

With MOJOS, we develop a new batch-parallel Euler tour tree algorithm that outperforms prior batch-dynamic tree implementations supporting subtree queries.
Unlike prior batch-parallel Euler tour trees which rely on skip list's ability to represent cyclic sequences, MOJOS allows any batch-dynamic sequence data structure to be used as a drop-in replacement.
Finally, we develop the first theoretically-efficient batch-parallel link-cut tree, which is also the first batch-dynamic data structure supporting path queries to achieve $O(\log n)$ depth for batch updates in the binary-forking model.
Our link-cut tree implementation outperforms all known parallel batch-dynamic tree data structures supporting path queries.

\end{abstract}

\begin{CCSXML}
<ccs2012>
   <concept>
       <concept_id>10003752.10003809.10003635.10010038</concept_id>
       <concept_desc>Theory of computation~Dynamic graph algorithms</concept_desc>
       <concept_significance>500</concept_significance>
       </concept>
 </ccs2012>
\end{CCSXML}

\ccsdesc[500]{Theory of computation~Dynamic graph algorithms}

\keywords{Dynamic trees; Batch-dynamic trees; Link-cut trees}

\maketitle

\section{Introduction}

Dynamic tree data structures maintain a forest of trees subject to edge insertions (links) and deletions (cuts), while supporting queries such as connectivity queries, path queries, and subtree queries.
Since their introduction in the seminal work of Sleator and Tarjan~\cite{sleator1983data}, dynamic trees have become an essential primitive in algorithms for network flows~\cite{chen2025maximum}, dynamic graph connectivity~\cite{holm2001poly}, and many other problems~\cite{holm2025fully}.
In recent years, \emph{parallel batch-dynamic} variants of dynamic trees---which process batches of links, cuts, and queries simultaneously---have emerged as a key ingredient in high-performance parallel dynamic graph algorithms~\cite{tseng2019batch, acar2019parallel, deman2026ufo, acar2020changeprop, anderson2024deterministic, ikram2025parallel}.
These batch-dynamic trees have enabled the construction of more sophisticated parallel batch-dynamic graph algorithms, including for connectivity~\cite{acar2019parallel, de2025towards}, minimum spanning forest~\cite{AndersonBT20}, graph clustering~\cite{deman2025fully, tseng2022parallel}, and other graph problems~\cite{ghaffari2023nearly, dhulipala2019parallel}.

Existing parallel batch-dynamic tree data structures fall into two broad families.
The first, based on Euler tour trees (ETTs)~\cite{henzinger1995randomized, tseng2019batch}, represent each tree in the forest as an Euler tour stored in a dynamic sequence, and only support subtree queries.
The second family, based on tree contraction~\cite{miller1985parallel}, includes topology trees~\cite{frederickson1985data,frederickson1997ambivalent,frederickson1997data,deman2026ufo}, rake-compress trees~\cite{acar2004dynamizing,acar2005experimental,acar2020changeprop,anderson2024deterministic,ikram2025parallel}, and, most recently, UFO trees~\cite{deman2026ufo}, and supports a rich set of queries, including path queries.
Since many applications require only one type of query (e.g., dynamic connectivity only requires augmented subtree queries), designing specialized data structures for each query type is important.

\begin{figure*}
    \centering
    \includegraphics[width=0.88\linewidth]{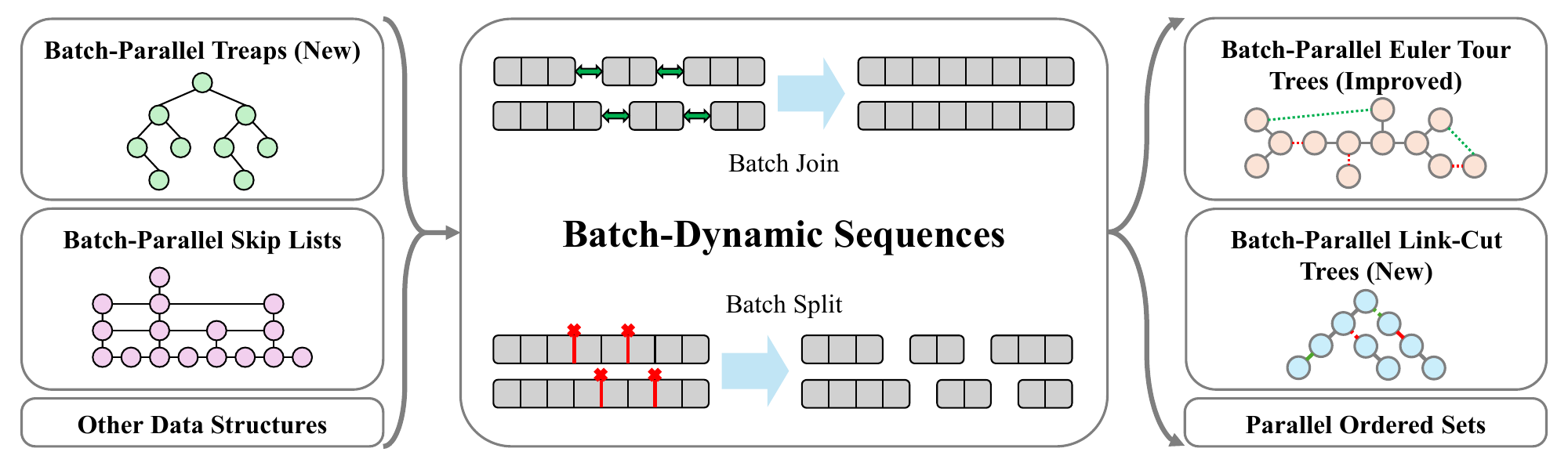}
    \vspace{-1em}
    \caption{\small Overview of the MOJOS framework. Arrows denote ``can be used to implement'' relationships.}
    \label{fig:intro_figure}
\end{figure*}

A key limitation of existing batch-parallel ETTs is their tight coupling to skip lists: Euler tours are naturally cyclic sequences, and the batch-parallel ETT algorithm of Tseng et al.~\cite{tseng2019batch} exploits the fact that skip lists can natively represent circular sequences.
This reliance is unfortunate since skip lists require $2n$ nodes in expectation to store $n$ elements, roughly doubling the memory footprint compared to tree-based alternatives.
Furthermore, in the sequential setting, ETTs can be implemented using any balanced binary search tree, and in practice, implementations based on treaps~\cite{seidel1996randomized} and splay trees~\cite{sleator1985self} significantly outperform those based on skip lists.
This raises two natural questions:

\begin{question}\label{q:treaps}
Can we design batch-parallel treap algorithms that match the theoretical bounds of skip lists and translate the sequential advantages of treaps (faster operations and lower memory usage) into the parallel setting?
\end{question}

\noindent If treaps can be made competitive or faster than skip lists in parallel, we might hope to eliminate the dependence on skip lists altogether:

\begin{question}\label{q:decouple}
Can batch-parallel ETTs be decoupled from skip lists entirely, so that any efficient batch-dynamic sequence data structure can serve as a drop-in replacement?
\end{question}

Questions~\ref{q:treaps} and~\ref{q:decouple} concern subtree queries, for which ETTs are the method of choice.
For path queries, the situation is arguably worse.
In the sequential setting, link-cut trees (LCTs)~\cite{sleator1983data, sleator1985self} have been the gold standard for over 40~years, offering $O(\log n)$ amortized time updates with small constant factors.
No existing batch-parallel tree data structure, including their sequential counterparts, comes close to matching the single-threaded performance of LCTs.
Despite this, developing a theoretically-efficient and practical batch-parallel LCT has remained a major open problem.
A key difficulty is that the sequential LCT algorithm relies on splay trees~\cite{sleator1985self}, whose top-down restructuring appears difficult to parallelize; indeed, prior work on parallel self-adjusting search structures~\cite{agrawal2018parallel} avoids splay trees entirely.
Moreover, no existing batch-dynamic tree data structure supporting path queries achieves $O(\log n)$ depth for batch updates in the binary-forking model~\cite{binaryforking}.

\begin{question}\label{q:lct}
Can we design a theoretically-efficient batch-parallel link-cut tree? Moreover, can such a data structure outperform existing parallel batch-dynamic trees for path queries in practice?
\end{question}

In this paper, we resolve Questions~\ref{q:treaps}--\ref{q:lct} above by developing a framework for constructing batch-dynamic trees from batch-dynamic sequences, and using it together with new batch-parallel treap algorithms to obtain faster batch-parallel ETTs and the first batch-parallel link-cut trees.
It has long been known that both Euler tour trees and link-cut trees can be implemented sequentially using \emph{dynamic sequence} data structures that support splitting and joining ordered sequences of elements~\cite{sleator1983data, henzinger1995randomized}.
However, this modularity has not been exploited in the parallel setting, where ETT implementations are tightly coupled to skip lists and no batch-parallel LCT algorithm exists.
We introduce the MOJOS (\textbf{M}ulti-\textbf{O}bject \textbf{J}oin \textbf{O}r \textbf{S}plit) framework (Figure~\ref{fig:intro_figure}), which formalizes this shared abstraction as an interface for batch-dynamic sequences and provides batch-parallel ETT and LCT algorithms expressed purely in terms of this interface.
As a result, any data structure implementing the batch-dynamic sequence interface can be used as a drop-in replacement to obtain batch-parallel ETTs and LCTs.

In MOJOS, we introduce a new batch-parallel treap that is both theoretically-efficient and substantially faster than skip lists in practice, answering Question~\ref{q:treaps}.
We achieve this by developing new parallel algorithms for batch joins and batch splits in treaps~\cite{seidel1996randomized} that run in $O(k \log(1+n/k))$ \revision{expected} work and $O(\log n)$ depth with high probability (\whp{}) for a batch of $k$ operations on sequences with $n$ total elements, matching the bounds of batch-parallel skip lists~\cite{tseng2019batch}.
Our algorithms run each join or split concurrently bottom-up and synchronize only through atomic compare-and-swap instructions, avoiding the synchronous top-down approach taken in prior treap algorithms~\cite{blelloch2016just, tseng2019batch}.
Our algorithms extend naturally to zip trees~\cite{tarjan2021zip} and zipzip trees~\cite{gila2023zipzip}.
In our experiments, our batch-parallel treap is $3.4\times$ faster than batch-parallel skip lists on batch joins and $1.6\times$ faster on batch splits, while using $33\%$ less memory.

Building on dynamic sequences, we design a new batch-parallel ETT algorithm that works with \emph{any} batch-dynamic sequence, answering Question~\ref{q:decouple}.
Adapting parallel ETT algorithms to the MOJOS interface requires new techniques for handling acyclicity and a parallel cycle-detection step to prevent the batch join from creating invalid Euler tours.
We introduce two optimizations that eliminate this overhead and reduce memory usage.
Our treap-based ETT is up to $1.6\times$ faster (and $1.2\times$ faster on average) for updates than the skip-list-based ETT~\cite{tseng2019batch}, while using $24\%$ less memory.

Finally, by building on dynamic sequences in MOJOS, we design the first theoretically-efficient batch-parallel link-cut tree, answering Question~\ref{q:lct}.
Our algorithm achieves $O(k \log(1+n/k))$ expected work and $O(\log n)$ depth \whp{}, and is the first batch-dynamic tree data structure supporting path queries to achieve $O(\log n)$ depth in the binary-forking model.
We present two variants.
The first is a simple algorithm that directly leverages the MOJOS framework with our batch-parallel treap and is extremely fast in practice on non-adversarial inputs.
The second is a robust algorithm that maintains a \emph{heavy preferred child invariant}, ensuring worst-case efficiency regardless of input structure.
In our experiments, both variants significantly outperform UFO trees~\cite{deman2026ufo}, the prior state-of-the-art for batch-parallel path queries.

In summary, our main contributions are as follows:
\begin{itemize}[leftmargin=*,topsep=0pt,itemsep=0pt]
\item \textbf{Batch-parallel treaps.} We develop new practical algorithms for batch joins and batch splits in treaps, achieving $O(k \log(1+n/k))$ \revision{expected} work and $O(\log n)$ depth \whp{}. Our algorithms extend naturally to zip trees and zipzip trees. Experimentally, our treap is $3.4\times$ faster than batch-parallel skip lists on batch joins, $1.6\times$ faster on batch splits, and uses $33\%$ less memory.

\item \textbf{MOJOS framework and batch-parallel ETTs.} We introduce the MOJOS framework, which decouples batch-parallel ETTs and LCTs from any specific sequence data structure. Using treaps, our ETT is up to $1.6\times$ faster (and $1.2\times$ faster on average) than the skip-list-based ETT, while using $24\%$ less memory.

\item \textbf{Batch-parallel link-cut trees.} We present the first theoretically efficient batch-parallel link-cut tree, achieving $O(k \log(1+n/k))$ expected work and $O(\log n)$ depth \whp{}--- the first $O(\log n)$-depth batch-dynamic tree for path queries in the binary-forking model. Both our simple and robust variants outperform UFO trees, the prior state-of-the-art for path queries.
\end{itemize}

\section{Preliminaries and Related Work}

\myparagraph{Model}
We use the binary fork-join model for parallelism~\cite{binaryforking,CLRS}.
A process can \defn{fork} two child threads to work in parallel, and when both complete, the parent continues.
The \defn{work} of an algorithm is the total number of instructions and the \defn{depth} (span) is the longest sequence of dependent instructions.
A computation with $W$ work and $D$ depth can be executed in $W/P + O(D)$ time with high probability on $P$ processors using a randomized work-stealing scheduler~\cite{BL98,ABP01}.
We use $O(f(n))$ \emph{with high probability} (\whp{}) to mean $O(cf(n))$ with probability $\geq 1-n^{-c}$ for $c \geq 1$.

We consider an algorithm to be \defn{phase-concurrent} if operations of the same type can be executed concurrently such that, upon the phase's completion, the resulting state is equivalent to some valid sequential execution of those operations. This definition only guarantees correctness at phase boundaries, as opposed to definitions in prior works which require linearizability~\cite{shun2014phase}.

We use standard atomic memory operations. $\oname{AtomicLoad}$ atomically reads the current value of a memory location.
$\oname{AtomicStore}$ unconditionally writes a new value.
$\oname{AtomicExchange}$ writes a new value while simultaneously returning the previous contents.
Atomic compare-and-swap, \revision{denoted as $\oname{CAS}(\vname{address}, \vname{expected}, \vname{new})$}, conditionally updates a memory location to a $\vname{new}$ value only if its current contents match an $\vname{expected}$ value, and returns whether is succeeded.

\myparagraph{Parallel Primitives}
Our algorithms rely on several standard parallel primitives. \defn{Prefix sum} computes the cumulative sums of an array's elements based on an associative operator~\cite{Blelloch89}. \defn{Filter} extracts elements that satisfy a boolean predicate into a new contiguous array, preserving their original relative order~\cite{blelloch1990vector}. \defn{Semisort} groups an array's elements such that those with identical keys become contiguous, though the ordering between distinct keys is arbitrary~\cite{gu2015top}. Finally, \defn{list contraction} reduces a set of linked lists or simple cycles by repeatedly merging adjacent nodes~\cite{blelloch2019optimal}. For an input of size $n$, each of these primitives operates within $O(n)$ work and $O(\log n)$ depth; for semisort and list contraction, the work bounds are in expectation and the depth bounds hold with high probability.

\subsection{Dynamic Sequences and Dynamic Trees}
The \defn{dynamic sequences problem} is to maintain a data structure representing a collection of ordered sequences of elements, subject to joining and splitting sequences, and various query operations (see Section~\ref{sec:mojos} for more details).
Dynamic sequences are efficiently implementable using structures like balanced binary search trees, splay trees~\cite{sleator1985self}, skip lists~\cite{pugh90}, and treaps~\cite{seidel1996randomized}. Furthermore, they serve as building blocks for more complex data structures, including ordered sets, Euler tour trees, and link-cut trees.
Batch-dynamic sequence data structures such as batch-parallel skip lists~\cite{tseng2019batch} allow for batch-parallel joins, splits, and queries.
Augmented dynamic sequences associate a value from a domain $D$ with each element and maintain internal aggregates over some subsequences using an associative and commutative function $f: D \times D \to D$, enabling efficient operations such as range aggregate queries.

The \defn{dynamic trees problem}~\cite{sleator1983data} is to maintain a data structure representing a forest while supporting efficient edge insertions ($\link$), edge deletions ($\cut$), and vertex connectivity queries.
All dynamic tree data structures can support connectivity queries, and many also support more advanced queries.
\emph{Path queries} and \emph{subtree queries} compute a commutative and associative function applied over the weights along a path or within a subtree, respectively.
Link-cut trees~\cite{sleator1983data,sleator1985self} support path queries and perform operations in $O(\log n)$ amortized or worst-case time.
Euler tour trees~\cite{henzinger1995randomized} support subtree queries and perform operations in $O(\log n)$ time \whp{}.

The \defn{batch-dynamic trees problem}~\cite{tseng2019batch} supports batches of edge insertions ($\batchlink$), edge deletions ($\batchcut$), and batch queries.
%
Batch-parallel Euler tour trees~\cite{tseng2019batch} are the current state-of-the-art for batch-dynamic trees with subtree queries.
UFO trees~\cite{deman2026ufo} are the current state-of-the-art for path queries and also support many other types of queries.
Prior work on concurrent link-cut trees~\cite{stoian2022concurrent} achieves some parallelism within each individual operation, but it does not prove any improved theoretical bounds or consider batch-parallel operations.

\section{The MOJOS Framework} \label{sec:mojos}

In this paper we introduce the \textbf{M}ulti-\textbf{O}bject \textbf{J}oin \textbf{O}r \textbf{S}plit (\defn{MOJOS}) framework, and we define the abstract dynamic sequence and batch-dynamic sequence data structures.
The MOJOS framework shows how to use a simple yet powerful \defn{batch-dynamic sequence} data structure to implement more complex parallel data structures, such as ordered sets,  Euler tour trees, and link-cut trees.

The (sequential) \defn{dynamic sequences problem} is to maintain a data structure representing a collection of ordered sequences of elements, providing the following operations:
\begin{itemize}[leftmargin=*,topsep=0pt,itemsep=0pt]
    \item $\myjoin(x,y)$: Concatenates two distinct sequences by joining the last element $x$ of one sequence to the first element $y$ of another.
    \item $\mysplit(x,y)$: Splits a sequence between consecutive elements $x$ and $y$ into two distinct sequences, preserving order.
    \item $\getrep(x)$: Returns a unique identifier for the sequence containing $x$ (e.g., the root node in a tree). \revision{This can be any element in the sequence, as long is the return value is consistent for all elements in the same sequence.}
    \item $\getpred(x) / \getsucc(x)$: Returns the immediate predecessor or successor of $x$ in its sequence.
    \item $\gethead(x) / \gettail(x)$: Returns the first or last element of the sequence containing $x$.
\end{itemize}
While existing literature typically defines $\mysplit$ with a single element (splitting immediately to its right), we adopt a two-element interface. This more general formulation accommodates implementations that require explicit references to both adjacent elements, such as our batch-parallel treap algorithm.

The \defn{batch-dynamic sequences problem} supports batch joins and splits across multiple sequences (hence the name MOJOS). It implements the following additional batch operations:
\begin{itemize}[leftmargin=*,topsep=0pt,itemsep=0pt]
    \item $\batchjoin(U) / \batchsplit(U)$
    \item $\batchgetrep(Q)$
    \item $\batchgetpred(Q) / \batchgetsucc(Q)$
    \item $\batchgethead(Q) / \batchgettail(Q)$
\end{itemize}
These methods accept either a list of update pairs $U$ or a list of query elements $Q$.
We assume that no input lists contain duplicates, the inputs to batch-join don't induce cycles, and the inputs to batch split are only pairs of currently adjacent elements.
Batch updates ($\batchjoin$ and $\batchsplit$) must yield a state equivalent to executing the updates in $U$ sequentially in any valid order.
Batch queries return a list of results corresponding to the elements in $Q$.

\myparagraph{MOJOS Framework}
Sequential dynamic sequences are well studied and are efficiently implementable using structures like balanced binary search trees, splay trees~\cite{sleator1985self}, skip lists~\cite{pugh90}, and treaps~\cite{seidel1996randomized}. 
They serve as versatile building blocks for more complex data structures, including ordered sets, Euler tour trees (ETTs), and link-cut trees (LCTs). 
However, this modularity offered by dynamic sequences is largely unexplored in the parallel setting. 
For instance, existing batch-parallel ETT algorithms are tightly coupled to skip lists, and there is no known batch-parallel algorithm for LCTs.

To bridge this gap, we introduce the MOJOS framework, which provides efficient batch-parallel algorithms for ETTs and LCTs in terms of batch-dynamic sequences. The primary advantage of MOJOS is this decoupling: any underlying sequence data structure supporting this fairly minimal interface can be used to build batch-parallel ETTs and LCTs.
Our framework also accommodates applications requiring augmented values maintenance in the dynamic sequence data structure (although we do not list these additional operations in our core interface for simplicity).

\myparagraph{MOJOS for Ordered Sets}
The standard MOJOS interface maintains sequence order purely through explicit split and join operations, without requiring intrinsic element keys. However, applications like ordered sets naturally dictate sequence order based on keys. To support such applications, the underlying data structure must implement an additional search operation:
\begin{itemize}[leftmargin=15pt,topsep=0pt,itemsep=0pt]
    \item $\find(x) / \batchfind(Q)$: Returns the element with the maximum key less than or equal to $x$ in a sequence. The batch variant receives a pre-sorted list of elements.
\end{itemize}
While not the primary focus of this paper, we detail simple parallel ordered set algorithms using MOJOS in \appref{Appendix~\ref{app:set}} which achieves both optimal work and optimal depth (other algorithms are either more complex or have suboptimal depth). Our experiments in \appref{Appendix~\ref{app:set}} demonstrate that an implementation of this approach achieves reasonable performance compared to highly optimized state-of-the-art implementations.

\section{Batch-Parallel Treaps}

In this section we describe our new parallel algorithms for batch joins and batch splits in treaps.
Our data structure supports the entire batch-dynamic sequence interface and thus fits into the MOJOS framework.
Our algorithms also extend naturally to augmented treaps and other randomized binary search tree data structures such as zip trees, zipzip trees, and
biased zipzip trees. We discuss these extensions at the end of this
section.

\revision{
\myparagraph{Review of Treaps}
A treap~\cite{seidel1996randomized} is a randomized binary search tree data structure representing each element in a dynamic sequence with a treap node.
Each treap node contains the following fields:
\begin{itemize}[leftmargin=*,topsep=0pt,itemsep=0pt]
    \item $\parent$: the parent of this node.
    \item $\child[0]$: the left child of this node. 
    \item $\child[1]$: the right child of this node.
    \item $\priority$: the random priority of this node.
\end{itemize}
Treaps maintain the invariant that for each node its priority is the largest in its subtree, and the in-order traversal of the BST defines the dynamic sequence order.
Thus for two successive nodes in the treap, the one with higher priority is an ancestor of the other.
Sequentially, joining two treaps is done by interleaving the spines of the two elements in increasing priority order.
Splitting a treap essentially separates a single spine into the two sides of the split while maintaining the priority order.
Full descriptions of the sequential algorithms are provided in \appref{Appendix~\ref{app:treap_seq}}.

Our algorithms also use a function \defn{$\parentdirection$}, which returns $0$ or $1$ depending on if this node is the left or right child of its parent respectively.
This can be implemented by using one bit in the parent pointer of the current node.
This is helpful when the parent node's child pointer has been changed during the update, and just checking the children of the parent doesn't work.
}

\subsection{Treap Batch Join and Batch Split}
Our algorithms for batch join and batch split take $O(k \log (1+n/k))$ work and $O(\log n)$ depth both \whp{}. We prove the following theorem in \appref{Appendix~\ref{app:treap_cost}}:

\begin{restatable}{theorem}{batchtreaptheorem}\label{thm:batchjoin}
    The batch join and batch split algorithms for treaps take $O(k \log (1+n/k))$ \revision{expected} work and $O(\log n)$ depth with high probability, where $k$ is the size of the batch and $n$ is the total number of elements in all treaps involved in the batch of joins.
\end{restatable}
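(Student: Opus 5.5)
The plan is to charge every unit of work to a treap node that the batch actually touches or restructures, and then bound the expected number of such nodes. The algorithm runs each join or split bottom-up from its endpoints. A join of $x$ and $y$ only walks the right spine above $x$ and the left spine above $y$, interleaving them in priority order. A split of $(x,y)$ only walks the path from the lower of $x,y$ up to their lowest common ancestor (the higher of the two), separating that spine. So the work of a single operation, apart from contention, is proportional to the lengths of the spines/paths it traverses. Threads interact only by CAS on parent and child pointers. The first step is to fix a synchronization rule: when two operations' spines meet, the thread that loses the CAS stops, and the winner continues and absorbs the remaining interleaving. Under this rule each treap node is processed by $O(1)$ threads, and each processing costs $O(1)$ retries. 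The total work is then $O(|T|)$, where $T$ is the union of all nodes on the touched spines, both before and after the batch.

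The second step bounds $\mathbb{E}|T|$. Since priorities are independent and uniform, the shape of each treap, before and after the batch, is a random treap on its sequence. That holds in particular for the final treaps, because the result of the batch equals some sequential execution and treap shape depends only on order and priorities. In a random treap on $m$ elements, an element at distance $d$ from the endpoint of a spine lies on the spine with probability $1/(d+1)$. The $k$ batch positions cut the total sequence of length $n$ into gaps $n_1,\dots,n_{O(k)}$ with $\sum n_i\le n$. The expected number of spine nodes attributable to gap $i$ is $O(\log(1+n_i))$, since the spine contribution inside a gap is a harmonic sum. Nodes above all batch positions lie on shared upper spines and are counted once in the union. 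By concavity, $\sum_i \log(1+n_i)\le O(k)\log(1+n/k)$, which gives the expected work bound.

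The third step is the depth bound. Each thread walks a single root-ward path, and random treap depth is $O(\log n)$ \whp{}, so every thread performs $O(\log n)$ pointer steps. The remaining depth consists of forking the $k$ threads and the final synchronization, each $O(\log k)$. Filters and semisorts on the batch add another $O(\log k)$ \whp{}. A union bound over all $O(n)$ nodes keeps the depth statement \whp{}.

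The main obstacle is the first step. I need to show that concurrent bottom-up interleavings do not make threads redo each other's work, that they do not wait on one another along long chains (which would break the depth bound), and that CAS failures stay bounded. I would try an invariant of this form: at any moment, each node's priority-ordered position on a spine is settled by at most one active thread, and a failed CAS always signals that a higher-priority thread owns the rest of the path. This invariant makes each failure terminate the losing thread, and it makes the work a disjoint charge to nodes of $T$. Finally, the depth is bounded by the longest spine, not by the length of any chain of dependencies between threads.
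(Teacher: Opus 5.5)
Your second step correctly rederives the known bound on the union of $O(k)$ element-to-root paths. The first step, which you rightly flag as the crux, cannot be closed with the invariant you propose. That invariant is false for the algorithms the theorem is about, and for batch split the conclusion that the work is $O(|T|)$ is false outright.

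For joins, the synchronization is the reverse of your rule. In \fname{ConcurrentJoin}, a successful CAS on a root's parent pointer \emph{ends} the winner's join (its $X$ becomes $\vname{null}$). The loser's join is still incomplete, so it restarts from its current node and keeps interleaving up the merged spine. It cannot stop, and the winner knows nothing about the loser's other spine. The paper's charging is built around this. Take treaps $A,B,C$ with two joins meeting at the root $R$ of $B$. If the left join wins, it has touched only $B$'s left spine and the nodes of $A$'s right spine with priority below $R$. The loser touches $B$'s right spine, $C$'s left spine, and $A$'s right spine above $R$. So each pre-batch spine node is charged $O(1)$, and a CAS fails at most once per root, because the competing CAS has succeeded. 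Also, a split is not confined to the path from the lower endpoint to the LCA: \fname{ConcurrentSplit} walks from the higher-priority endpoint toward the root, rewriting a child pointer at each direction change.

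The missing idea is the overlap analysis for batch split. Marking only deduplicates threads at direction changes. On a monotone segment a thread moves up without writing anything, so many threads can traverse the same nodes. For example, let a treap be a single right spine $v_1,\dots,v_L$ and split every adjacent pair, so $k=L-1$. The thread for $(v_i,v_{i+1})$ clears the right child of $v_i$ and then walks $v_i,v_{i-1},\dots,v_1$ with no further writes. This gives $\Theta(L^2)$ work on $L$ distinct nodes. Likewise, $x$ contenders on one child pointer can cause $\Theta(x^2)$ CAS attempts. So no ``$O(1)$ threads per node'' invariant holds, and the bound must come from the randomness of the shape.

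The paper handles this by cutting at depth $H=\lfloor\log_2 k\rfloor$:
\begin{itemize}
\item Below $H$, every thread's walk is paid in full. This costs $O(\log(1+n/k))$ in expectation, since subtrees rooted at depth $H$ have expected size $O(n/k)$.
\item Above $H$, the redundant cost is $\sum_u W_u S(u)$, where $W_u$ is the number of threads through boundary node $u$ and $S(u)$ is its monotone spine length. Lemma~\ref{lem:batch_split_helper} bounds this sum using two facts: only $O(k/2^\ell)$ boundary nodes have monotone spine length at least $\ell$, and $E[W_u]=O(n/k)$.
\item CAS retries at a pointer with $x$ contenders are charged to the at least $\sum_{i\le x} i$ spine nodes those contenders traversed.
\end{itemize}

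Your depth argument needs the analogous retry bound instead of ``failures terminate''. Each successful CAS strictly lowers the priority stored in the contested pointer, and all contenders lie on one spine. Hence a thread retries $O(\log n)$ times \whp{}.
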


Our algorithms are both based on simple sequential bottom-up treap join and split algorithms which we describe in detail in \appref{Appendix~\ref{app:treap_seq}}. We found that bottom-up algorithms are much more amenable to concurrency and have better practical performance than typical sequential top-down algorithms for treaps (see Section~\ref{sec:seq_exp}).
Designed for minimal synchronization, our batch join algorithm is phase-concurrent: each join runs locally, synchronizing only through atomic CAS operations. 
Our batch split algorithm runs in two phase-concurrent steps: the first safely modifies child pointers to structurally split the treaps, and the second resolves the parent pointers to ensure consistency.

\myparagraph{Phase-Concurrent Joins}
Algorithm~\ref{alg:concurrentjoin} shows the pseudo-code for phase-concurrent joins. $\batchjoin$ simply calls $\fname{ConcurrentJoin}$ in parallel for each pair in the input batch.

\begin{algorithm}[ht]
\small
\caption{$\fname{ConcurrentJoin}(x,y)$}
\label{alg:concurrentjoin}
$X = x$, $Y = y$\;
\While{$X$ \textbf{and} $Y$}{
    \If{$X.\priority < Y.\priority$}{
        \While{$X.\parent$ \textbf{and} $X.\parent.\priority < Y.\priority$ \label{line:symmetric_start}}{
            $X = X.\parent$\;
        }
        $next_X = X.\parent$\;
        $Y.\child[0] = X$\label{line:join_x_start}\;
        \If{$\oname{CAS}(\&X.\parent, next_X, Y)$}{
            $X = next_X$\;
        }\label{line:join_x_end}\label{line:symmetric_end}
    }
    \lElse{
        \textit{(Symmetric to lines~\ref{line:symmetric_start}--\ref{line:symmetric_end})}
    }
}
\end{algorithm}


\begin{figure}[t]
    \centering
    \includegraphics[width=\linewidth]{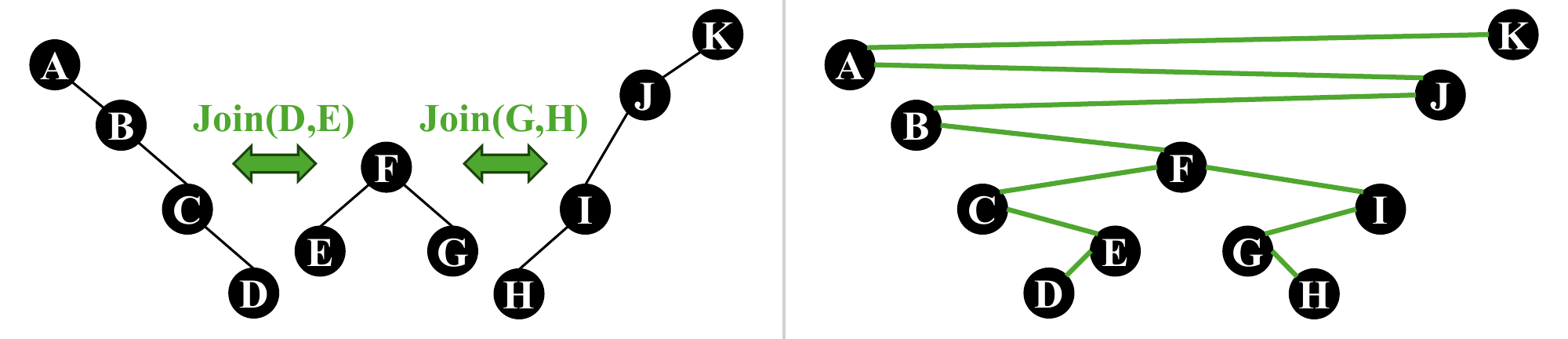}
    \caption{\small An example of a batch join of 3 treaps. The right shows the resulting treap with new pointers in green.}
    \label{fig:batch_join}
\end{figure}

Joining two treaps essentially entails interleaving their right and left spines bottom-up in increasing priority order. In a batch, spines being merged are almost disjoint, conflicting only at root nodes involved in multiple joins. 
The main difference from the sequential algorithm is the CAS operation on the parent pointer (lines~\ref{line:join_x_start}--\ref{line:join_x_end}). There is only contention on this CAS if the node is currently the root of a treap ($next_x = \vname{null}$).
If this is not the case, the CAS always succeeds and the join proceeds as normal.
A successful CAS on a root's parent pointer completes the join, and the while loop will break on the next iteration. If it fails, another join has already modified the parent, so the current join restarts from the current node, finding the new parent as it traverses up the spine.

Consider element $F$ in Figure~\ref{fig:batch_join} as an example. The thread for $\myjoin(D,E)$ would attempt to CAS the parent pointer of $F$ to $B$, while the thread for $\myjoin(G,H)$ would attempt to CAS it to $J$.
If the CAS to $B$ succeeded first, $\myjoin(D,E)$ completes, and $\myjoin(G,H)$ will interleave the remaining elements above $F$.
If the CAS to $J$ succeeded first, $\myjoin(G,H)$ completes, and $\myjoin(D,E)$ will interleave the remaining elements above $F$, including interleaving element $B$ between $F$ and $J$ on the combined spine.

Crucially, in our algorithm the child pointer is updated before the corresponding parent pointer CAS (line~\ref{line:join_x_start}).
If updated after, a thread could halt after the CAS, allowing a second join to modify the parent again. Once the first thread resumes, it would overwrite the child pointer erroneously. Setting the child beforehand ensures structural integrity because even if the CAS fails, the join restarts and eventually sets the correct child value. Contention on the valid child pointer is impossible, as it would imply two separate joins are merging into the same spine.

\myparagraph{Two-Phase Batch Splits}
Algorithm~\ref{alg:concbatchsplit} shows the pseudo-code for $\batchsplit$, which runs in two phases. The first phase runs $\fname{ConcurrentSplit}$ (Algorithm~\ref{alg:concurrentsplit}) to change necessary child pointers. The second phase (Algorithm~\ref{alg:splitphase2}) corrects the parent pointers.


\begin{algorithm}[ht]
\small
\caption{$\fname{ConcurrentSplit}(x, y)$}
\label{alg:concurrentsplit}
$X = x$, $Y = y$\;

\If{$X.\priority > Y.\priority$}{
    \lWhile{$Y.\parent \neq X$}{ $Y = Y.\parent$ \label{line:split_init_start}}
    $c = \oname{AtomicExchange}(\&X.child[1], \vname{null})$ \label{line:unset_child}\;
    \lIf{pointer $c$ is marked}{ \Return \label{line:split_init_end}}
}
\lElse{
    \textit{(Symmetric to lines~\ref{line:split_init_start}--\ref{line:split_init_end})}
}

\While{$X$ \textbf{and} $Y$}{
    \If{$X.\priority > Y.\priority$}{
        $P = X.\parent$ \label{line:split_main_start}\;
        \If{$P$ \textbf{and} $X.\parentdirection == 0$ \label{line:split_dir_change} }{
            $curr = \oname{AtomicLoad}(\&P.child[0])$ \label{line:child_load}\;
            \If{$Y.\priority < curr.\priority$ \label{line:prio_check}}{
                Let $Y^*$ be a marked pointer to $Y$.\;
                \If{$\oname{CAS}(\&P.child[0], curr, Y^*)$ \label{line:child_CAS}}{
                    \lIf{pointer $curr$ is marked}{ \Return \label{line:split_early_terminate} }
                    \lElse{ $Y = P$ }
                }
            } \lElse { \Return \label{line:prio_check_fail} }
        }
        \lElse { $X = P$ \label{line:split_main_end} }
    }
    \lElse{
        \textit{(Symmetric to lines~\ref{line:split_main_start}--\ref{line:split_main_end}})
    }
}
\end{algorithm}

\begin{algorithm}[ht]
\small
\caption{$\fname{CleanUpParentPointer}(P, C, dir)$}
\label{alg:splitphase2}
    $curr^* = \oname{AtomicLoad}(\&P.child[dir])$ \;
    Let $curr$ be an unmarked copy of $curr^*$. \;
    \If{$curr == C$}{
        $C.\parent = P$\;
        $\oname{AtomicStore}(\&P.child[dir], C)$
    }
    \lElse{
        $C.\parent = \vname{null}$
    }
\end{algorithm}

\begin{algorithm}[ht]
\small
\caption{$\fname{BatchSplit}(splits)$}
\label{alg:concbatchsplit}
\parfor{$(x,y) \in splits$} {
    $\fname{ConcurrentSplit}(x,y)$\;
}
\parfor{attempted child pointer changes} {
    $\fname{CleanUpParentPointer}(parent, child, dir)$\;
}
\end{algorithm}

\myparagraph{Concurrent Split Phase}
\revision{This first phase only updates child pointers and leaves all parent pointers completely unmodified.}
Since parent pointers are unmodified in this phase, concurrent upward traversals can safely read the original tree structure.
\revision{As described previously, we use $\parentdirection$ to determine whether a node is the left or right child of its parent since we cannot safely check the parent's child pointers.}

Each split traverses bottom-up from the higher priority element, distributing nodes to the left or right side of the split based on path direction changes.
Consider the split $(G,H)$ in Figure~\ref{fig:batch_split}. \revision{Since $G$ has higher priority than $H$,} the algorithm first atomically clears $G$'s right child pointer (line~\ref{line:unset_child}) and then splits bottom-up from $G$.
Additional child pointer changes occur only when the upward path changes direction (line~\ref{line:split_dir_change}). Moving from $G$ to $K$, the direction changes from left to right, so the most recent right-side node, $J$, becomes the left child of $K$. Subsequently, when changing from right to left, $G$ becomes the right child of $F$. Our algorithm handles this by atomically setting the child pointer of the parent only if its priority is lower than the current child pointer (lines~\ref{line:prio_check}--\ref{line:child_CAS}).

When multiple splits occur concurrently, such as $(B,C)$, $(C,D)$, and $(D,E)$ in Figure~\ref{fig:batch_split}, they may compete to update the same child pointer. Uncoordinated, all three would attempt to set $C$, $D$, and $E$ as the left child of $F$, since $F$ is their next direction-change node. The correct left child is $E$ (the rightmost node). Generally, the correct child will always be the one of minimum priority among competitors. Our algorithm enforces this by atomically loading the current child (line~\ref{line:child_load}) and only attempting a CAS if the new node has a lower priority (lines~\ref{line:prio_check}--\ref{line:child_CAS}).
\revision{If the minimum priority node fails the CAS due to a contending operation, it will simply retry the main while loop, and succeed on a later iteration.}
If the current node's priority is higher than the current child, another split has already won, and the thread terminates (line~\ref{line:prio_check_fail}).
To ensure work-efficiency, successfully changed pointers are marked \revision{(by setting a bit in the pointer)}. Any thread that successfully overwrites an already marked pointer terminates (line~\ref{line:split_early_terminate}), preventing redundant work along overlapping split paths.
Despite some overlap and some CAS contention, we prove in \appref{Appendix~\ref{app:treap_cost}} that the total work in this phase is $O(k \log (1+n/k))$ \whp{}.

\begin{figure}[t]
    \centering
    \includegraphics[width=\linewidth]{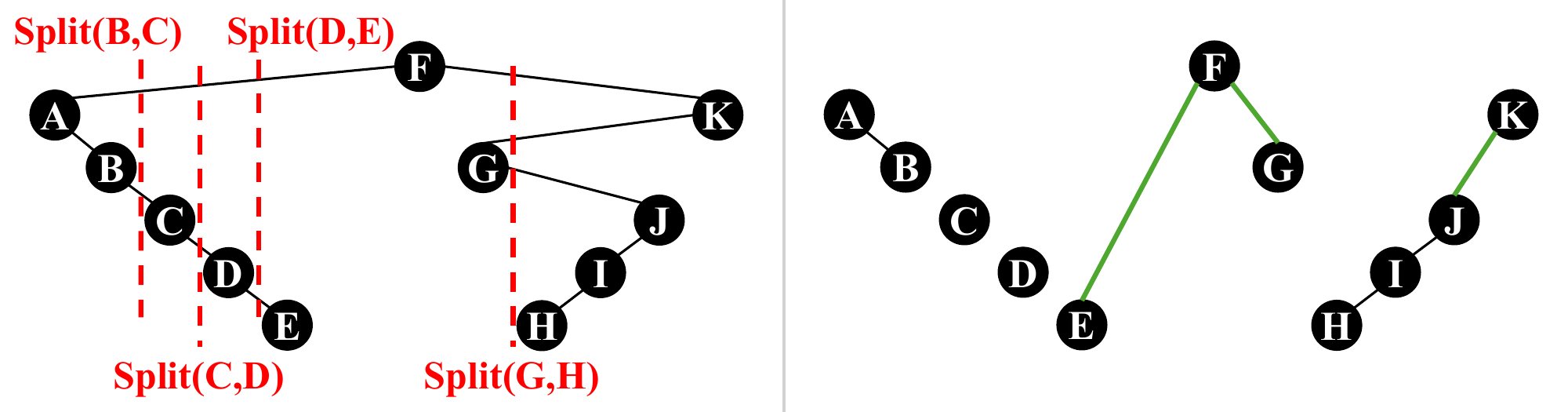}
    \caption{\small
    An example of a batch split in a treap. The right shows the resulting treaps with new pointers in green.}
    \label{fig:batch_split}
\end{figure}

\myparagraph{Cleanup Phase}
The second phase (Algorithm~\ref{alg:splitphase2}) resolves the corresponding parent pointers for changed child pointers from the first phase.
\revision{During the first phase, our algorithm tracks each attempt to set an element $C$ as the direction $dir$ child of an element $P$.
For each such attempt, the second phase} checks if $P$'s child pointer actually points to $C$.
If so, it sets the parent pointer of $C$ to be $P$\revision{, and replaces $P$'s child pointer with an unmarked copy.}
Otherwise, it clears the parent pointer, correctly identifying the node as a new treap root.
Nodes that were the top of their spine when the split loop broke\revision{, such as $A$ and $K$ in Figure~\ref{fig:batch_split},} also clear their parent pointers in this phase \revision{(not shown in pseudo-code)}.

\subsection{Extensions of Treap Algorithms}

We show in \appref{Appendix~\ref{app:treap_query}} how to implement all MOJOS batch query operations in $O(k \log (1+n/k))$ \revision{expected} work and $O(\log n)$ depth \whp{}.
In \appref{Appendix~\ref{app:augmented_treap}} we also describe how to adapt common techniques for maintaining augmented values~\cite{tseng2019batch} to our batch-parallel treap batch algorithms.

\myparagraph{Zip, Zipzip, and Biased Zipzip Trees}
Zip trees~\cite{tarjan2021zip}, zipzip trees~\cite{gila2023zipzip}, and biased zipzip trees~\cite{gila2023zipzip} share a similar structure to treaps but use different random distributions for element priorities. Our batch-parallel algorithms directly extend to these data structures with minimal modification.
The extension to biased zipzip trees is particularly useful for efficiently implementing link-cut trees. Biased zipzip trees maintain a weight $w(x)$ for each element $x$, and the depth of $x$ is $O(\log (W / w(x)))$ \whp{}, where $W$ is the total weight. In our use cases $W \leq n$, meaning our algorithms retain $O(k \log (1+n/k))$ \revision{expected} work and $O(\log n)$ \whp{} depth bounds. In \appref{Appendix~\ref{app:zipzip}}, we describe these algorithmic extensions as well as a $\batchupdate(U)$ function to update weights for a batch of elements within these same bounds.

\section{Batch-Parallel Euler Tour Trees}

Sequential Euler tour trees (ETTs) are well known to be implementable using various dynamic sequence data structures, including skip lists~\cite{pugh90}, treaps~\cite{seidel1996randomized}, splay trees~\cite{sleator1985self}, and balanced binary search trees.
However, existing batch-parallel ETT algorithms~\cite{tseng2019batch} specifically rely on skip lists and their unique ability to represent cyclic sequences.
Recent works~\cite{tseng2019batch,deman2026ufo} demonstrate that sequential ETTs built on alternative structures (particularly splay trees and treaps) often yield faster updates, faster queries, and lower memory usage. Furthermore, our empirical evaluation shows that batch-parallel treaps consistently outperform their skip list counterparts.

Motivated by these advantages, we aim to design a generalized batch-parallel ETT algorithm that does not depend on any specific underlying data structure.
In this section, we achieve this by adapting existing algorithms to the MOJOS framework, enabling the use of any batch-dynamic sequence data structure.
Finally, we introduce treap-specific optimizations that further enhance performance.

\revision{
\myparagraph{Review of ETTs}
An ETT~\cite{henzinger1995randomized} represents the Euler tours of a forest using a dynamic sequence data structure. Although various Euler tour representations exist, we use edge-based Euler tours which have two elements per undirected edge (one for each direction), as well as a single ``self-loop'' element per vertex.
Since each Euler tour forms a cycle, the dynamic sequence data structure represents the cycle as a sequence by picking some arbitrary starting point for each tour (if using skip lists the cycles can be represented directly).
The ETT maintains an array of the dynamic sequence elements corresponding to each vertex's self-loop element, indexed by vertex id.
It also maintains a hash table mapping edges to their two corresponding dynamic sequence elements.

A sequential $\link$ operation merges two tours by splicing their sequences at the self-loop elements of the two newly connected vertices.
A sequential $\cut$ operation splits a tour into two by removing the two directed elements of the deleted edge and re-stitching the remaining sequences. Each operations takes a constant number of $\myjoin$ and $\mysplit$ operations in the dynamic sequence data structure.
}

\subsection{Parallel ETT in the MOJOS Framework}

Existing batch-parallel ETT algorithms natively rely on cyclic skip lists with direct predecessor/successor pointers and fully phase-concurrent joins and splits~\cite{tseng2019batch}. 
Additionally, their split operation requires only a single element (splitting immediately to its right). 
In contrast, the MOJOS framework generalizes to non-cyclic sequences and requires both left and right elements as the input to splits. 
We adapt the batch link algorithm for ETTs to this more general interface below.
We defer the analogous batch cut algorithm to \appref{Appendix~\ref{app:ett_cut}}, as the techniques are similar.
Our algorithms also support augmented ETTs with subtree queries (assuming that the batch-dynamic sequence supports augmented values) which we discuss further in \appref{Appendix~\ref{app:augmented_ett}}.
Finally, we prove the following theorem about the costs of batch updates in \appref{Appendix~\ref{app:ett_analysis}}.

\begin{restatable}{theorem}{ettanalysis} \label{thm:ett_analysis}
    Given a batch-dynamic sequence data structure that can perform batches of links, cuts, and queries in $O(k \log(1+n/k))$ \revision{expected} work and $O(\log n)$ depth \whp{} where $k$ is the batch size, the \text{MOJOS} Euler tour tree batch link and batch cut algorithms take $O(k \log(1+n/k))$ expected work and $O(\log n)$ depth \whp{}.
\end{restatable}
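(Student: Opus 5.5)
We prove Theorem~\ref{thm:ett_analysis} by charging every step of the MOJOS batch link and batch cut algorithms either to a constant number of batch-dynamic sequence operations on $O(k)$ inputs, or to standard parallel primitives on $O(k)$ elements. The Euler tour forest on $n$ vertices has $O(n)$ sequence elements: one self-loop per vertex and two directed elements per edge. Each batch sequence call on a batch of size $O(k)$ therefore costs $O(k\log(1+n/k))$ expected work and $O(\log n)$ depth \whp{} by the hypothesis. The $O(k)$-size bookkeeping costs $O(k)\subseteq O(k\log(1+n/k))$ expected work and $O(\log n)$ depth \whp{}, by the bounds on prefix sum, filter, semisort and list contraction stated in the preliminaries.

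For batch link on $k$ edges, the algorithm proceeds in a fixed sequence of phases, which we bound in order.
\begin{enumerate}[leftmargin=*,topsep=0pt,itemsep=0pt]
\item Allocate the $2k$ new edge elements and insert them into the edge hash table in parallel. This costs $O(k)$ expected work and $O(\log n)$ depth \whp{}.
\item For each endpoint $v$, locate its self-loop element and its current successor with \batchgetsucc. Each tour is acyclic in the sequence representation, so we also obtain tour heads and tails with \batchgethead/\batchgettail and \batchgetrep. This is $O(1)$ batch queries on $O(k)$ elements.
\item Using semisort, group the new edges incident to the same vertex. Within each group we decide the order in which the new edge elements are spliced after the self-loop. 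This is an $O(k)$ local computation.
\item Issue one \batchsplit that cuts every touched tour at each involved self-loop. This has size $O(k)$, since each new edge contributes $O(1)$ split points.
\item Issue one \batchjoin over $O(k)$ pairs that stitches the pieces together with the new edge elements into the merged tours.
\end{enumerate}

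The main obstacle is the one the paper flags: the sequences are linear, not cyclic. A naive join of tail to head can close a cycle, which the \batchjoin interface forbids. We handle this with the cycle-detection step. The join pairs form a functional graph on the $O(k)$ sequence fragments, with at most one successor and one predecessor each. A merged tour corresponds exactly to a cycle in this graph, so we run parallel list contraction on these $O(k)$ pieces to find each cycle. We then drop one join pair per cycle, which leaves an arbitrary starting point for the resulting tour. List contraction costs $O(k)$ expected work and $O(\log k)$ depth \whp{}, so this step fits the budget.

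The central thing to check is that the number of fragments and pairs remains $O(k)$ rather than depending on tour sizes. This holds because each split point is associated with a batch edge, and each fragment is delimited by split points or by the boundaries of a touched sequence. There are at most $O(k)$ touched sequences, since each has a batch endpoint.

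Batch cut (Appendix~\ref{app:ett_cut}) is analyzed the same way. We look up the $2k$ edge elements in the hash table and find their neighbors with \batchgetpred/\batchgetsucc. We then issue one \batchsplit of size $O(k)$ that isolates them, and reconnect the $O(k)$ remaining fragments with one \batchjoin. To determine which fragments become adjacent in the new tours, we again use list contraction over the $O(k)$ fragments; this also matters when nested cut edges make several fragments chain together, and it discards joins that would create cycles. Finally we delete the hash entries.

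Summing a constant number of phases, each within $O(k\log(1+n/k))$ expected work and $O(\log n)$ depth \whp{}, gives the theorem. A union bound over the constantly many \whp{} depth events keeps the overall depth bound \whp{}, and linearity of expectation gives the work bound.
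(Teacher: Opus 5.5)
Your proposal is correct and takes essentially the same route as the paper. Both proofs do a phase-by-phase accounting in which every step of batch link and batch cut is either a constant number of batch-dynamic sequence operations on $O(k)$ inputs (successor, head, tail and representative queries, one \batchsplit, one \batchjoin) or an $O(k)$-size primitive (hash table insertion, semisort, list contraction for cycle removal), and then sum over the constantly many phases. Your batch-cut sketch uses predecessor/successor lookups and list contraction instead of the paper's \batchgetrep-plus-semisort grouping with tail--head joins, but the same accounting covers either version. Your explicit remarks that the ETT has $O(n)$ elements and only $O(k)$ fragments are small points the paper leaves implicit.
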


\revision{
\myparagraph{Existing Algorithm Summary}
We briefly summarize the existing ETT batch link algorithm~\cite{tseng2019batch}.
The algorithm maintains each Euler tour cycle directly with a skip list.
The batch link procedure first splits the skip list after the self-loop element corresponding to each vertex with an incident link.
This fragments the existing Euler tours and opens up insertion points to join the new edges in and splice the Euler tours together.

Next, the algorithm groups new directed edges by their source endpoint and arbitrarily orders each group to establish a strict cyclic ordering of the new edges locally around each vertex.
Guided by these local orderings the algorithm can independently determine the skip list joins needed in the neighborhood of each vertex.
That is, each vertex self-loop joins to the first outward edge incident to it, each inward edge joins to the next outward edge, and the last outward edge joins to the previous successor of the vertex self-loop.
The authors proved that this set of joins correctly joins the Euler tour fragments into cycles.
}

\begin{algorithm}[ht]
\small
\caption{$\fname{BatchLink}(\vname{links})$}
\label{alg:ett_batch_link}
    $\vname{dir\_edges} =$ each edge from $\vname{links}$ directed both ways. \;
    \parfor{$e \in \vname{dir\_edges}$ \label{line:init_start}}{
        Instantiate an element for $e$. \label{line:init_end}
    }
    $\vname{splits}[0 \dots \vname{links}.\vname{size}()-1] = \vname{null}$ \;
    $\vname{joins}[0 \dots 3\text{$\cdot$}\vname{links}.\vname{size}()-1] = \vname{null}$
    
    \parfor{$(u,v) \in \vname{links}$ \label{line:collect_splits_start}}{
        
        \For{$w \in (u,v)=\vname{links}[i]$}{
            \If{$\getsucc(w) \neq \vname{null}$}{
                $\vname{splits}[i] = (w, \getsucc(w))$ \label{line:add_split}\;
                \If{$\gettail(w)$ is not a vertex in $\vname{links}$}{
                    $\vname{joins}[3i] = (\gettail(w), \gethead(w))$ \label{line:tail_head_join}
                }
            }
        }
    }
    $\vname{sorted\_edges} = \fname{Semisort}(\vname{dir\_edges})$ \label{line:semisort}
    \parfor{$e = (u,v) \in \vname{sorted\_edges}$ \label{line:join_loop_start}}{
        \lIf{$e$ is the first edge out of $u$}{$\vname{joins}[3i+1] = (u,e)$}
        $e' = (v,u)$, $u^+ = \getsucc(u)$ \;
        $e^+ = \vname{sorted\_edges}[i+1]$ \;
        \lIf{$e^+.\vname{src} == u$}{$\vname{joins}[3i+2] = (e',e^+)$}
        \lElse{$\vname{joins}[3i+2] = (e', u^+)$}
    \label{line:join_loop_end}}
    $\batchsplit(\fname{Clean}(\vname{splits}))$ \label{line:batch_split}\;
    $\batchjoin(\fname{RemoveCycles}(\fname{Clean}(\vname{joins})))$ \label{line:batch_join}
\end{algorithm}

\myparagraph{Generalized Batch Link}
\revision{Algorithm~\ref{alg:ett_batch_link} details our ETT batch link procedure, which generalizes the existing algorithm~\cite{tseng2019batch}.
The batch-parallel ETT uses a batch-dynamic sequence to represent the tours and a batch-parallel hash table~\cite{gil1991towards} to map edges to their sequence elements.}
For each new edge, we first instantiate twin directed edge elements (lines~\ref{line:init_start}--\ref{line:init_end}) \revision{and batch-insert these entries into the hash table}.
Next, we identify split points incident to endpoint vertices in the batch of links. Rather than using phase-concurrent operations as in the prior skip list algorithm, MOJOS concurrently collects the successor ($\getsucc$) for each vertex and adds each split to a list of splits (line~\ref{line:add_split}). 
Since MOJOS does not store sequences cyclically, we must add a join between the tail and head of any affected sequence whose tail is not actively being split (line~\ref{line:tail_head_join}) to preserve the ETT structure.
To add the new edges into the existing tours, we group the directed edges by their source vertex using a semisort (line~\ref{line:semisort}). As in the original algorithm, this grouping defines the set of joins to reconnect the ETT fragments (lines~\ref{line:join_loop_start}--\ref{line:join_loop_end}).

Finally, we execute the $\batchsplit$ and $\batchjoin$ operations (lines~\ref{line:batch_split}--\ref{line:batch_join}).
\revision{$\fname{Clean}$ simply removes duplicates and null values in each list.}
Crucially, because the subsequent batch of joins may form cycles (which MOJOS strictly prohibits) we must omit one join per cycle. This is handled by $\fname{RemoveCycles}$. This function maps each join element to its representative ($\getrep$) and applies a parallel list contraction to find exactly one join per cycle. It then filters out the cycle-inducing joins before calling $\batchjoin$.

\subsection{Treap ETT Optimizations}

\myparagraph{Cyclic Treap}
Applying the MOJOS framework directly incurs overhead from the list contraction and $\getrep$ calls required to prevent cycles. To avoid this, our modified treap batch join algorithm natively tolerates cycle-inducing batches (detailed in \appref{Appendix~\ref{app:cyclic_treap}}). If a batch of joins induces a cycle, the algorithm leaves the treap in a state identical to omitting exactly one join per new component. By bypassing MOJOS's cycle prevention steps, we notably boost batch-parallel ETT performance.

\myparagraph{High Priority Vertex Elements}
An ETT contains up to $3n-2$ elements ($n$ vertices and up to $2n-2$ edges). Typically, augmenting an ETT requires storing internal aggregates in every element, inflating memory usage. We optimize this specifically for treaps by assigning strictly higher priorities to vertex elements than edge elements in the treap (e.g., setting the most significant priority bit to $1$ for vertices). Because treaps maintain a heap property based on these priorities, this invariant forces all vertex elements to cluster contiguously near the root, ensuring no edge element contains a vertex in its subtree. Consequently, we can only store aggregates in the $n$ vertex elements, substantially reducing memory usage without sacrificing update speed.
We show experimental results for the memory savings in \appref{Appendix~\ref{app:high_prio_vertex_ett}}.
\section{Batch-Parallel Link-Cut Trees}

First we review sequential link-cut trees and describe how they fit into the MOJOS framework.
Then we describe simple algorithms for batch-parallel links and batch-parallel cuts which are only efficient for inputs with low diameter.
Finally, we give more complex batch-update algorithms that, when implemented with our batch-parallel biased zipzip trees, are work-efficient and polylogarithmic depth.

\subsection{Link-Cut Trees in the MOJOS Framework}

Traditional link-cut trees~\cite{sleator1983data, sleator1985self} represent the input as a forest of rooted trees partitioned into disjoint \defn{preferred paths}, where each path is stored in an \defn{auxiliary tree} (typically a splay tree). By generalizing the auxiliary tree to be any dynamic sequence data structure, the link-cut tree naturally fits into our MOJOS framework. 

Each vertex $v$ in the \defn{input tree} corresponds to an element $v'$ in the auxiliary tree. The left-to-right order of elements in an auxiliary tree corresponds to top-to-bottom order of vertices in the preferred path it represents. 
The \defn{head} (lowest depth) and \defn{tail} (highest depth) of a path correspond to the leftmost and rightmost auxiliary tree elements. Edges not on a preferred path are explicitly stored via a non-preferred parent pointer ($\vname{nppar}$) linking the head of an auxiliary tree to its parent in the input tree.

\myparagraph{Helper Functions and Sequential Algorithms}
We can express the internal link-cut tree operations and sequential updates strictly through the abstract MOJOS sequence interface:
\begin{itemize}[leftmargin=*,topsep=0pt,itemsep=0pt]
    \item $\splice(u,v)$: Sets $v$ as $u$'s preferred child.
    \textit{If $u$ has an existing preferred child $w$, we call $\mysplit(u',w')$ and set $w.\vname{nppar} = u$. If $v$ is not $\vname{null}$, we then attach the new preferred child with $\myjoin(u',v')$.}
    \item $\expose(v)$: Consolidates the path from $v$ to the root into a single preferred path.
    \textit{It unsets $v$'s preferred child, then iteratively finds the head $h$ of the current path via $\gethead(v')$ and splices it to its non-preferred parent: $\splice(h.\vname{nppar}, h)$.}
    \item $\evert(v)$: Makes $v$ the root of its input tree.
    \textit{It calls $\expose(v)$ to make $v$ the tail of its preferred path, and then calls $\reverse(v')$ to logically make it the new head.}
    \item $\reverse(x)$: Given $x$, the representative element of a sequence, it reverses the order of that sequence's elements.
\end{itemize}
Note that $\reverse$ is an operation on the auxiliary trees not included in the standard MOJOS interface. Traditional link-cut trees use a lazy reversal bit at each splay tree node to handle this. Prior to any operation involving some element, the reversal bits are propagated top-down for each ancestor of that element. At each ancestor a swap of the left and right child pointers occurs and the reversal bits of its children are flipped. In \appref{Appendix~\ref{app:lct_reverse}}, we detail how to extend this technique to treaps and other randomized binary search trees, and how to safely integrate it with batch-parallel updates.
Sequential link-cut tree updates can be expressed as follows using these operations. $\link(u,v)$ calls $\expose(u)$, $\evert(v)$, and $\myjoin(u', v')$. $\cut(u,v)$ ensures $u$ is the parent, calls $\expose(v)$, and separates them with $\mysplit(u', v')$.

\subsection{Simple Batch-Parallel LCT Updates}

We first introduce lightweight batch link and batch cut algorithms. These establish the core mechanics for our theoretically efficient algorithms and, as our experiments demonstrate, are highly practical on inputs with low diameter or other non-adversarial structures.
Theorem~\ref{thm:simple_lct_analysis} (proved in \appref{Appendix~\ref{app:lct_cost}}) proves the theoretical efficiency of these algorithms on low diameter inputs, \revision{when implemented with a suitable choice of batch-dynamic sequence.}

\begin{restatable}{theorem}{lctbatchupdatesimple}\label{thm:simple_lct_analysis}
    The simple batch link and batch cut algorithms take $O(k \cdot D)$ work and $O(\log k + D)$ depth where $k$ is the batch size and $D$ is the diameter of the input tree. Queries take $O(D)$ time.
\end{restatable}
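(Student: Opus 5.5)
We have not seen the simple algorithms themselves, so we assume their natural shape. A simple batch link would expose the relevant endpoints in parallel, with the $\expose$ walks synchronized through the batch-dynamic sequence operations. It would then evert the children in the batch and issue a $\batchjoin$. A batch cut would expose the children and issue a $\batchsplit$. The "suitable choice of batch-dynamic sequence" we take to be one where every element lies at depth $O(D)$ in its auxiliary tree. Trivially this holds for any structure whose path sequences have height bounded by path length, which is at most $D+1$. With that choice, each $\getrep$, $\gethead$, $\gettail$, $\getsucc$, $\getpred$, $\myjoin$, $\mysplit$, and lazy-reversal push-down touching an element costs $O(D)$ sequential time.

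The proof would start from one structural fact. For any vertex $v$, the root-to-$v$ path in the input tree has at most $D$ edges. Hence $\expose(v)$ crosses at most $D$ non-preferred edges, so it performs at most $O(D)$ splices.

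To bound work and depth in terms of $D$, we would not charge $O(D)$ per splice, which would give $O(D^2)$. Instead we would argue that the walk from $v'$ upward through auxiliary-tree ancestors, and then across $\vname{nppar}$ pointers, visits nodes whose total count is bounded by the number of input-tree vertices on the root path plus $O(1)$ per path switch, i.e.\ $O(D)$. This is easiest if the sequence is a path-like structure in which the height of an auxiliary tree is at most the length of its path. The total across switched paths then telescopes to the length of the root path.

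For the batch version, we would process the $k$ operations in rounds by level. Each exposed path is handled by one thread, and conflicts at shared ancestors are resolved by the phase-concurrent join/split semantics, so a shared ancestor is merged once rather than once per thread. This gives $O(k\cdot D)$ total work, since each of the $k$ threads does $O(D)$ work. The depth is $O(D)$ for the walks, plus $O(\log k)$ for forking the $k$ threads and for the primitive calls: the semisort/filter used to group the batch and to $\fname{Clean}$ the update lists, each costing $O(k)$ work and $O(\log k)$ depth. Queries are single sequential $\expose$ operations followed by reading an aggregate, so they cost $O(D)$.

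The main obstacle is the contention argument. Several exposes in the batch may walk the same upper portion of the tree concurrently. We need each conflicting splice to be either idempotent or detected, so that the final preferred-path decomposition is valid and no thread performs more than $O(D)$ steps despite retries. We would handle this with the same idea as the treap CAS argument: a thread that finds its target already spliced by another thread terminates. This bounds the per-thread cost by its root-path length, and the depth bound follows from the fact that no thread ever waits on more than $O(D)$ sequential steps.
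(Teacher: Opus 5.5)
There is a genuine gap, and it is exactly the step you flag as the ``main obstacle.'' Your reconstructed batch link concurrently exposes endpoints, everts ``the children,'' and then issues a $\batchjoin$. Your batch cut exposes children before a $\batchsplit$. Both let several operations mutate the same component at once, and your fix (a thread that finds its target already spliced terminates, as in the treap CAS argument) does not give a correct result:
\begin{itemize}
\item Two exposes in one component want different preferred children at a shared ancestor, so one must be redone, not abandoned.
\item Two everts in one component are contradictory outright, since a tree has only one root.
\item If a thread terminates early, its vertex never reaches the top preferred path. The subsequent reversal then does not make it the root, and the final state matches no sequential execution.
\item Your fallback of processing ``in rounds by level'' pays $O(\log k)$ fork and primitive depth per round. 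That gives roughly $O(D\log k)$ depth, not $O(D+\log k)$.
\end{itemize}

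The idea you are missing is that the paper's simple algorithms are built so that there is no contention at all.
\begin{itemize}
\item \textbf{Batch link.} A batch root query first identifies each endpoint's component; its work is the total length of $k$ root paths. Components are then treated as vertices and the links as edges, and this link forest is rooted with the Euler tour technique in $O(k)$ work and $O(\log k)$ depth. Each non-root component now has exactly one outgoing link $(u_i,v_i)$. Only $u_i$ is everted, and $u_i.\vname{nppar}=v_i$ is set. So each component undergoes at most one $\evert$, no $\batchjoin$ is needed, and all everts run independently.
\item \textbf{Batch cut.} There is no expose. Non-preferred edges are detected via $u.\vname{nppar}=v$ (or vice versa) and cleared. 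Preferred edges are removed by a single $\batchsplit$.
\item \textbf{Sequence structure and cost.} The analysis uses naive linked lists as the sequence structure. Your telescoping observation is the right one, but the condition you state is not quite enough. ``Height at most path length'' does not suffice; you need the cost of reaching the head from $v'$ to be proportional to $v$'s position in its preferred path, which linked lists give. Then every root path visits only input-tree ancestors and has length $O(D)$. Each root query and each $\evert$ costs $O(D)$, yielding $O(kD)$ work and $O(D+\log k)$ depth.
\item \textbf{Queries.} The paper's queries are read-only root-path traversals rather than $\expose$ calls, because mutating queries cannot run concurrently. They cost $O(D)$ by the same bound.
\end{itemize}
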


\myparagraph{Parallel Queries}
Traditional link-cut tree queries rely on $\expose$, which mutates the tree structure to guarantee amortized bounds. This mutation makes concurrent queries difficult. Instead, we use simple, read-only query algorithms. 
For connectivity queries, we find the root of a vertex's input tree by repeatedly calling $\gethead$ on its current auxiliary tree and following the non-preferred parent pointer upward. For path queries, we augment the auxiliary trees to support efficient prefix aggregate queries. We then apply this same bottom-up traversal to accumulate the aggregate over an entire input path. We detail these algorithms along with a more work-efficient batch connectivity query in \appref{Appendix~\ref{app:lct_query}}.

\myparagraph{Batch Link}
The input to $\batchlink$ (see Figure~\ref{fig:batch_link}) is a set of edges whose insertion does not form a cycle. We can view the current connected components as vertices and the batch of new links as edges, forming a forest of \defn{link trees}. We direct this forest into a rooted forest in parallel using the classic Euler tour technique~\cite{tarjan1985efficient}. 

The main challenge for batch links is that we cannot concurrently evert multiple vertices within the same component, as a tree can only have one root. Our directed forest naturally solves this: each non-root component $C_i$ now has exactly one outgoing parent edge $(u_i, v_i)$, and we can simply only evert the source vertex for each link, calling $\evert(u_i)$ and setting $u_i$'s non-preferred parent to $v_i$. Because each component has at most one outgoing link, it undergoes at most one evert operation. This structural guarantee allows all everts and subsequent parent pointer updates to execute independently and safely in parallel.

\myparagraph{Batch Cut}
The input to $\batchcut$ is a set of existing edges. We process them in parallel based on their edge type. For non-preferred edges (identified by checking if $u.nppar = v$ or vice versa) we simply clear the non-preferred parent pointer. For preferred edges, we cut them all using a single $\batchsplit$ operation on the auxiliary trees.

\begin{figure*}[ht]
    \centering
    \includegraphics[width=0.9\linewidth]{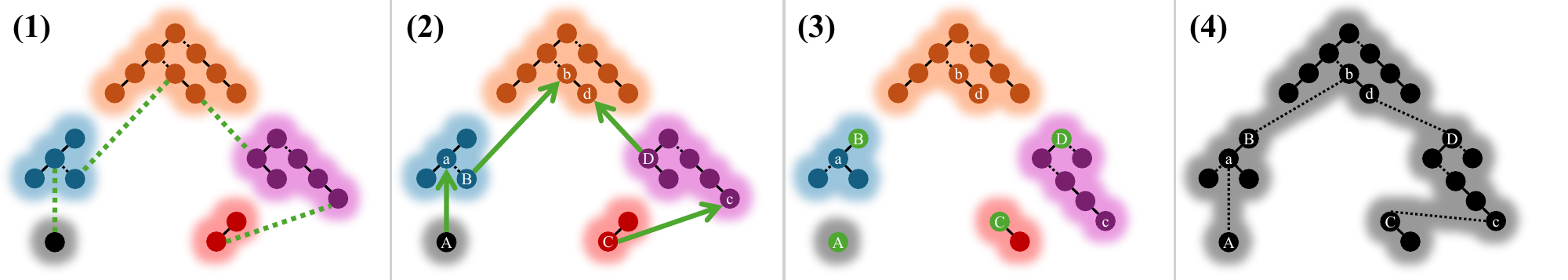}
    \caption{\small An illustration of the batch link algorithm for link-cut trees. (1) shows the existing link-cut tree components in various colors, with green edges representing new links. (2) shows the directed links after the algorithm roots the link tree. In (3), the source vertex of each directed link (upper case letters) is made the root of its component with $\evert$. In (4) the non-preferred parent for each source vertex is set to the destination vertex (lower case letters), linking all the components. Then the algorithm restores Invariant~\ref{inv:heavy} (not depicted).}
    \label{fig:batch_link}
\end{figure*}

\subsection{Efficient Batch-Parallel LCT Updates}
We now present batch-parallel update algorithms that achieve optimal efficiency regardless of input topology. When implemented with batch-parallel biased zipzip trees, these algorithms satisfy Theorem~\ref{thm:lct_analysis} (and also maintain highly efficient bounds parameterized by the input diameter $D$, as proved in \appref{Appendix~\ref{app:lct_cost}}).

\begin{restatable}{theorem}{lctbatchupdate}\label{thm:lct_analysis}
    In a link-cut tree with the heavy preferred child invariant implemented with biased zipzip trees, the batch link and batch cut algorithms take $O(k \log (1+n/k))$ expected work and $O(\log n)$ depth with high probability where $k$ is the batch size. Queries take $O(\log n)$ time with high probability.
\end{restatable}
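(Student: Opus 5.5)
The plan is to combine three ingredients: the heavy-light structure that the heavy preferred child invariant imposes on preferred paths, the biased zipzip tree depth bound $O(\log(W/w(x)))$, and the batch costs from Theorem~\ref{thm:batchjoin} and its biased zipzip extension.

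\textbf{Setup and query bound.} Under the invariant, each preferred child is a heavy child, meaning its subtree size is at least half its parent's. Every non-preferred edge on a root-to-vertex path therefore at least halves the subtree size, so such a path crosses only $O(\log n)$ non-preferred edges. Each auxiliary tree is a biased zipzip tree. The weight of an element $v'$ is the size of $v$'s subtree minus the size of its preferred child's subtree, i.e.\ $1$ plus the sizes of its light subtrees. Then the total weight $W$ of the auxiliary tree for a path headed at $h$ is the subtree size $s(h)$.

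\textbf{Telescoping.} Climbing from $v$ to the root, we pass through auxiliary trees with heads $h_1,\dots,h_j$, where $j=O(\log n)$. In the tree headed at $h_i$ we enter at a vertex $x_i$ of weight $w(x_i)\ge s(h_{i-1})$, and the tree has total weight $s(h_i)$. Its depth is $O(\log(s(h_i)/s(h_{i-1})))$ \whp{}. Summing over $i$ telescopes to $O(\log n)$, plus $O(1)$ per light edge, which is again $O(\log n)$. The read-only \gethead/prefix-aggregate query therefore costs $O(\log n)$ \whp{}. This gives the query bound, and it also bounds the depth of any individual root-ward traversal used by updates.

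\textbf{Updates.} I would decompose \batchlink and \batchcut into $O(1)$ rounds. Each round is either a parallel primitive (Euler tour rooting of the link forest, semisort, prefix sums, with $O(k)$ work and $O(\log k)$ depth) or a single \batchsplit/\batchjoin/\batchupdate on the auxiliary trees.

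For a batch link, each component is everted at most once. Expose and evert of many vertices are done by collecting, for each of the $k$ touched vertices, its $O(\log n)$ light-edge path to the root. The union of these paths is the set of affected vertices. Restoring Invariant~\ref{inv:heavy} changes heavy children only at vertices on these root paths whose subtree sizes changed. So the work is the size of the union of $k$ root paths in a structure with $O(\log n)$ light edges per path. I would bound this union by $O(k\log(1+n/k))$ with a standard argument: the top $O(\log k)$ levels of light edges are shared, since there are at most $O(k)$ distinct subtrees of size at least $n/k$ that are heavy-path heads, and below that each path contributes $O(\log(n/k))$ light edges. The same counting, applied to the biased depths, bounds the sequence work, giving $O(k\log(1+n/k))$ expected work. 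Everting also reverses the root path, and this is handled by the lazy reversal bits of Appendix~\ref{app:lct_reverse} at $O(1)$ per touched node. Cuts are symmetric: subtree sizes decrease along the $k$ root paths, weights are updated with \batchupdate, and heavy children are re-chosen by splice splits and joins.

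\textbf{Main obstacle.} The hardest step is showing that restoring the heavy invariant after a batch requires only $O(k\log(1+n/k))$ splices, and that these can be issued in one \batchsplit plus one \batchjoin round rather than sequentially along each path. That sequential alternative would cost $O(\log^2 n)$ depth. I would first compute all new subtree sizes by a tree contraction over the compressed tree formed by the union of the affected root paths, which has $O(k\log(1+n/k))$ nodes and can be processed in $O(\log n)$ depth. From these sizes each affected vertex independently decides its new heavy child. All resulting splits and joins are then emitted simultaneously, and correctness follows because the MOJOS batch operations are order-independent.
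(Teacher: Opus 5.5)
\textbf{Gap: batch cut is not symmetric to batch link.} Your outline matches the paper on queries: the telescoping of $\log(s(h_i)/s(h_{i-1}))$ is Lemma~\ref{lem:lct_zipzip_depth}. It also matches on the final single \batchsplit{}/\batchjoin{} round. The cut case, however, has a genuine gap. You say cuts are symmetric, and that after computing new sizes on the compressed union of the $k$ root paths, each affected vertex independently picks its heavy child. That assumes every vertex that may violate Invariant~\ref{inv:heavy} lies in that union, and for cuts it does not. Suppose $v$ has preferred child $c$ and a cut happens below $c$ on the same preferred path. Then $n(c)$ and $n(v)$ both drop by the same $\delta$, so $2n(c)>n(v)$ can become false even though $w(v)$ is unchanged. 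Such a $v$ can sit anywhere on a preferred path of length up to $\Theta(n)$. In general it is not a zipzip ancestor of any cut endpoint, so it is absent from your $O(k\log(1+n/k))$-size structure. Scanning all input-tree ancestors whose sizes changed instead is too expensive. The paper closes this by running Sleator--Tarjan's \emph{conceal} on every affected preferred path in parallel. Conceal uses the weight augmentation of the biased auxiliary tree to locate the light preferred edges of a path in $O(\log n)$ time \whp{}, and the paper charges the total cost to the union of root paths. You need this step, or an equivalent search on the augmented auxiliary trees, for the cut algorithm to be correct within the stated bounds.

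\textbf{Smaller points.} Your sentence that heavy children change only at root-path vertices whose subtree sizes changed blurs two sets. One is all input-tree ancestors of the endpoints; the other is the $O(\log n)$ light-edge parents per path whose weight $w$ actually changes. For links, restricting attention to the latter set is exactly Lemma~\ref{lem:only_reweighted_change_pref}: sizes only grow, so a non-reweighted vertex's preferred child stays heavy. You should state and prove this, since it is what justifies checking only reweighted vertices.

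Your threshold split at subtree size $n/k$ is a legitimate alternative to the paper's Jensen induction in Lemma~\ref{lem:lct_zipzip_path_overlap}. The claim that at most $O(k)$ light children have size at least $n/k$ is true: an easy induction gives at most $2n/m$ light children of size $\ge m$. But ``the same counting, applied to the biased depths'' still needs the expected segment bound $O(\log(s(t)/s(b)))$ for the zipzip portions, which you have not supplied.

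Finally, per-tree \whp{} depth bounds do not telescope directly. A tree with $s(h_i)/s(h_{i-1})=O(1)$ gives no $O(1)$ bound with probability $1-n^{-c}$. The query bound therefore needs a concentration argument on the whole sum, as the paper sketches.
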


Our algorithms maintain the same invariant as the sequential link-cut tree version with worst-case update cost guarantees~\cite{sleator1983data}. Let $n(v)$ denote the \defn{size} (number of vertices) of the subtree rooted at $v$ in the link-cut tree. A vertex $v$ is defined as a \defn{heavy child} of its parent $p$ if $2 \cdot n(v) > n(p)$. This guarantees that each vertex has at most one heavy child.
The invariant is as follows:

\begin{invariant}[Heavy Preferred Child] \label{inv:heavy}
    A vertex $v$ with parent $p$ is the preferred child of $p$ if and only if $v$ is a heavy child of $p$.
\end{invariant}

\begin{lemma} \label{lem:few_np}
    In a link-cut tree with the heavy preferred child invariant, the number of non-preferred edges on the path from any vertex to the root is at most $\lceil \log_2 n \rceil$.
\end{lemma}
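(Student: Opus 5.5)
Under Invariant~\ref{inv:heavy}, every non-preferred edge is a light edge, so the lemma reduces to the standard heavy-light halving argument: each non-preferred edge met while walking from a vertex toward the root at least doubles the subtree size.

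\textbf{Step 1 (halving).} Let $v$ be a vertex with parent $p$ such that the edge $(v,p)$ is non-preferred, meaning $v$ is not the preferred child of $p$. By Invariant~\ref{inv:heavy}, $v$ is not a heavy child of $p$. By the definition of heavy child, this gives $2\cdot n(v) \le n(p)$.

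\textbf{Step 2 (monotonicity).} Across preferred edges, subtree sizes still do not decrease going upward: $n(p) \ge n(v)+1 > n(v)$ for every child $v$ of $p$.

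\textbf{Step 3 (counting).} Fix a vertex $x$ and walk the path $x = v_0, v_1, \dots, v_m$ up to the root $v_m$. Suppose this path contains $t$ non-preferred edges. Combining Steps 1 and 2, each non-preferred edge at least doubles the subtree size and every other edge does not decrease it. Since $n(v_0)\ge 1$, we get
\[
n(v_m) \ge 2^t \cdot n(v_0) \ge 2^t.
\]
The root's subtree has at most $n$ vertices, so $2^t \le n$. Hence $t \le \log_2 n \le \lceil \log_2 n\rceil$.

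The only point needing care is that sizes $n(\cdot)$ are measured in the current rooted tree of the link-cut forest. The invariant is stated for that same rooted tree, so Steps 1 and 2 apply directly.
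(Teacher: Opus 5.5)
Your proof is correct. Under the invariant, each non-preferred edge $(v,p)$ gives $2\,n(v) \le n(p)$, so $2^t \le n$ along any root path; this even yields the slightly sharper $t \le \lfloor \log_2 n \rfloor$. The paper states this lemma without proof, treating it as the standard heavy--light fact behind Sleator and Tarjan's worst-case link-cut trees, and your halving argument is exactly that standard argument.
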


\myparagraph{Weight Augmentation}
Updates often alter vertex sizes, requiring preferred child changes to maintain Invariant~\ref{inv:heavy}. Explicitly tracking $n(v)$ is inefficient since $\reverse$ operations can alter the size of each vertex along a path. Instead, we augment the auxiliary trees so each element $x$ tracks a \defn{weight}, $w(x)$, defined as one plus the total size of its \emph{non-preferred} children. Consequently, the size $n(v)$ is the sum of weights from $v$ to the tail of its preferred path. By augmenting the auxiliary tree with subtree weight sums, we can compute this suffix weight efficiently (detailed in \appref{Appendix~\ref{app:treap_query}}).
\revision{In \appref{Appendix~\ref{app:lct_query}}, we show that using this weight as the biased weight of elements in our biased zipzip trees, makes the total length of any path to the root $O(\log n)$ \whp{}.
}

\myparagraph{Child Sets}
When a preferred child is cut or its size decreases, we must find the vertex's new heavy child. Storing a vertex's non-preferred children in a standard ordered set imposes an $\Omega(\log n)$ update overhead, destroying the ability to bound the runtime efficiently in terms of the input diameter $D$. To bypass this, we introduce a \defn{heavy element set} (detailed in \appref{Appendix~\ref{app:lct_child_set}}).
It supports $O(1)$ expected-time updates, $O(1)$ worst-case queries to report an element exceeding half the total set sum, and batch updates of size $k$ in $O(k)$ expected work and $O(\log k)$ depth.
The value of each child $c$ in a child set is $n(c)$.
Combining this set with our weight augmentation yields the following primitive:
\begin{itemize}[leftmargin=*,topsep=0pt,itemsep=0pt]
    \item $\fname{GetHeavyChild}(v)$: Returns the heavy child of $v$, or $\vname{null}$. It computes the current preferred child's size via suffix sum, temporarily inserts it into $v$'s heavy element set, queries the set for the true heavy child, and reverts the insertion.
\end{itemize}

\myparagraph{Efficient Batch Link}
Algorithm~\ref{alg:lct_batch_link} shows our theoretically efficient batch link algorithm. Similar to the simple algorithm, we first direct the links by computing a rooted forest over the link trees (line~\ref{line:root_link_trees}). For each directed link $(u, v)$, we independently call $\evert(u)$ and set $u.\vname{nppar} = v$ (lines~\ref{line:evert_link_start}--\ref{line:evert_link_end}). Assuming the heavy child invariant held initially, these independent sequential everts correctly maintain the invariant within the existing components.
The remainder of the algorithm restores the invariant globally.

\begin{algorithm}[t]
\small
\caption{$\fname{BatchLink}(\vname{links})$}
\label{alg:lct_batch_link}
    $E \gets \fname{RootLinkTrees}(\vname{links})$ \label{line:root_link_trees}\;
    \parfor{$(u, v) \in E$ \label{line:evert_link_start}} {
        $\fname{Evert}(u)$ \;
        $u.\vname{nppar} = v$ \label{line:evert_link_end}\;
    }
    Update weights of vertices. \label{line:update_weights}\;
    \parfor{$[v, \vname{group}] \in \fname{GroupByDst}(E)$ \label{line:insert_children_start}} {
        $v.\fname{InsertChildren}(\vname{group})$ \label{line:insert_children_end}
    }
    $\vname{splits} \gets \{\}$, $\vname{joins} \gets \{\}$ \label{line:pseudocode_duplicate_start} \tcp*{parallel bags}
    $W \gets $ set of vertices with a reweighted child. \label{line:reweighted_child_start}\;
    \parfor{$v \in W$} {
        \If{$\fname{GetHeavyChild}(v) \neq \getsucc(v)$}{
            Add $(v, \getsucc(v))$ to $\vname{splits}$. \;
            Add $(v, \fname{GetHeavyChild}(v))$ to $\vname{joins}$. \label{line:reweighted_child_end}
        }
    }
    $\batchsplit(\vname{splits})$ \label{line:lct_batch_split}\;
    $\batchjoin(\vname{joins})$ \label{line:lct_batch_join} \label{line:pseudocode_duplicate_end}
\end{algorithm}

Next, we update vertex weights (line~\ref{line:update_weights}). Weights increase in two ways: \defn{directly} (receiving new non-preferred children via the new links) or \defn{indirectly} (when a descendant in a non-preferred subtree experiences a direct increase). 
We compute direct increases using the Euler tour technique~\cite{tarjan1985efficient} on the rooted link trees to find the sum of component sizes added to each vertex. For indirect increases, we propagate these direct weight changes upward along non-preferred edges to the component root (prior to the new links). We model this propagation as a directed tree where edges $(x,y)$ connect a vertex $x$ to its preferred path head's non-preferred parent $y$. A second Euler tour~\cite{tarjan1985efficient} over this tree computes the subtree sums, yielding the total indirect weight updates. Finally, all weight changes are applied via a batch augmented value update on the sequence.

We then group the directed edges by destination using parallel semisort~\cite{gu2015top} and batch-insert the new children into their respective heavy element child sets (lines~\ref{line:insert_children_start}--\ref{line:insert_children_end}).
The value for each new child set entry $x$ is its new size $n(x)$ computed in the previous step.
The values of existing children in other child sets must also be updated to reflect any indirect weight changes.

With weights and child sets updated, we identify vertices requiring preferred child changes. We prove in \appref{Appendix~\ref{app:lct_correct}} that in a batch link, only reweighted vertices can violate the invariant. For each reweighted vertex $v$ (lines~\ref{line:reweighted_child_start}--\ref{line:reweighted_child_end}), we check if its current preferred child ($\getsucc(v)$) matches its heavy child ($\fname{GetHeavyChild}(v)$). If they differ, we schedule a split to remove the old child and a join to attach the new one. Finally, we execute a $\batchsplit$ to sever the incorrect preferred children (line~\ref{line:lct_batch_split}) and a $\batchjoin$ to connect the correct ones (line~\ref{line:lct_batch_join}). We also update the weights of affected parent vertices accordingly.

\myparagraph{Efficient Batch Cut}
We provide the pseudo-code for our batch cut algorithm in \appref{Appendix~\ref{app:lct_details}}.
The additions over the simple batch cut algorithm are similar to the batch link algorithm. We determine the vertices that require weight updates, update child sets, then check which vertices need to change their preferred child, then change those preferred children with a $\batchcut$ followed by a $\batchlink$.
Unlike $\batchlink$, there may be some vertices violating the invariant that were not themselves reweighted (rather a vertex below them in their preferred path lost weight, decreasing the size of the current preferred child).
In \appref{Appendix~\ref{app:lct_details}} we provide more details for finding such vertices, and we prove the correctness and efficiency of the batch cut algorithm.

\section{Experimental Evaluation}

All experiments were run on a 96-core machine (192 hyper-threads) with 4 $\times$ 2.1 GHz Intel Xeon(R) Platinum 8160 CPUs (each with 33MiB L3 cache) and 1.5TB of main memory.
Our implementations are written in C++ and compiled with -O3 optimization.
Our parallel implementations use ParlayLib~\cite{parlaylib}, a C++ library for shared memory parallel programming. We use mimalloc~\cite{leijen2019mimalloc} and the allocator from ParlayLib for parallel memory allocation.
We provide details about our implementations in \appref{Appendix~\ref{app:experiments}}.

\subsection{Dynamic Sequence Performance} \label{sec:seq_exp}
For dynamic sequences we compare our new batch-parallel treap data structure (MOJOS-Treap) against existing implementations of batch-parallel skip list~\cite{tseng2019batch} (PAR-SL) and an existing top-down batch-parallel treap implementation~\cite{tseng2019batch} (PAR-Treap) which is not work-efficient for batch split.
Each of these implementations also provides sequential join and split operations which we use for our first experiment.
We do not compare against join-based tree frameworks like PAM~\cite{pam} since they do not implement batch join and batch split as required for our framework.
Parallel $(a,b)$-trees~\cite{akhremtsev2016fast} do fit our batch-dynamic sequence interface, however there is no publicly available implementation to our knowledge.

\myparagraph{Sequential Update and Query Speed}
We test the performance of sequential updates by generating a random permutation of $n=10^8$ elements to define a global sequence order. First we join each subsequent element in the order, doing each join in a random order. Then we split each element from its successor in a random order. We report the total time for each operation type.
For query speed we pick $q=10^8$ random elements and measure the total time it takes to travel to the root from each element. All queries occur between the join and split phases.
Table~\ref{tab:sequence_seq} shows the results.

\begin{table}[ht]
\small
    \centering
    \begin{tabular}{|l|l l|l l|}
        \hline
        Data Structure & Joins & Splits & Updates & Queries \\
        \hline
        MOJOS Treap & 10.822 & 14.041 & 24.863 & 233.753 \\
        Top-Down Treap~\cite{tseng2019batch} & 30.725 & 55.076 & 85.800 & 254.162 \\
        Skip List~\cite{tseng2019batch} & 47.436 & 67.019 & 114.454 & 360.198 \\
        \hline
    \end{tabular}
    \caption{\small The total time in seconds for sequential dynamic sequence operations in various data structures.}
    \label{tab:sequence_seq}
\end{table}

Our bottom-up treap algorithms are significantly faster for both joins and splits than the top-down treap algorithms. In this setting, the top-down treap algorithms are less efficient because they require traversing to the root of the joined or split elements first.
The skip list performs worse than both treap algorithms, likely because its structure results in more nodes traversed on average on the path from an element to its root.
This hypothesis is further supported by the worse query performance of skip lists, since query operations directly traverse to the root with no write operations.

\myparagraph{Batch-Parallel Update Speed}
For batch updates, we partition the joins and splits into $n/k$ batches of size roughly $k$, with varying values of $k$.
Figure~\ref{fig:sequence_batch_update} shows the results of these experiments.

\begin{figure}
    \centering
    \includegraphics[width=0.9\linewidth]{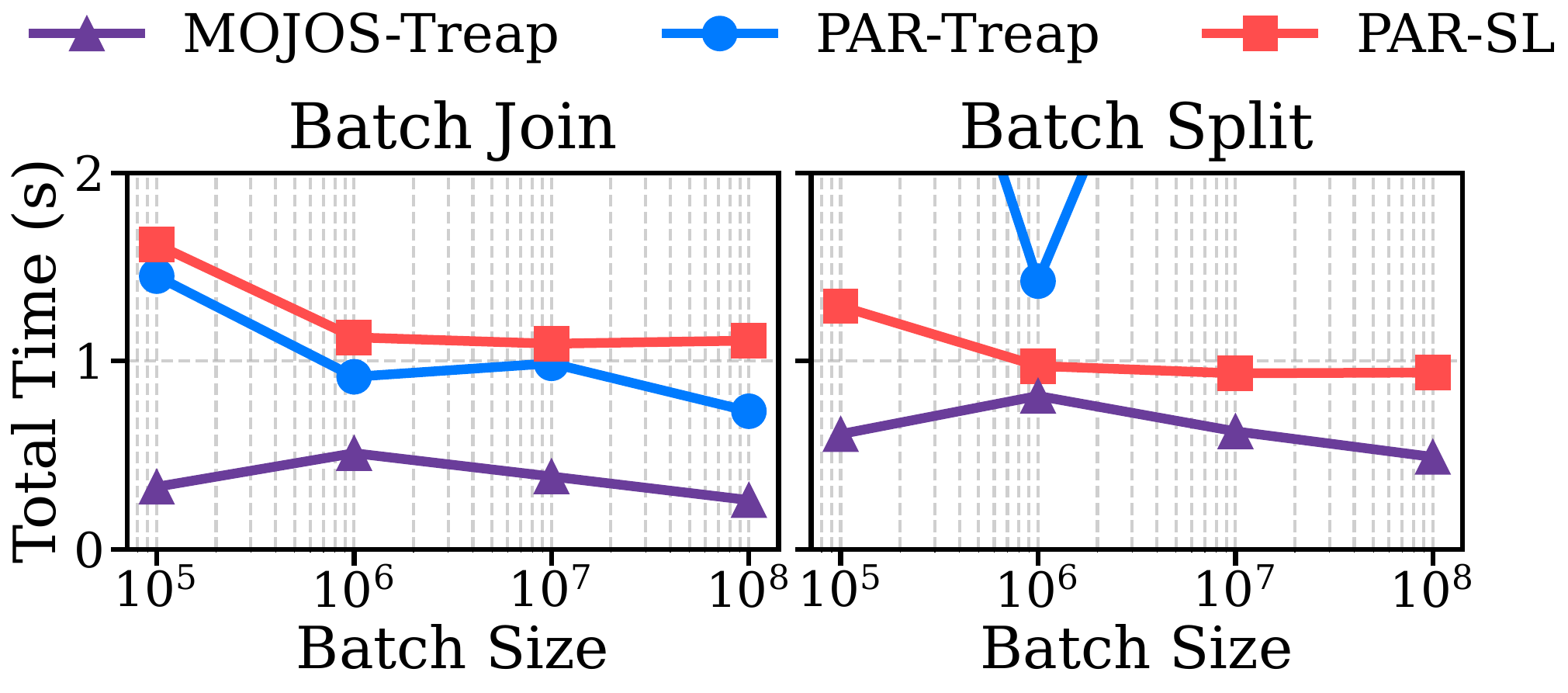}
    \caption{\small The times of various data structures on batch joins (left) and batch splits (right) using various batch sizes on a sequence with $n=10^8$. Each data point represents the total time to do $n$ total joins followed by $n$ total splits, partitioned into batches of size roughly $k$.}
    \label{fig:sequence_batch_update}
\end{figure}

For both batch joins and batch splits, MOJOS-Treap requires less total time than the other baselines across all input configurations we tested.
These results indicate that it is possible for the superior performance of bottom-up treap algorithms in the sequential setting to also translate into the batch-parallel setting.

Compared to PAR-SL, MOJOS-Treap is $3.371\times$ faster on average (geometric mean) for batch joins and $1.639\times$ faster on average (geometric mean) for batch splits.
The reason this difference is smaller for batch split is that the MOJOS-Treap algorithm runs in two concurrent phases requiring additional synchronization overhead, whereas PAR-SL supports phase-concurrent splits.

The top-down PAR-Treap algorithm has reasonable performance for batch joins, outperforming PAR-SL but not our bottom-up MOJOS-Treap.
These results highlight the benefits of bottom-up treap algorithms in this setting: less node traversal and lower synchronization costs.
The PAR-Treap batch split algorithm performs significantly worse than all others due to work-inefficiency.

\myparagraph{Memory Usage}
A treap on $n$ elements contains exactly $n$ nodes. In our MOJOS-Treap implementation, each node stores only a parent pointer, two child pointers, and a 64 bit priority field, totaling to 32 bytes per node. Thus the memory usage is $32n$ bytes plus a small constant factor.
The PAR-Treap implementation stores some extra fields per node used during their updating algorithms, our implementation avoids these overheads.
The skip list has in expectation $2n-1$ nodes for a sequence of $n$ elements.
For our experiments with $n=10^8$, MOJOS-Treap used approximately 3.2GB of memory, and PAR-Treap and PAR-SL both used approximately 4.8GB of memory.

\begin{figure*}
    \centering
    \includegraphics[width=0.9\linewidth]{figures/pdt_update_speed.pdf}
    \includegraphics[width=0.9\linewidth]{figures/pdt_rotate_update_speed.pdf}
    \caption{\small
    The results of our batch-dynamic tree update speed experiments. We use a fixed batch size of $k=10^6$ for all experiments.
    The first row shows results on synthetic trees with $n=10^8$ using the build-destroy update pattern.
    The second row shows results on breadth-first search forests and random incremental spanning forests of our real-world graph datasets using the build-destroy update pattern.
    The third row shows results on synthetic trees with $n=10^7$ using the separate-reconnect update pattern.
    The fourth row shows the results on the real-world spanning forests using the separate-reconnect update pattern.
    }
    \label{fig:pdt_batch_update}
\end{figure*}

\begin{figure}
    \centering
    \includegraphics[width=0.9\linewidth]{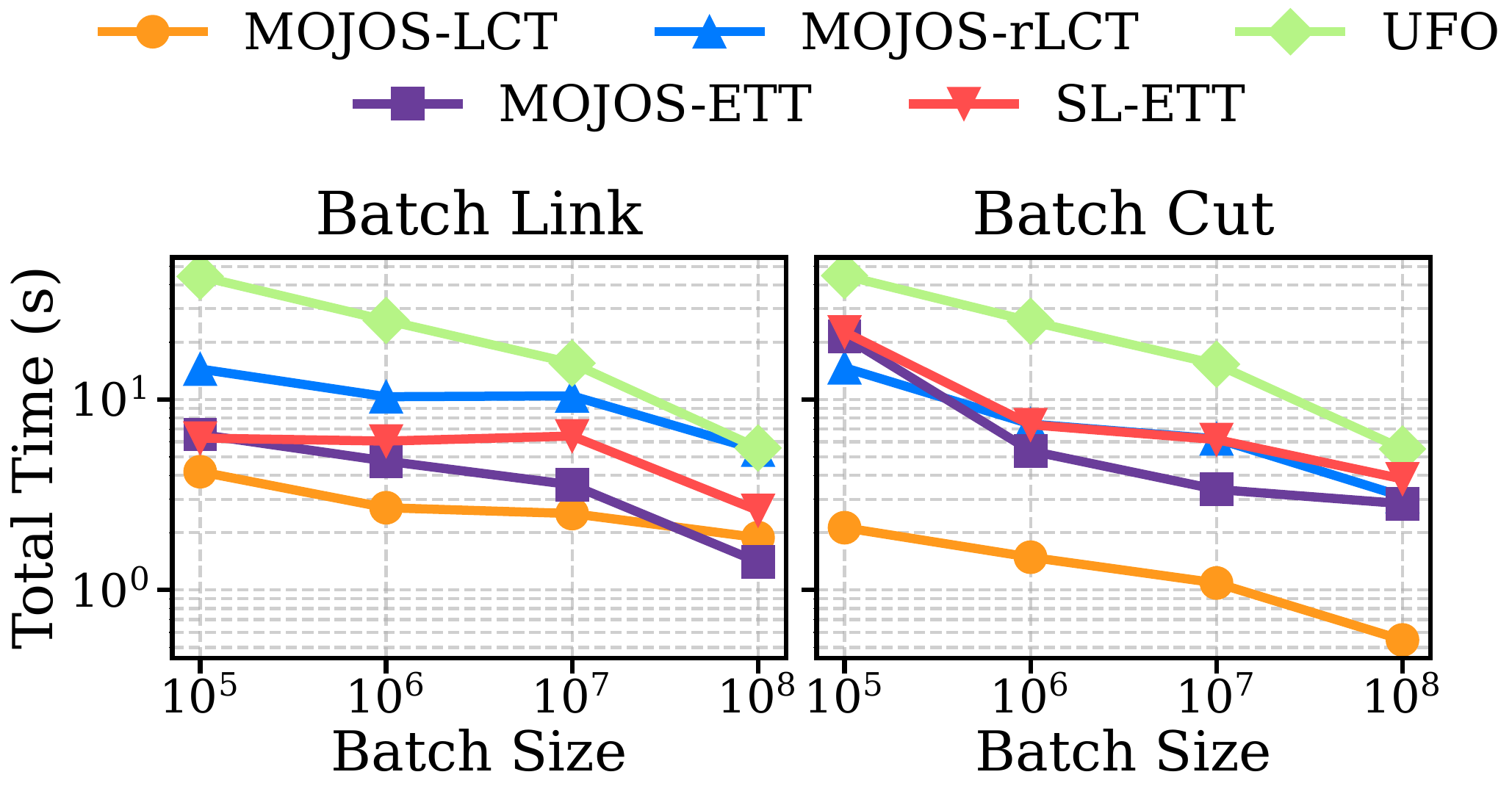}
    \caption{\small The times of dynamic trees on batch links (left) and batch cuts (right) with various batch sizes on a random tree with $n=10^8$ using the build-destroy pattern. Each point represents the total time to do $n-1$ links or cuts, partitioned into batches of size roughly $k$.}
    \label{fig:pdt_batch_size_sweep}
\end{figure}

\subsection{Dynamic Trees Performance}
For dynamic trees we evaluate three of our new implementations: the batch-parallel Euler tour tree algorithm optimized for treaps (MOJOS-ETT), the simple batch-parallel link-cut tree algorithm (MOJOS-LCT), and the batch-parallel link-cut tree algorithm {\em robust} to adversarial cases (MOJOS-rLCT).
We compare these against existing implementations of batch-parallel Euler tour trees based on skip lists~\cite{tseng2019batch} (SL-ETT), and UFO trees~\cite{deman2026ufo} (UFO).
We omit other implementations of topology trees~\cite{deman2026ufo} and rake-compress trees~\cite{ikram2025parallel} since they were shown to be significantly slower than UFO trees and ETTs across a variety of inputs~\cite{deman2026ufo}.

To compare performance on a wide range of trees with different attributes, we test our implementations on several types of synthetic input trees, including path graphs, perfect binary trees, perfect k-ary trees, stars, and dandelions.
We also test on randomly generated degree 3, unbounded degree, and preferential attachment trees.
We also test our implementations on spanning trees of several real-world graphs.
Due to space constraints, we describe our graph inputs in \appref{Appendix~\ref{app:experiments}}.
For each graph, we test both a {\em breadth-first spanning forest (BFS)} starting at a random vertex, and a {\em random incremental spanning forest (RIS)} generated by inserting the edges of the graph in a random order and only including edges that are not connected in the current spanning forest.

\myparagraph{Sequential Update Speed}
Prior work has extensively studied the performance of sequential dynamic tree data structures~\cite{deman2026ufo}.
Their results showed that link-cut trees are consistently the fastest data structure.
The fast implementation of link-cut trees that is commonly used is based on splay trees~\cite{sleator1985self} which are purely sequential.
Since our batch-parallel link-cut trees instead use a treap as the underlying data structure, we provide some experimental results to show that a sequential treap-based link-cut tree implementation can be as fast as the standard splay tree implementation.
Across all of our synthetic input cases with $n=10^8$, our treap-based LCT implementation was $1.011\times$ faster on average (geometric mean) than the splay tree LCT implementation.

\myparagraph{Batch-Parallel Update Speed}
\revision{
We measure the performance of batch-parallel updates by studying two different update patterns.
The first update pattern is \defn{build-destroy}.
This first links all of the edges in a random order, and then cuts them all in a random order.
The links and cuts are partitioned into batches of size roughly $k$.
The second update pattern is \defn{separate-reconnect}.
This starts with the complete tree, and cuts a random batch of roughly $k$ edges, then links that same batch. This is repeated for disjoint batches of roughly $k$ edges, until all edges have been cut and linked once.
}
Figure~\ref{fig:pdt_batch_update} shows the results of these experiments.

The first three data structures in our evaluation are batch-dynamic trees supporting path queries (MOJOS-LCT, MOJOS-rLCT, and UFO trees).
Across all inputs we see that both of our LCT implementations process batch updates significantly faster than UFO trees which are the current state-of-the-art. MOJOS-LCT performs very well on these inputs, but as discussed in the next section, there exist adversarial cases on which MOJOS-LCT performs poorly.
When robustness to input is crucial, MOJOS-rLCT is the fastest available option for path queries.
If it is known that the inputs are low diameter or have non-adversarial structure, MOJOS-LCT is preferred.

The last three data structures in our evaluation are batch-dynamic trees supporting subtree queries (UFO trees, MOJOS-ETT, and SL-ETT).
Here we see that the ETT implementations generally outperform UFO trees, except on low diameter inputs.
For all but a few inputs, MOJOS-ETT outperforms SL-ETT.

\revision{Figure~\ref{fig:pdt_batch_size_sweep} shows that the relative performance of these data structures stays similar across various batch sizes, and that all data structures perform better with higher batch sizes.}

\begin{figure}
    \centering
    \includegraphics[width=0.9\linewidth]{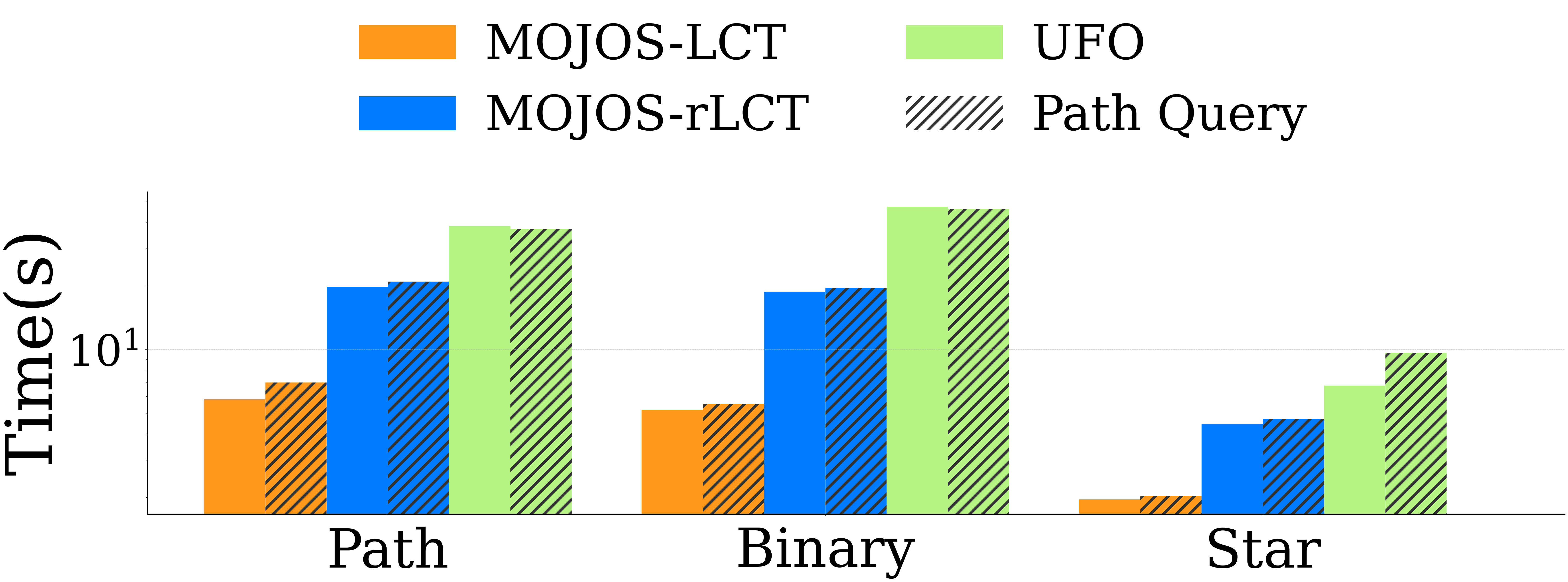}
    \caption{\small A subset of our batch-dynamic tree update speed experimental results for data structures with (solid bar) and without (hashed bar) path query support. We use trees with $n=10^8$ and a batch size of $k=10^6$.}
    \label{fig:path_query_update_speed}
\end{figure}

\begin{figure}
    \centering
    \includegraphics[width=0.9\linewidth]{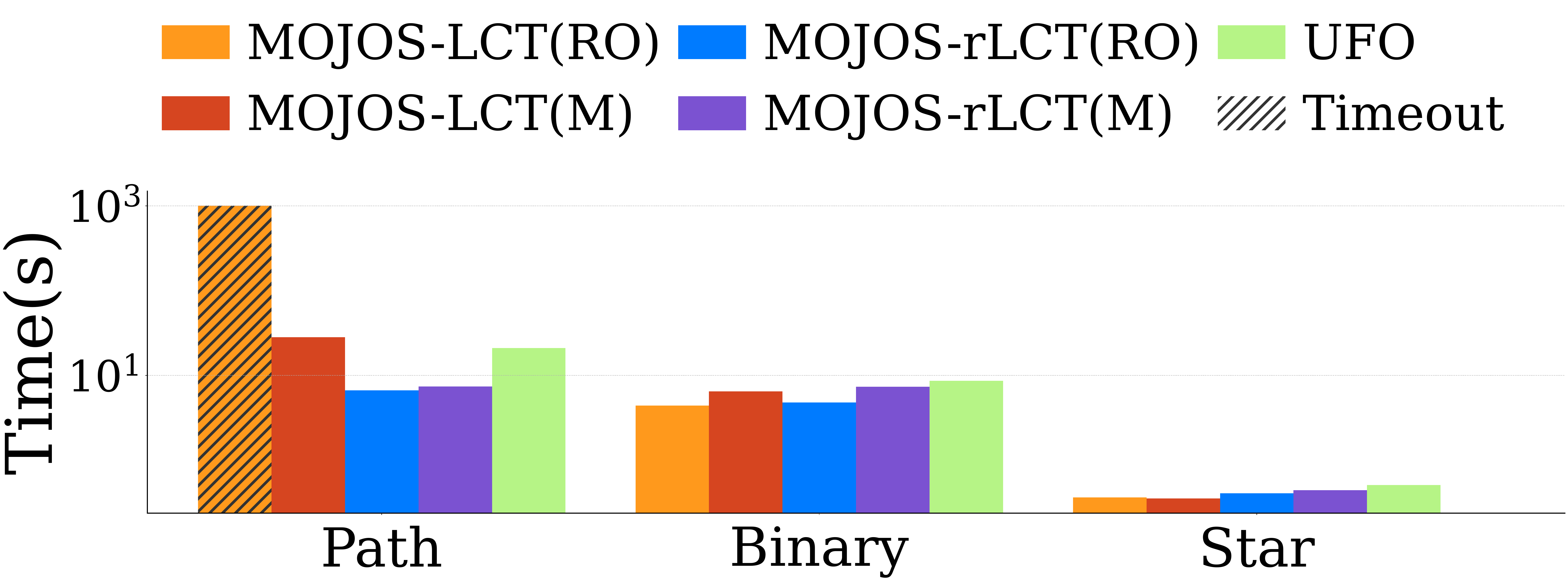}
    \caption{\small The results of our sequential path query speed experiments. We use trees with $n=10^8$, first building the entire tree using a batch link operation, then reporting the time to execute $q=10^6$ uniformly randomly generated path queries. ``(RO)'' indicates our read-only LCT query algorithms.
    ``(M)'' indicates the standard mutating algorithm. The hashed bar indicates a timeout after 1000 seconds.}
    \label{fig:sequential_path_query}
\end{figure}

\begin{figure}
    \centering
    \includegraphics[width=0.8\linewidth]{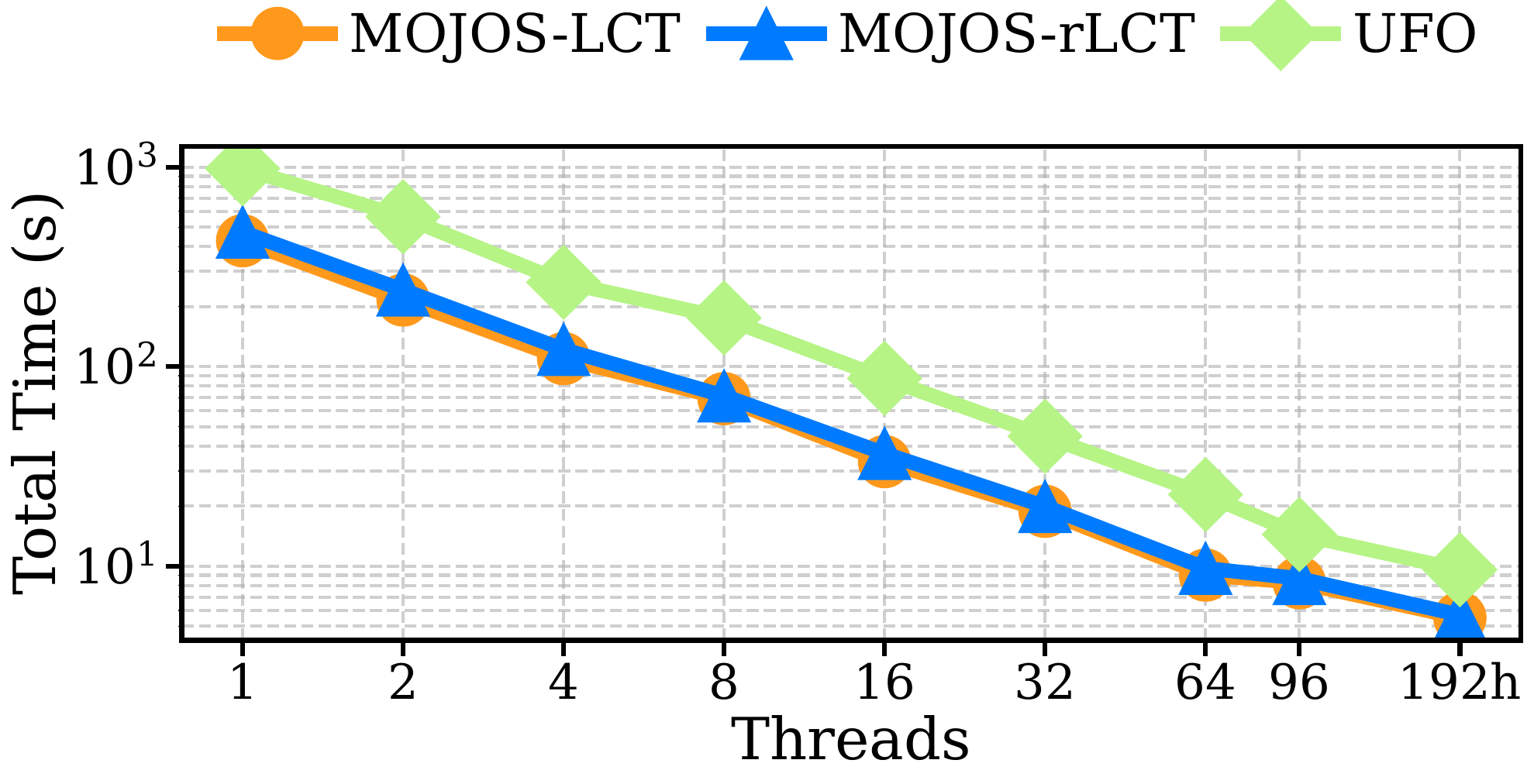}
    \caption{\small The results of our parallel path query speed experiments. We use a random tree with $n=10^8$, first building the entire tree using a batch link operation, then reporting the time to execute $q=10^8$ uniformly randomly generated path queries. All queries are run in one parallel batch using the read-only query algorithms for each data structure. The $x$-axis varies the number of threads available.}
    \label{fig:parallel_path_query}
\end{figure}

\revision{
\myparagraph{Path Query Performance}
We study various aspects of the performance of batch-dynamic tree data structures supporting path sum queries with 32-bit integer weights per vertex.

Figure~\ref{fig:path_query_update_speed} studies the affect of supporting path queries on the update speed of the data structures.
The results show that supporting typically results in a small overhead in update speed.
However, the relative update speeds between the three data structures remains the same with and without path query support.

Figure~\ref{fig:sequential_path_query} studies the sequential speed of various path query algorithms.
For MOJOS-LCT and MOJOS-rLCT, we test both the traditional mutating link-cut tree path query algorithm, as well as the read-only algorithm that we proposed.
The results show that the read-only algorithms are slightly faster, except in the path case for MOJOS-LCT. MOJOS-LCT is not theoretically efficient in this case, so the mutating algorithm with efficient amortized bounds is significantly better. We discuss these adversarial cases below.
UFO tree path queries are slightly slower in non-adversarial cases.

Finally, Figure~\ref{fig:parallel_path_query} shows the scalability of our parallel query algorithms (the read-only algorithms in a parallel for loop) with various thread counts. All 3 data structures achieve excellent speedups with increasing numbers of threads available.

}

\myparagraph{Adversarial Case for MOJOS-LCT}
Our MOJOS-LCT implementation is based on the simple batch-parallel LCT algorithm with no strong theoretical guarantee for high diameter input trees. Thus it is possible to construct adversarial cases for which the performance of MOJOS-LCT suffers. In contrast, MOJOS-rLCT is based on our algorithm with good theoretical guarantees regardless of input, and it does not lose performance on such cases.

The adversarial case that we evaluate is a path graph, where the first operation is a batch link that adds all of the edges in a long path, and then repeated random query operations occur.
For MOJOS-LCT, this input results in all edges being non-preferred, thus queries must traverse through the entire path one node at a time.
On this input with $n=10^5$ and doing $q=10^5$ random queries, MOJOS-rLCT was $49.341\times$ faster than MOJOS-LCT.
Cases like this highlight the importance of our robust LCT algorithm which maintains the heavy preferred child invariant.

\myparagraph{Memory Usage}
On a tree with $n=10^8$, MOJOS-LCT uses 4.0GB, MOJOS-rLCT uses 4.8GB, MOJOS-ETT uses 12.5GB, and SL-ETT uses 16.5GB.
The LCT implementations have much lower memory usage since they only store exactly $n$ treap nodes, while ETT implementations store up to $3n-2$ elements.
MOJOS-rLCT uses slightly more memory than MOJOS-LCT due to storing augmented weight information for vertices.
Of the ETT implementations, SL-ETT uses significantly more memory since the skip-list stores $6n-5$ nodes in expectation while the treap stores exactly $3n-2$ nodes.

\section{Conclusion}

In this paper, we developed MOJOS, a framework for building fast and theoretically-efficient batch-dynamic trees from batch-dynamic sequences.
A natural next step is to perform a comprehensive empirical evaluation of how different sequence implementations (including those yet to be developed) impact MOJOS's performance.
Additionally, we demonstrated that batch-parallel link-cut trees significantly outperform other batch-dynamic trees. While the simple algorithm is highly efficient, it remains sensitive to adversarial inputs. Conversely, the robust algorithm incurs high overhead. Investigating lightweight heuristics to improve the simple algorithm's robustness without the performance penalty of strictly maintaining theoretical invariants is a promising direction for future study.

\clearpage
\bibliographystyle{ACM-Reference-Format}
\bibliography{main.bib}

\clearpage

\appendix
\section{Treap Algorithms and Proofs}

\subsection{Bottom-Up Sequential Treap Algorithms} \label{app:treap_seq}
Here we describe sequential algorithms for treap join and split which employ a \defn{bottom-up} approach rather than the iterative top-down approach commonly seen in the literature on treaps~\cite{seidel1996randomized, blelloch2016just}.
These bottom-up algorithms form the basis of the phase-concurrent batch join and batch split algorithms in this paper. They additionally have some practical benefits over prior algorithms.
First, since \myjoin and \mysplit take sequence elements as arguments, the bottom-up approach requires only one traversal of a path in the treap, whereas top-down algorithms require first traversing up to the root and then performing the top-down algorithm.
Second, our bottom-up algorithms perform a number of writes proportional to the number of nodes whose parent changes. The top-down algorithms perform a number of writes proportional to the length of the path, regardless of whether many pointers actually need to change.
Finally, our algorithms are simple and direct, avoiding any recursion or iteration overhead of a top-down approach.
The pseudo-code for our algorithms are shown in Algorithms~\ref{alg:bottomupjoin}~and~\ref{alg:bottomupsplit}.
The algorithms take time proportional to the height of the treap which is $O(\log n)$.

\begin{algorithm}[ht]
\caption{$\mathsf{Join}(x,y)$}
\label{alg:bottomupjoin}
$X = x$, $Y = y$\;
\While{$X$ \textbf{and} $Y$}{
    \If{$X.p < Y.p$}{
        \While{$X.par$ \textbf{and} $X.par.p < Y.p$}{
            $X = X.par$\;
        }
        $next_X = X.par$\;
        $X.par = Y$\;
        $Y.lchild = X$\;
        $X = next_X$\;
    }
    \ElseIf{$X.p > Y.p$}{
        \While{$Y.par$ \textbf{and} $Y.par.p < X.p$}{
            $Y = Y.par$\;
        }
        $next_Y = Y.par$\;
        $Y.par = X$\;
        $X.rchild = Y$\;
        $Y = next_Y$\;
    }
}
\end{algorithm}


\myparagraph{Sequential Join}
Algorithm~\ref{alg:bottomupjoin} shows the algorithm for treap join. The algorithm receives $x$, the rightmost node in one treap, and $y$, the leftmost node in an other treap. The algorithm essentially ``merges'' the spine of $x$ with the spine of $y$ in increasing order of priority.
We define the \defn{spine} of a node in a treap to be the path from that node to the root. Since $x$ and $y$ are the rightmost and leftmost node in their treap respectively, their spines are the same as the right spine and left spine respectively of their treap.
The merge starts with the lower priority node in either spine, and advances up the spine to the first ancestor whose parent has larger priority than the current node in the other spine (or the root if there is none). This node's parent is set to the current node in the other spine and becomes its left or right child. Then we repeat this process on the opposite spine. We keep alternating between the spines in this way until the root of one is reached. 
This effectively merges the two spines into one, such that the priority of the nodes in the new spine is in increasing order, maintaining the max-heap property.
Whenever a node in the spine of $x$ changes its parent to a node in the spine of $y$, it becomes the left child of its new parent. Similarly nodes in the spine of $y$ become the right child of nodes in the spine of $x$. This ensures that that the inorder sequence of nodes in the new treap is precisely the inorder sequence of the nodes in $x$'s treap followed by those in $y$'s treap.

\begin{algorithm}[ht]
\caption{$\mathsf{Split}(x,y)$}
\label{alg:bottomupsplit}
$X = x$, $Y = y$\;
\If{$X.p < Y.p$}{
    $Y = y$, $X = y.lchild$\;
    $X.par = null$, $Y.lchild = null$\;
}
\ElseIf{$X.p > Y.p$}{
    $X = x$, $Y = x.rchild$\;
    $Y.par = null$, $X.rchild = null$\;
}
\While{$X$ \textbf{and} $Y$}{
    \If{$X.p < Y.p$}{
        \If{$Y.par$ \textbf{and} $Y.par.rchild == Y$}{
            $X.par = Y.par$, $Y.par.rchild = X$\;
            $X = Y.par$, $Y.par = null$\;
        }
        \Else{
            $Y = Y.par$\;
        }
    }
    \ElseIf{$X.p > Y.p$}{
        \If{$X.par$ \textbf{and} $X.par.lchild == X$}{
            $Y.par = X.par$, $X.par.lchild = Y$\;
            $Y = X.par$, $X.par = null$\;
        }
        \Else{
            $X = X.par$\;
        }
    }
}
\end{algorithm}

\myparagraph{Sequential Split}
Algorithm~\ref{alg:bottomupsplit} shows the algorithm for treap split. Since $x$ is directly before $y$ in the sequence order, either $x$ is an ancestor of $y$ in the treap or vice versa (otherwise their LCA would be between them in the order). We can determine which one is higher by their priorities, the larger priority node being an ancestor of the smaller priority node.
For the description of this algorithm, assume without loss of generality that $x$ was above $y$. The right child of $x$ is the root of the subtree containing $y$, call this $Y$. The algorithm will proceed up the spine of $Y$, ``unmerging'' the spine based on which side of the split each node is on.
Let $X$ be $x$. Then $X$ and $Y$ will be the bases of the two spines formed from the unmerging.
As we proceed up the spine, we keep track of the direction we are traversing (variable $dir$). If a node is its parent's left child, we call this traversing right ($dir = 0$). Otherwise we are traversing left ($dir = 1$).
Since $Y$ is a right child, we begin traversing left.
As long as the direction stays the same while advancing up the spine, all nodes traversed are on the same side of the split, and no pointers need to change.
Once we change direction while advancing up the spine, the next parent node should be on the other side of the split. The last node encountered on this other side should become a child of this new parent. This process proceeds until we reach the root.
Since the order of nodes on the two new spines respects the order of the original spine, the max-heap property is maintained.
To keep the correct sequence order, nodes in the new right spine must become left children of their new parent and vice versa.

\subsection{Batch-Parallel Treap Cost Bound Proofs} \label{app:treap_cost}

\batchtreaptheorem*
\begin{proof}
    We begin by proving that the depth of batch join is $O(\log n)$. Forking into $k$ threads takes $O(\log k)$ depth which is $O(\log n)$ since $k$ must be at most $n-1$.
    The depth of each thread is at most proportional to the deepest path in any treap after the join plus the number of failed CAS operations. The length of any treap path is $O(\log n)$.
    The CAS can only fail on the root of a treap which is involved in two joins. If the CAS fails, the other join succeeded, thus the next CAS for this join will succeed. Thus the number of failed CAS operations is at most one per treap node, $O(\log n)$. Since each of the $k$ threads has depth $O(\log n)$ the overall depth of the algorithm is $O(\log n)$.

    Next we will prove that the work of batch join is $O(k \log (1+n/k))$. \revision{It is well-known that the number of distinct nodes in $O(k)$ element to root paths in a set of treaps with $n$ total elements is $O(k \log (1+n/k))$ in expectation~\cite{Blelloch1998}.} We will show that the work done is proportional to the number of nodes in $O(k)$ element to root paths in a set of treaps.
    These $O(k)$ paths are all the spines of the joined elements in the treaps before the batch of joins. Consider two joins around a shared center treap. Let the left treap be $A$, the center treap be $B$, and the right treap be $C$. These two joins operate on separate sets of elements until they reach the root of the center treap, $R$.
    One of the joins (without loss of generality, the left join) will end with a successful CAS operation on the parent pointer of $R$. At this point the left join will have only done work proportional to the number of nodes in $B$'s left spine and the number of nodes in $A$'s right spine whose priority is less than $R$'s priority. 
    The remaining join (without loss of generality, the right join) will do work proportional to the number of nodes in $B$'s right spine, the the number of nodes in $C$'s left spine, and the number of nodes in $A$'s right spine whose priority is greater than $R$'s priority.
    This argument extends naturally to $k$ joins and a constant amount of work can be charged to each node in the spines of all the joined elements.
\end{proof}

\revision{
\begin{lemma} \label{lem:batch_split_helper}
Let $\mathcal{T}$ be a complete binary tree with $L = \Theta(k)$ leaves. For any leaf $u$, let the \textit{monotone spine} $S(u)$ be the length of the maximal path of edges ascending from $u$ that does not change direction. 
Let $W_u$ be a non-negative random variable assigned to each leaf $u$ such that $\sum_{u \in \mathcal{T}} E[W_u] = k$. If the expected weight of every leaf is bounded by $E[W_u] \le C = O(n/k)$ (where $n \ge k$), then the expected total cost satisfies:
\[
E\left[ \sum_{u \in \mathcal{T}} W_u S(u) \right] = O\left(k \log \left(1 + \frac{n}{k}\right)\right)
\]
\end{lemma}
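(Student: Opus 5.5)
The plan is to use that $\mathcal{T}$ is a fixed tree, so $S(u)$ is a deterministic number for each leaf. By linearity of expectation,
\[
E\Bigl[\sum_{u} W_u S(u)\Bigr] = \sum_u a_u S(u), \qquad a_u := E[W_u],
\]
where $0 \le a_u \le C$ and $\sum_u a_u = k$. This reduces the lemma to a deterministic extremal problem: bound $\sum_u a_u S(u)$ over all weightings with total mass $k$ and per-leaf cap $C$. To handle it I will rewrite the objective by layers. With $N_j := |\{u : S(u) \ge j\}|$,
\[
\sum_u a_u S(u) = \sum_{j\ge 1} \sum_{u : S(u)\ge j} a_u \le \sum_{j \ge 1} \min\bigl(k,\; C\cdot N_j\bigr).
\]
The bound holds because each inner sum is at most the total mass $k$, and also at most $C$ times the number of leaves it ranges over.

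The key combinatorial step is to show $N_j = O(L\, 2^{-j})$. If the monotone ascent from a leaf $u$ reaches an ancestor $v$ at distance $j$, then every step of the ascent was, say, from a left child. So $u$ is obtained from $v$ by descending to the left child $j$ times, and symmetrically for right children. Hence each node $v$ at height $j$ is the endpoint of a length-$j$ monotone ascent for at most two leaves: its all-left descendant and its all-right descendant. In a complete binary tree with $L$ leaves, the number of nodes at height $j$ is $O(L/2^{j})$. Therefore $N_j \le 2\cdot O(L/2^j) = O(L 2^{-j})$, and $N_j = 0$ once $j$ exceeds the height $O(\log L)$. I expect this counting step to be the main place needing care. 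If ``complete'' allows a ragged last level, the height-$j$ node count should be stated as at most $\lceil L/2^{j}\rceil$, or bounded directly by observing that the subtrees rooted at distinct nodes of height $j$ are disjoint and each contains at least $2^{j-1}$ leaves.

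Finally I will sum. Since $L = \Theta(k)$ and $C = O(n/k)$, we have $C N_j \le c\, n 2^{-j}$ for a constant $c$. Let $j^* = \lceil \log_2(c n/k) \rceil$, so that $j^* = O(\log(1+n/k))$ because $n \ge k$. The terms with $j \le j^*$ are each at most $k$, giving $O(k\log(1+n/k))$ in total. For $j > j^*$, the terms $c n 2^{-j}$ form a geometric series bounded by $2 c n 2^{-j^*-1} \le k$, giving $O(k)$. Since $\log(1+n/k) \ge 1$ for $n \ge k$, the $O(k)$ term is absorbed, which yields the claimed $O(k\log(1+n/k))$ bound.
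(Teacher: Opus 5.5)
Your proof is correct and follows the paper's skeleton. Both reduce by linearity of expectation to the deterministic weights $a_u = E[W_u]$, both use that the number of leaves with monotone spine at least $j$ is $O(k/2^j)$, and both put the cutoff near $\log_2(n/k)$.

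The difference is how the resulting extremal problem is solved. The paper argues greedily that the worst case saturates the leaves with the longest spines at capacity $C$. It picks $\ell^*$ from the approximate equation $N(\ell^*) \approx k/C$ and bounds $\sum_{\ell \ge \ell^*} C\,\ell\,N(\ell)$ by the arithmetico-geometric tail $O(n\,\ell^*/2^{\ell^*})$. You instead use the layer-cake identity and cap each layer by $\min(k, C N_j)$. The first $j^*$ layers then cost at most $k$ each, and the remainder is a plain geometric series.

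Your route is the more rigorous one, at no extra cost:
\begin{itemize}
\item You need only the upper bound on $N_j$. The greedy route tacitly relies on an exchange argument. It also needs a matching lower bound $N(\ell) = \Omega(k/2^\ell)$, to ensure that leaves with spine at least $\ell^*$ can absorb all $k$ units of mass with $\ell^*$ close to $\log(n/k)$.
\item You cover the case $n \approx k$ directly. The paper handles it by an after-the-fact ``$+1$'' inside the logarithm.
\item Your counting step is more explicit than the paper's ``fixes the bottom $\ell$ bits'': each node is the top of at most two length-$j$ monotone ascents, from its all-left and all-right descendants at distance $j$.
\end{itemize}

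One small fix is needed if the last level is ragged. The top of a length-$j$ ascent from a leaf on the shallower level may have height $j+1$, so count tops by depth rather than by height. They lie on at most two levels, each with at most $2L/2^j$ nodes, which still gives $N_j \le 8L/2^j$.
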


\begin{proof}
By the linearity of expectation, we can move the expectation operator inside the summation:
\[
E\left[ \sum_{u \in \mathcal{T}} W_u S(u) \right] = \sum_{u \in \mathcal{T}} E[W_u] S(u)
\]
For shorthand, let $w_u = E[W_u]$ be the deterministic expected weight of leaf $u$. We are given that $\sum_{u \in \mathcal{T}} w_u = k$ and $w_u \le C = O(n/k)$.

The height of $\mathcal{T}$ is $H = \Theta(\log k)$. In a complete binary tree, the number of leaves with a monotone spine of length at least $\ell$ is bounded by $N(\ell) = O(k / 2^\ell)$, because specifying a monotone spine of length $\ell$ fixes the bottom $\ell$ bits of the leaf's path index. Consequently, the number of leaves with a spine of length \textit{exactly} $\ell$ is also bounded by $O(k / 2^\ell)$.

To find the worst-case configuration of expected weights that maximizes the total sum subject to the capacity constraint $C$, we employ a greedy strategy: we assume that the leaves with the largest possible spine lengths are assigned their maximum possible expected weight $C$. 

Let $\ell^*$ be the threshold spine length such that the total capacity of all leaves with spine length $\ge \ell^*$ is at least $k$. Because $n \ge k$, the number of distinct leaves we need to saturate is $m \approx k / C = \Theta(k^2 / n)$. We determine $\ell^*$ by setting the number of available leaves at or above this threshold equal to $m$:
\[
N(\ell^*) \approx m \implies \frac{k}{2^{\ell^*}} \approx \frac{k^2}{n} \implies 2^{\ell^*} \approx \frac{n}{k}
\]
Solving for the threshold yields $\ell^* \approx \log(n/k)$.

We compute the upper bound on the total expected cost by summing across all active lengths from $\ell^*$ up to the tree height $H$:
\[
\sum_{u \in \mathcal{T}} w_u S(u) \le \sum_{\ell = \ell^*}^{H} \left( \text{number of leaves with length } \ell \right) \cdot C \cdot \ell
\]
Substituting the bounds for the number of leaves and the capacity $C = O(n/k)$:
\[
\sum_{u \in \mathcal{T}} w_u S(u) \le \sum_{\ell = \ell^*}^{H} O\left(\frac{k}{2^\ell}\right) \cdot O\left(\frac{n}{k}\right) \cdot \ell = O(n) \sum_{\ell = \ell^*}^{H} \frac{\ell}{2^\ell}
\]
Using the standard arithmetico-geometric series identity $\sum_{\ell = \ell^*}^{\infty} \frac{\ell}{2^\ell} = O\left(\frac{\ell^*}{2^{\ell^*}}\right)$, we evaluate the bound:
\[
O(n) \cdot O\left(\frac{\ell^*}{2^{\ell^*}}\right) = O(n) \cdot O\left( \frac{\log(n/k)}{n/k} \right) = O\left(k \log \left(\frac{n}{k}\right)\right)
\]
Adding a constant to the logarithm to gracefully handle the boundary case where $n \approx k$ yields the final clean bound:
\[
E\left[ \sum_{u \in \mathcal{T}} W_u S(u) \right] = O\left(k \log \left(1 + \frac{n}{k}\right)\right)
\]
\end{proof}
}

\batchtreaptheorem*
\begin{proof}
    Here we analyze the depth of batch split.
    The depth of the first phase is determined by the maximum depth/cost in a call to $\fname{ConcurrentSplit}$.
    The maximum number of steps in this function is proportional to the number of nodes traversed upwards, plus the number of times a CAS is retried.
    It is well known that the number of nodes traversed upwards in a path in a treap is bounded by $O(\log n)$ with high probability.
    If a CAS is ever retried, another thread marked the child pointer first, so the thread will terminate after either a successful CAS, or the current child has a lower priority already. Therefore the number of total retries is bounded by the total number of retries possible at a single node. This is $O(\log n)$ with high probability because all the nodes attempting to CAS the same child pointer must lie on a single spine in the treap.
    The depth in the second phase is logarithmic in the number of attempted child pointer changes, \revision{which is at most $n$}, so the depth is $O(\log n)$.
    Thus the overall depth is $O(\log n)$ with high probability.

    Now we analyze the work of batch split.
    The work of the first phase is the sum of the number of nodes traversed plus the number of CAS retries for each thread. It is clear that the work in the second phase is asymptotically no more than that in the first phase.
    
    First we consider the total cost in nodes traversed. We must account for the fact that multiple threads can have overlapping traversals through paths that are not changing direction. However once the path changes direction, only one thread will proceed because of the child pointer marking logic.
    \revision{In a treap, the expected size of any subtree rooted at depth $H = \lfloor \log_2 k \rfloor$ is $O(n / 2^H) = O(n/k)$~\cite{seidel1996randomized}. The expected length of a search path within a randomized treap of size $M$ is known to be $O(\log M)$. Therefore, the expected number of nodes traversed by any single thread below depth $H$ is bounded by $O(\log(1 + n/k))$.}
    In the worst-case, each split traverses completely independent paths below this depth \revision{$H$, for a total of $O(k \log (1+n/k))$ nodes traversed in expectation.}

    Now consider the remaining \revision{nodes traversed at depths $\leq H$.} It is easy to see that the total number of nodes is $O(k)$, but we need to bound the \revision{total number of nodes in the partially overlapping paths traversed in our algorithm.}
    \revision{Let $W_u$ be a non-negative random variable representing the number of thread paths passing through a boundary leaf $u$ at depth $H$. Because there are $k$ threads and each passes through exactly one leaf at depth $H$, $\sum E[W_u] = k$. Due to treap properties, the expected number of keys in any subtree rooted at depth $H$ yields $E[W_u] = O(n/k)$. The total overlapping traversal cost in this top zone is given by $\sum W_u S(u)$. Applying Lemma~\ref{lem:batch_split_helper} directly, the expected number of unique nodes traversed across all threads in this zone is $O(k \log (1+n/k))$.}

    Finally, we will show that the total number of CAS retries \revision{in the entire algorithm} is upper bounded by the total number of nodes traversed, \revision{implying that CAS retries do not asymptotically increase the total work.}
    Consider a node $P$ on which $x$ of its descendants are trying to CAS its child pointer. At least one of the descendants will succeed the CAS in every round, thus the maximum number of CAS operations on this pointer is $\sum_{i=1}^x i$. Prior to attempting this CAS, each descendant must have traversed up from a different node on the spine of the subtree rooted at the previous child of $P$. The total number of nodes traversed is minimized when all the nodes are packed at the top of the spine, but then there are still at least $\sum_{i=1}^x i$ total nodes traversed.

    Putting this all together, the total number of nodes traversed, and the total number of CAS operations are both $O(k \log (1+n/k))$ \revision{in expectation}, so the work bound holds.
\end{proof}

\subsection{Augmented Values in Batch-Parallel Treaps} \label{app:augmented_treap}
Augmented values associate a data value from a domain $D$ with each element and maintain the aggregate of the entire subtree at each node using an associative and commutative function $f: D \times D \to D$, enabling efficient operations such as range queries. Similar techniques to those for batch-parallel skip lists~\cite{tseng2019batch} can be used to maintain these augmented values in batch-parallel treaps. This adds an additional parallel phase to both the batch join and batch split algorithms, but does not impact the asymptotic cost bounds (assuming the aggregate function $f$ can be computed in $O(1)$ time).

The augmented value recomputation phase proceeds in two steps to safely parallelize the updates. First, originating from the up to $k$ nodes modified by the batch operation, we perform a concurrent bottom-up traversal. Threads traverse parent pointers up to the root, marking each ancestor node on the spine. To prevent race conditions from concurrent threads attempting to mark the same ancestor, this flag is set using a Compare-And-Swap (CAS) operation. 

Once the affected spine is marked, the algorithm initiates a recursive top-down traversal starting from the root. This traversal only explores children that possess the mark. The actual recomputation occurs recursively in a post-order fashion: once a marked node's children have finished recomputing their own values, the current node recomputes its aggregate using $f$ and clears its mark. Because the set of marked nodes is exactly the union of at most $k$ paths to the root, we know that this two-step phase requires only $O(k \log(1+n/k))$ \revision{expected} work and $O(\log n)$ depth with high probability~\cite{Blelloch1998}, ensuring the overall asymptotic cost of the batch operations remains unaffected.

Range queries in the batch-parallel treap can be answered efficiently without mutating the underlying tree structure. To process a single range query for the subsequence between elements $i$ and $j$, the algorithm identifies the lowest common ancestor (LCA) of these elements. From this LCA, the algorithm traverses the boundary paths down to $i$ and $j$, accumulating the augmented values of the maximal disjoint subtrees that exactly cover the query range, alongside the values of the individual nodes on those paths. Because the treap has $O(\log n)$ depth with high probability, this read-only traversal correctly evaluates the aggregate in $O(\log n)$ time.

\subsection{MOJOS Queries in Treaps} \label{app:treap_query}

Here we describe the following batch query operations for treaps. They also extend directly to zipzip trees and biased zipzip trees.

\begin{itemize}[leftmargin=15pt]
    \item $\batchgetrep(Q)$
    \item $\batchgetpred(Q) / \batchgetsucc(Q)$
    \item $\batchgethead(Q) / \batchgettail(Q)$
\end{itemize}

\myparagraph{Batch Representative Queries}
The $\batchgetrep$ algorithm is similar to the algorithm used for skip lists~\cite{tseng2019batch} and other data structures.
First spawn a parallel thread for each element in the batch. That thread traverses up from the element, marking a boolean field in each node with an atomic CAS operation to set the mark as true. If the CAS fails, the thread stops. If the CAS succeeds it proceeds to the next parent in the treap.
Then for each root of the treaps containing these elements, proceed top-down. For each marked child, recursively in parallel, remove the mark and continue to your children. Once you reach a node for an element in the query input batch, write the name of the original root.

The work of this algorithm is proportional to the number of nodes on the union of $k$ paths to the root in a collection of treaps, and the depth is proportional to the maximum depth of an element.
Applying known treap properties, the work is $O(k \log (1+n/k))$ \revision{in expectation} and the depth is $O(\log n)$ with high probability.

\myparagraph{Batch Predecessor and Successor Queries}
The sequential algorithm for $\getsucc$ of an element $x$ in a treap is as follows. If $x$ has a right child, return the head of the subtree rooted at the right child of $x$ (by repeatedly following left child pointers).
If $x$ has no right child, we must return the first ancestor of $x$ that is to the right. This is done by traversing up through parent pointers until a node is found where the previous node is its left child. If this never occurs, there is no successor and the algorithm returns $\vname{null}$.
The $\getpred$ algorithm is defined symmetrically.

Since the sequential algorithm is read-only, $\batchgetsucc$ and $\batchgetpred$ can be implemented correctly by running the sequential algorithm independently in parallel for each element in the input batch.
The following lemma proves that this method actually takes $O(k \log (1+n/k))$ \revision{expected} work. The depth is $O(\log n)$ with high probability.

\begin{lemma}
    $\batchgetpred$ and $\batchgetsucc$ operations in a treap take $O(k \log (1+n/k))$ \revision{expected} work and $O(\log n)$ depth with high probability.
\end{lemma}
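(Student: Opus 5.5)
We bound the cost of a single query and then sum over the batch. The key observation is that the sequential $\getsucc(x)$ walk traces at most two monotone paths.

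\emph{Depth.} Take an element $x$ and suppose first that $x$ has a right child. The walk descends from that child along left pointers, so it visits only nodes on the path from the root down to $\getsucc(x)$. If $x$ has no right child, the walk climbs parent pointers until it reaches the first ancestor having the previous node as its left child. That ancestor is exactly $\getsucc(x)$, so every node visited lies on the spine of $x$.

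In both cases the number of steps is at most $\mathrm{depth}(x)+\mathrm{depth}(\getsucc(x))+O(1)$. By the standard treap fact that every element has depth $O(\log n)$ \whp{}, each query costs $O(\log n)$ \whp{}. A union bound over the $k\le n$ queries keeps this \whp{}, and the parallel fork adds only $O(\log k)$ depth.

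\emph{Work.} A naive sum gives only $O(k\log n)$, which is too weak. We need the sharper bound $O(k\log(1+n/k))$, so the naive argument has to be refined. The cleanest route is a charging argument to the paths $x$ and $\getsucc(x)$ themselves, not to their full spines.

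Let $z$ be the ancestor of $x$ and $\getsucc(x)$ that has the larger priority. Since $x$ and $\getsucc(x)$ are consecutive, one of them is an ancestor of the other. So the walk traverses exactly the treap path between $x$ and $\getsucc(x)$, whose length is their depth difference $|\mathrm{depth}(x)-\mathrm{depth}(\getsucc(x))|$.

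For adjacent elements in a treap, this distance has $O(1)$ expected value. It equals the number of nodes on the right spine of the left subtree of $\getsucc(x)$, or on the left spine of the right subtree of $x$. Computed over random priorities, the expected length of such a spine is $O(1)$ by the classic Seidel--Aragon analysis: the element at rank distance $i$ lies on the spine with probability $O(1/i^2)$, and $\sum_i 1/i^2$ converges.

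This gives $O(k)$ expected work in total, which is trivially $O(k\log(1+n/k))$.

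\emph{Main obstacle.} The main step is making this expectation argument rigorous given that the query set $Q$ is arbitrary. The $O(1/i^2)$ bound must hold for every fixed element, independently of which elements are queried. This is true because the priorities are independent of the sequence structure, and the queries are chosen without knowledge of the priorities; linearity of expectation then finishes the argument.

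If that independence assumption is too delicate for the LCT setting with biased weights, the fallback is to charge each walk to its spine of $x$. The walks then visit only nodes in the union of $2k$ element-to-root paths, but this fallback does not bound work by itself: unlike $\batchgetrep$, which stops a walk at the first already-marked node, these walks run independently with no marking, so overlapping nodes are traversed repeatedly. The fallback therefore needs an extra argument that the shared upper portions of these paths are not retraversed too often. One route is to adapt the top-zone overlap counting from the split-work bound (Lemma~\ref{lem:batch_split_helper}), since each ascent follows a monotone direction before stopping, which limits how much independent walks can overlap. The result cited for $\batchgetrep$~\cite{Blelloch1998}, which bounds the size of a union of $O(k)$ paths by $O(k\log(1+n/k))$, covers only the distinct-node count and not this multiplicity.
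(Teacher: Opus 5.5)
\textbf{Gap: the sequence boundary.} Your per-query $O(1)$ bound fails there. The $O(1)$-expected-length argument is valid whenever $\getsucc(x)$ exists. In that case the walk is either the left spine of the right subtree of $x$ or the right spine of the left subtree of $\getsucc(x)$, and an element at rank distance $i$ lies on such a spine with probability $1/(i(i+1))$.

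The argument breaks when $x$ is the tail of its sequence (the head, for $\getpred$). Then $x$ has no right child, and every ancestor is entered from its right child. So the walk climbs all the way to the root before returning $\vname{null}$. Its cost is the depth of the tail, i.e.\ the length of the right spine of that treap, which is $\Theta(\log m)$ in expectation for a sequence of $m$ elements.

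Hence the claimed $O(k)$ total is false. Querying the tails of $k$ sequences of $n/k$ elements each costs $\Theta(k\log(n/k))$ in expectation. This is exactly where the $\log(1+n/k)$ factor in the statement comes from.

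The repair is short. Each sequence has a single tail, so the boundary queries lie in $j\le k$ distinct treaps of sizes $m_1,\dots,m_j$ with $\sum_i m_i\le n$. Concavity then gives $\sum_i O(\log(1+m_i)) = O(k\log(1+n/k))$. The remaining queries contribute $O(k)$ by your argument.

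\textbf{Comparison with the paper's route.} With that patch your proof is a genuinely different route. The paper charges every walk to the union of the at most $2k$ root paths of the queried elements and their successors. It then invokes the standard $O(k\log(1+n/k))$ bound on the size of such unions. Tail walks are themselves root paths, so the boundary case is covered automatically.

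Your fallback worry about repeated traversal does not arise. Consider an edge entering a subtree with key range $[a,b]$. It can lie only on the walks started at $a-1$ and at $b$, where a tail walk counts as the one started at $b$. So every edge is traversed by at most two walks, and the work is within a constant factor of the number of distinct nodes.

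What your approach buys is a finer picture: non-boundary queries cost $O(1)$ each in expectation, so the logarithmic term comes entirely from walks that climb to a root. The paper's argument is shorter and reuses a union-of-paths bound it needs elsewhere anyway.
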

\begin{proof}
    The total work is the sum of the nodes traversed by the $k$ independent queries. Each $\getsucc$ (or $\getpred$) query for an element $x$ strictly traverses the tree path from $x$ to its successor (or predecessor). Because the key intervals between distinct queried elements and their immediate successors are non-overlapping, these tree paths are entirely edge-disjoint. Thus, the total work is proportional to the total number of unique nodes in the union of these paths.
    This set of nodes is a subset of the union of all paths from the $k$ queried elements and their corresponding successors (at most $2k$ distinct nodes) to the root. The union of these $2k$ paths contains $O(2k \log(1 + n/(2k))) = O(k \log (1+n/k))$ nodes \revision{in expectation~\cite{Blelloch1998}}, which bounds the total work.
    Since the algorithms run independently in parallel, the span (depth) is simply bounded by the maximum length of a single query path. It is well-known that the maximum depth of a treap is $O(\log n)$ with high probability, guaranteeing an $O(\log n)$ depth for the batch operations.
\end{proof}

It is also possible to directly store a predecessor and successor pointer for each element in the treap. If this is the case, predecessor and successor queries take $O(1)$ time, and batch queries take $O(k)$ work and $O(\log k)$ depth.
Maintaining this pointers only requires the join, split, batch join, and batch split functions to update these pointers correspondingly when two successive elements are joined or split apart.

\myparagraph{Batch Head and Tail Queries}
Batch head and tail queries can be done using a batch representative query, and then for each representative finding the head or tail of its treap. By a similar analysis as before, this takes $O(k \log (1+n/k))$ \revision{expected} work and $O(\log n)$ depth with high probability.

Additionally, it is possible to augment treaps such that each node stores the head and tail of the subtree rooted at it in the treap. As described previously, this augmentation can be maintained efficiently during updates to the treap. This means that given the root (representative) of a treap, the head and tail can be found in $O(1)$ time. This will be a useful property in our later analysis of our link-cut tree algorithms.

\subsection{Batch-Parallel Biased Zipzip Trees} \label{app:zipzip}
Biased zipzip trees maintain a weight $w(x)$ for each element $x$. They provide the property that the expected depth of an element $x$ is $O(\log(W/w(x)))$ where $W$ is the total weight.

All of our algorithms for batch-parallel treaps can be used directly to implement batch-parallel biased zipzip trees. The only difference is in how the priorities between two elements are compared. Our algorithms can easily be changed to compare priorities to respect the rules of biased zipzip trees.
The work of batch join, batch split, and batch representative queries is proportional to the number of nodes on the union of the $O(k)$ paths to the root. If the total weight, $W$, in the biased zipzip tree is $W=O(n)$, this work is $O(k \log (1+n/k))$ in expectation. The depth of each algorithm is proportional to the maximum length of one of the paths.
If $W=O(n)$, the depth is $O(\log n)$ \whp{}.

Since biased zipzip trees maintain weights for each element, it is natural to provide the ability to change the weight of an element. We define the following batch operation for updating weights:
\begin{itemize}[leftmargin=15pt]
    \item $\batchupdate(U)$: given a list of elements in a biased zipzip trees, and new weights for each element, update the weight of each element to its new value.
\end{itemize}
Since the structure of the biased zipzip tree is dependent on the weights of the elements, this operation requires some structural modifications. We implement the operation as follows.
First determine the predecessor and successor for each element. Use a $\batchsplit$ operation to split each element from its predecessor and successor. This results in each of the updated elements being in their own biased zipzip tree. Now update the weight field in each node in parallel.
Then use a $\batchjoin$ operation to undo the splits done in the first part.
The $\batchjoin$ on the reweighted elements ensures that the structure of the new biased zipzip tree respects the new weights.
By similar arguments as the proofs for the cost of $\batchjoin$ and $\batchsplit$ in treaps, the work of this operation is proportional to the number of nodes on the union of $k$ paths, and the depth is proportional to the maximum depth of one of these elements in the biased zipzip tree.

\section{Parallel Ordered Set} \label{app:set}

\subsection{Parallel Ordered Set in MOJOS}

\myparagraph{Implementing Batch Find}
Here we describe how treaps (with $n$ elements) implement the $\batchfind$ operation on $k$ values in $O(k \log (1+n/k))$ \revision{expected} work and $O(\log n)$ depth \whp{}.

\begin{itemize}[leftmargin=15pt]
    \item $\find(x) / \batchfind(Q)$: Given a sequence representative, returns the element with the maximum key less than or equal to $x$. The batch variant processes a pre-sorted list of queries.
\end{itemize}

A single $\find(x)$ operation starts from the root as $\vname{curr}$. If the value in node $\vname{curr}$ equals $x$, return $\vname{curr}$.
Else if the target value $x$ is less than the value in node $\vname{curr}$, proceed recursively into the left subtree setting $\vname{curr} = \vname{curr}.lchild$. If no such child exists return the last visited node for which we proceeded into the right subtree (or $\vname{null}$ if no such node exists).
Otherwise $x$ is greater than the value at $\vname{curr}$, so proceed recursively into the right subtree.
This takes $O(\log n)$ time \whp{}, because of the maximum depth of the treap.

The typical algorithm for $\batchfind$ on a pre-sorted list of values works as follows.
At each node, use binary search on the node's value to partition the query list into two lists with values less than or greater than this node (for queries with value equal, return this node).
Then recurse into the left child and right child with the lesser list and the greater list respectively.
This is proven to take $O(k \log (1+n/k))$ \revision{expected} work and $O(\log n \log k)$ depth \whp{}. The depth incurs an additional $\log k$ factor for binary search at every level of the treap.

Our improved algorithm only takes $O(\log n)$ depth \whp{}.
First we traverse down $\log k$ levels at a time, extracting the values of every node up to depth $\log k$ into a sorted list. This list has size at most $O(k)$, so the work and depth of this step are $O(k)$ and $O(\log k)$ respectively.
Then we use a parallel merge~\cite{jaja1992parallel} to merge the this list of treap values with our query batch (for equal values treat treap values as lower than query values).
Next, we can use a parallel scan~\cite{blelloch1990pre} to determine the index of the next treap value after each treap value in the merged list.
The range of values from each treap value in the merged list up until but not including the next treap value, represents a sublist of query values that specifically will fall into one of the subtrees rooted at a node with depth $\log k + 1$. This subtree is rooted at either the right child of the node with treap value directly before this sublist in the merged order, or the left child of the node with treap value directly to the right of this sublist (whichever one of these actually has depth $\log k + 1$).
Each sublist proceeds to search in their corresponding subtree in the second phase (or values that were equal to the treap value proceeding the sublist return the node with that treap value).
Up to this point all operations have taken $O(k)$ work and $O(\log k)$ depth.

\revision{Then we simply run the same algorithm recursively in parallel for each subtree and the range of values to be found in that subtree (note that the value $k'$ in the recursive sub-problem is the new number of elements in this subtree, not the original batch size).}

\revision{
To analyze the complexity, observe that in any recursive phase with $k'$ queries, the algorithm extracts and processes up to $\log k'$ levels of the current subtree using $O(k')$ work and $O(\log k')$ depth. Because $O(k')$ work is distributed across $k'$ queries traversing up to $\log k'$ levels, the cost amortizes to $O(1)$ work per query per tree level descended. Consequently, the total work across all recursive steps is asymptotically proportional to the sum of the individual search path lengths for all $k$ queries. Since the expected sum of search path lengths for $k$ keys in a randomly structured treap of size $n$ is bounded by $O(k \log (1+n/k))$, the total expected work of our improved algorithm matches this $O(k \log (1+n/k))$ bound. Furthermore, because each phase requires $O(\log k')$ parallel depth to descend $O(\log k')$ tree levels, the parallel depth incurred along any search path is directly proportional to the length of that path in the tree. As the maximum depth of a treap with $n$ elements is $O(\log n)$ \whp{}, the total parallel depth of the algorithm is strictly bounded by $O(\log n)$ \whp{}.
}

\myparagraph{Optimal Batch Insertion and Batch Deletion in MOJOS}
The algorithm for batch insertion first calls $\batchfind$ on the list of values to insert. It filters out values that are already in the treap.
Then, for every node returned from $\batchfind$ its successor is determined using $\batchgetsucc$.

Then $\batchsplit$ is used to separate each of the found nodes from its successor.
Next, treap nodes are initialized for all of the new elements to insert.
Finally, $\batchjoin$ is used to join each new element into the treap. Specifically, each for each element $x$, two joins are created (and duplicates are removed).
The first join is either $(\find(x), x)$ or if the value $a$ preceding $x$ in the join batch is greater than $\find(x)$, the join is $(a,x)$.
The second join is either $(x, \getsucc(\find(x))$ or if the value $b$ succeeding $x$ in the join batch is less than $\getsucc(\find(x))$, the join is $(x,b)$.

It is easy to see that with the costs of the MOJOS operations for batch-parallel treaps, and the cost of $\batchfind$ described earlier, these simple algorithms achieve both optimal work and depth: $O(k \log (1+n/k))$ \revision{expected} work and $O(\log n)$ depth \whp{}.

The algorithm for batch deletion is similar. First call $\batchfind$ to find the nodes in the treap corresponding to each value in the input batch, discarding values that do not appear in the treap.
Next use $\batchsplit$ to disconnect the predecessor and successor of each node that will be deleted.
Nodes for deleted values are freed.

To reconnect the split treaps, we need to join the node to the left of every range of deleted nodes to the element on the right. These joins can be computed with list contraction, and executed with $\batchjoin$.
Once again it is easy to see that these simple algorithms achieve both optimal work and depth: $O(k \log (1+n/k))$ \revision{expected} work and $O(\log n)$ depth \whp{}.

\subsection{Ordered Set Experimental Results}

Here we experimentally evaluate our implementation of the parallel ordered set algorithms from the previous section (MOJOS-Set). We compare against parallel augmented maps (PAM)~\cite{pam}, the state-of-the-art for parallel ordered sets.
Our experiments indicate that (despite little optimization) our simple approach in MOJOS-Set achieves reasonable performance compared to the highly optimized PAM implementation.
The results are in Table~\ref{tab:ordered_set_experiments}.
MOJOS-Set is $3.906\times$ slower on average (geometric mean) than PAM for batch insertions, and $3.257\times$ slower on average (geometric mean) for batch deletions.

\begin{table}[ht]
    \centering
    \begin{tabular}{l l l l l}
        \toprule
        \multicolumn{5}{c}{Batch Insertion} \\
        \midrule
        Batch Size & $10^6$ & $5 \cdot 10^6$ & $10^7$ & $5 \cdot 10^7$ \\
        \midrule
        MOJOS-Set & 2.857 & 1.330 & 0.813 & 0.485 \\
        PAM & 0.801 & 0.340 & 0.257 & 0.092 \\
        \bottomrule
    \end{tabular}
    \begin{tabular}{l l l l l}
        \toprule
        \multicolumn{5}{c}{Batch Deletion} \\
        \midrule
        Batch Size & $10^6$ & $5 \cdot 10^6$ & $10^7$ & $5 \cdot 10^7$ \\
        \midrule
        MOJOS-Set & 3.486 & 1.726 & 1.184 & 0.537 \\
        PAM & 1.353 & 0.696 & 0.457 & 0.079 \\
        \bottomrule
    \end{tabular}
    \caption{The total time in seconds for batch insertions and batch deletions with varying batch size and $n=10^8$ for MOJOS-Set compared to PAM.}
    \label{tab:ordered_set_experiments}
\end{table}
\section{Batch-Parallel Euler Tour Trees}

\subsection{MOJOS ETT Batch Cut} \label{app:ett_cut}

The batch cut algorithm first concurrently splits before and after the Euler tour tree element for each cut edge in both directions. In MOJOS, each component that will receive at least one split, must join its tail to its head after the splits happen (unless there will be split just to the right of the tail). This assures that the Euler tour sequences after this phase of splits are exactly the same as in the original algorithm.
Algorithmically, this is accomplished by calling $\getrep$ for each split, semisorting it to group by component, and then for each component store its tail and head before doing the splits (also checking that there is no split call on the tail element already).

The next part of the algorithm again joins the Euler tours fragments together into cyclic tours using concurrent join calls. In MOJOS, we use the exact same method as for batch link to omit a single join per component, avoiding cycles.

\subsection{MOJOS Batch-Parallel ETT Analysis} \label{app:ett_analysis}

\ettanalysis*
\begin{proof}
    We analyze the work and depth of the batch link algorithm step-by-step. Let $k = |\vname{links}|$ be the batch size, and $n$ be the total number of vertices in the Euler tour tree.
    We begin by analyzing the initialization phase where new directed edge elements are instantiated for the incoming links. Creating these $2k$ elements and inserting them into the hash table~\cite{gil1991towards} takes $O(k)$ expected work and $O(\log k)$ depth \whp. Subsequently, the algorithm identifies the necessary split points by querying the underlying sequence data structure for the successor, tail, and head of the affected vertices. This results in at most $2k$ queries.
    By our assumption on the batch-dynamic sequence, processing this batch of $O(k)$ queries takes $O(k \log(1+n/k))$ \revision{expected} work and $O(\log n)$ depth \whp{}.
    To group the newly created directed edges by their source vertex, the algorithm employs a parallel semisort. This sorting step requires $O(k)$ expected work and $O(\log k)$ depth \whp{}.
    
    Once the edges are grouped, the algorithm iterates over them in parallel to determine the necessary joins between the sequence fragments. Performing $O(1)$ work per iteration across $2k$ iterations takes $O(k)$ work and $O(\log k)$ depth.
    Executing the gathered splits involves calling the batch split operation on the underlying sequence. Since there are at most $O(k)$ splits collected, this operation takes $O(k \log(1+n/k))$ \revision{expected} work and $O(\log n)$ depth \whp{}.
    
    Before connecting the fragments, the algorithm must remove potential cycles among the joins. This requires finding the representative for each of the $O(k)$ join endpoints. Mapping endpoints to their representatives is another batch of $O(k)$ sequence queries, taking $O(k \log(1+n/k))$ \revision{expected} work and $O(\log n)$ depth \whp{}. Identifying and omitting the cycle-inducing joins is done via parallel list contraction on a graph of $O(k)$ components, which takes $O(k)$ work and $O(\log k)$ depth \whp{}. Finally, executing the batch join on the remaining $O(k)$ valid joins takes $O(k \log(1+n/k))$ \revision{expected} work and $O(\log n)$ depth \whp{}.
    
    Summing the costs of these phases, the expected work is dominated by the batch sequence operations and the randomized semisort, yielding $O(k \log(1+n/k))$ expected work. The depth is bounded by the sequence operations, resulting in $O(\log n)$ depth \whp{}.

    We analyze the work and depth of the batch cut algorithm by tracing its execution. Let $k$ be the number of edges to be cut, and $n$ be the total number of vertices in the Euler tour tree.
    First, the algorithm identifies the necessary split points to isolate the edges being removed. For each of the $k$ cut edges in both directions, it concurrently identifies splits before and after the corresponding Euler tour tree element, taking $O(k)$ work and $O(\log k)$ depth. To determine which components are affected and ensure they maintain a valid Euler tour structure, the algorithm calls $\getrep$ for each split point. Processing these $O(k)$ queries to the underlying sequence data structure takes $O(k \log(1+n/k))$ \revision{expected} work and $O(\log n)$ depth \whp{}.
    
    To group the splits by their respective components, the algorithm uses a parallel semisort based on the component representatives. This sorting step requires $O(k)$ expected work and $O(\log k)$ depth \whp{}.
    Once grouped, the algorithm processes each affected component in parallel to identify its tail and head elements for the corrective join, while verifying that the tail element is not already scheduled for a split. Finding these boundaries requires another batch of $O(k)$ sequence queries, taking $O(k \log(1+n/k))$ \revision{expected} work and $O(\log n)$ depth \whp{}.
    With the splits and the tail-head joins fully identified, the algorithm executes the batch split operation on the underlying sequence to isolate the cut edges. For the $O(k)$ collected splits, this operation takes $O(k \log(1+n/k))$ \revision{expected} work and $O(\log n)$ depth \whp{}.
    
    Finally, the algorithm reconnects the remaining fragments into valid cyclic tours using concurrent joins. Just as in the batch link procedure, it must avoid creating cycles by omitting exactly one join per component. This cycle removal requires mapping elements to their representatives via $O(k)$ $\getrep$ queries, taking $O(k \log(1+n/k))$ \revision{expected} work and $O(\log n)$ depth \whp{}, followed by a parallel list contraction on $O(k)$ components, taking $O(k)$ work and $O(\log k)$ depth \whp{}. Executing the final batch join operation on the filtered list of $O(k)$ joins takes $O(k \log(1+n/k))$ \revision{expected} work and $O(\log n)$ depth \whp{}.
    
    Summing the costs of these phases, the expected work is dominated by the batch sequence operations and the randomized semisort, yielding $O(k \log(1+n/k))$ expected work. The depth is bounded by the sequence operations, resulting in $O(\log n)$ depth \whp{}.
\end{proof}

\subsection{ETT Augmented Values and Subtree Queries} \label{app:augmented_ett}
Augmented Euler tour trees associate a data value from a domain $D$ with each vertex in the tree. Given an associative and commutative function $f: D \times D \to D$, augmented Euler tour trees can answer subtree queries which return the result of $f$ applied over the values associated with all vertices in a particular subtree.

Our MOJOS ETT algorithms also support augmented values and subtree queries assuming that the batch-dynamic sequence data structure used supports augmented values and range queries (see Appendix~\ref{app:augmented_treap} for details). The algorithms are the same as \cite{tseng2019batch} which just reduces subtree queries to  range queries on the sequence.

\subsection{Cyclic Treap for Batch-Parallel ETT} \label{app:cyclic_treap}

Algorithm~\ref{alg:cyclicconcurrentjoin} shows the modified concurrent join algorithm that supports inputs that cause cycles in the sequences represented by the treaps. The changes from the original concurrent join algorithm (Algorithm~\ref{alg:concurrentjoin}) are highlighted in green.

\begin{algorithm}[ht]
\caption{$\fname{CyclicConcurrentJoin}(x,y)$}
\label{alg:cyclicconcurrentjoin}
$X = x$, $Y = y$\;
\While{$X$ \textbf{and} $Y$ \colorbox{mygreen}{\textbf{and} $X \neq Y$}}{
    \If{$X.prio < Y.prio$}{
        \While{$X.par$ \textbf{and} $X.par.prio < Y.prio$}{
            $X = X.par$\;
        }
        $next_X = X.par$\;
        $Y.lchild = X$\;
        \If{$\mathsf{CAS}(\&X.par, next_X, Y)$}{
            $X = next_X$\;
        }
    }
    \ElseIf{$X.prio > Y.prio$}{
        \While{$Y.par$ \textbf{and} $Y.par.prio < X.prio$}{
            $Y = Y.par$\;
        }
        $next_Y = Y.par$\;
        $X.rchild = Y$\;
        \If{$\mathsf{CAS}(\&Y.par, next_Y, X)$}{
            $Y = next_Y$\;
        }
    }
}
\tikzmarkin[fill=mygreen, draw=mygreen]{highlight1}(5,-0.4)(0,0.4)
\If{$X = Y$}{
    \If {$X.lchild$ \textbf{and} $X.lchild$ \text{not linked from the left}}{
        $X.lchild = \vname{null}$
    }
    \If {$X.rchild$ \textbf{and} $X.rchild$ \text{not linked from the right}}{
        $X.rchild = \vname{null}$
        \tikzmarkend{highlight1}
    }
}
\end{algorithm}

In a standard batch-parallel setting, cycle-inducing joins either cause infinite loops during root traversals or fundamentally corrupt the tree structure. This is why the MOJOS algorithm requires preprocessing steps to filter out cycle-forming joins. Algorithm~\ref{alg:cyclicconcurrentjoin} bypasses this expensive preprocessing by automatically detecting and resolving cycles on the fly during the concurrent join phase.

The core modification lies in the early termination condition $X \neq Y$ and the subsequent cleanup block (highlighted in green). When a batch of joins inadvertently creates a cycle (for instance, linking the tail of an Euler Tour sequence back to its head), the concurrent left and right traversal pointers, $X$ and $Y$, will traverse the sequence and eventually collide at the exact same node.

Upon detecting $X = Y$, the algorithm immediately halts the traversal, preventing an infinite loop. However, because this is a concurrent environment, threads may have partially installed child pointers via $\mathsf{CAS}$ before realizing the structure had looped. The cleanup block resolves this by inspecting the child pointers of the collision node $X$. It verifies the integrity of $X.lchild$ and $X.rchild$ by checking if they are properly linked from the expected logical directions. If a child pointer does not have a corresponding valid sequence relationship (i.e., it represents the erroneous edge that closes the cycle), the algorithm safely nullifies it. This natively guarantees that exactly one join operation in the cycle-inducing component is discarded, leaving a perfectly valid, acyclic Euler Tour sequence without needing prior preprocessing.

\subsection{High Priority ETT Vertex Elements} \label{app:high_prio_vertex_ett}

Batch-parallel Euler tour trees have elements for both vertices and the directed edges between them. For a tree containing $n$ vertices, the ETT must maintain $n$ vertex elements and up to $2n-2$ edge elements, totaling up to $3n-2$ elements in the underlying treap. In a standard augmented treap, every single element must allocate memory to store the augmented aggregate of its respective subtree. This can greatly increase the memory usage, especially if the augmented values are particularly large.
Using many augmented values or large augmented values is common in use cases such as dynamic connectivity~\cite{holm2001poly,acar2019parallel}, and sketch-based dynamic connectivity~\cite{kapron2013dynamic,gibb2015dynamic} where each vertex has a large sketch object as augmented value.

We exploit the max-heap property of the treap to mathematically eliminate this overhead. By setting the most significant bit of the randomized priority field to $1$ for all vertex elements and $0$ for all edge elements, we enforce the strict invariant that $prio(v) > prio(e)$ for any vertex $v$ and edge $e$.
Because the treap structurally enforces that parent priorities must be greater than child priorities, this invariant forces all $n$ vertex elements to percolate upward. They form a contiguous cluster at the top of the tree that guarantees the root is always a vertex element. Conversely, all $2n-2$ edge elements appear in the lower branches of the treap.

The critical consequence of this invariant is that no edge element can ever contain a vertex element in its subtree. Because ETT aggregates typically track vertex-specific properties, the subtrees rooted at any edge element will always evaluate to the identity element. Therefore, we do not need to physically store or maintain aggregate fields in the edge elements at all. We can only allocate memory for the aggregate fields in the $n$ vertex elements, dramatically reducing the overall memory footprint while preserving the update cost.

\myparagraph{Experimental Results}
We study this experimentally evaluate this optimization by constructing an ETT with $n=10^8$ and augmented values of size 32 bytes.
The space usage of MOJOS-ETT with this optimization was approximately 18.1GB. The regular MOJOS-ETT without the optimization used approximately 30.9GB.
For comparison, the SL-ETT (skip list) used approximately 38.1GB.
As augmented value size increases, the percentage space savings of this optimization also increases, because the space in augmented values begins to dominate the space in the ETT structure itself.
\section{Batch-Parallel Link-Cut Trees}

\subsection{Heavy Element Set} \label{app:lct_child_set}

Here we describe a data structure that maintains a dynamic set $S$ of non-negative integers in the range $[1,n)$ and supports insertions, deletions, and \textit{heavy element} queries (finding $x \in S$ such that $x \ge \frac{1}{2} \sum S$). The update operations run in expected $O(1)$ time, while queries run in worst-case $O(1)$ time. The space usage is $O(|S|)$.

The structure partitions elements into buckets of geometrically increasing size class $B_0, B_1, \dots, B_{\lfloor \log n \rfloor}$. A bucket $B_k$ stores all elements $x \in S$ satisfying $2^k \le x < 2^{k+1}$. We maintain:
\begin{enumerate}
    \item An array of hash tables representing the non-empty buckets $B_k$.
    \item A bitmap $M$ where the $k$-th bit is 1 iff $B_k$ is non-empty. Since there are $\lfloor \log n \rfloor$ size classes, this fits into a single word.
    \item The total sum $T = \sum_{x \in S} x$.
\end{enumerate}

To insert or delete an element $x$, we compute its bucket index $k = \lfloor \log_2 x \rfloor$. Using the bitmap we can compute the index of the bucket in the array in $O(1)$ time. We update the hash table $B_k$, the bitmap $M$, and the total sum $T$. Since we use hash tables, the cost per insertion and deletion is $O(1)$ in expectation.

To find a heavy element, first find the index $k$ of the largest non-empty bucket using the most significant bit of $M$. If $|B_k| \ge 4$, return $\vname{null}$. If $|B_k| < 4$, iterate through the elements in $B_k$. If any element $x$ satisfies $2x \ge T$, return $x$. Otherwise, return $\vname{null}$.
Since the algorithm iterates through $B_k$ only when $|B_k| < 4$, the query time is worst-case $O(1)$.

The total space complexity is proportional to the number of elements stored, $O(|S|)$. The space of the array of hash tables is $O(|S|)$ since it only stores non-empty buckets. The primary storage consists of the hash table nodes, and since each element $x \in S$ resides in exactly one bucket's hash table, the total space used by all tables is linear in $|S|$.

\myparagraph{Correctness}
The query algorithm's correctness relies on two observations regarding a heavy element $x$:
\begin{lemma}
If a heavy element $x$ exists, it must reside in the largest non-empty bucket $B_k$.
\end{lemma}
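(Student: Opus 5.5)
The approach is a contradiction argument on bucket indices. Let $x$ be a heavy element, so $2x \ge T$ where $T = \sum_{y \in S} y$. Let $k$ be the largest index with $B_k$ nonempty, and let $j = \lfloor \log_2 x \rfloor$ be the bucket index of $x$, so $x \in B_j$ and $j \le k$ by maximality of $k$. Suppose for contradiction that $j < k$.

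The key step is to compare $x$ with an arbitrary element $z \in B_k$. Membership in the buckets gives $x < 2^{j+1} \le 2^k \le z$. Since $x \ne z$ are distinct elements of $S$, both contribute to $T$, and all elements are nonnegative, so
\[
T \ge x + z > x + x = 2x .
\]
This contradicts $2x \ge T$. Hence $j = k$, that is, $x$ lies in the largest nonempty bucket $B_k$.

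The only point that needs care is distinctness: $x$ and $z$ must be counted separately in $T$. This holds because they lie in different buckets, and the buckets are disjoint size classes. Elements are at least $1$ on the range $[1,n)$, so the bucket index $\lfloor \log_2 x\rfloor$ is well defined. Beyond that the argument is a direct inequality chain, with no real obstacle.
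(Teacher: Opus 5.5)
Your proof is correct and matches the paper's argument: assume $x$ lies in a lower bucket than some nonempty bucket, note that every element $z$ of the higher bucket satisfies $z > x$, and conclude $T \ge x + z > 2x$, which contradicts $2x \ge T$. Your remark that $x$ and $z$ are distinct, and hence counted separately in $T$, because they sit in different buckets is a small point the paper leaves implicit.
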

\begin{proof}
Suppose $x \in B_k$ is heavy, but there exists a non-empty bucket $B_j$ with $j > k$. Any element $y \in B_j$ satisfies $y \ge 2^{k+1} > x$. Since $T \ge x+y > 2x$, the condition $2x \ge T$ is violated. Thus, no such $y$ can exist.
\end{proof}

\begin{lemma}
If the largest non-empty bucket $B_k$ contains 4 or more elements, no heavy element exists.
\end{lemma}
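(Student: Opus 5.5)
Let $B_k$ be the largest non-empty bucket and suppose $|B_k| \ge 4$. I will show that no element of $S$ can satisfy $2x \ge T$. First I use the previous lemma: any heavy element would have to lie in $B_k$. So it is enough to rule out every $x \in B_k$.

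Fix any $x \in B_k$. By the definition of the bucket, $x < 2^{k+1}$. Every other element $y \in B_k$ satisfies $y \ge 2^k$. Since $|B_k| \ge 4$, there are at least three elements of $B_k$ other than $x$. The elements of the other buckets are non-negative, so they only increase the total. Hence
\[
T - x \;\ge\; \sum_{y \in B_k,\, y \ne x} y \;\ge\; 3 \cdot 2^k .
\]
Combining this with $x < 2^{k+1} = 2 \cdot 2^k$ gives
\[
T - x \;\ge\; 3 \cdot 2^k \;>\; 2 \cdot 2^k \;>\; x .
\]
So $T > 2x$, and the heavy condition $2x \ge T$ fails for $x$. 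Because $x \in B_k$ was arbitrary, $B_k$ contains no heavy element, and by the previous lemma neither does $S$.

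The only subtle point is why the threshold is $4$ rather than $3$. With three elements, the argument gives only $T - x \ge 2 \cdot 2^k$, and this does not exceed $x$, which can be close to $2^{k+1}$. So $|B_k| \ge 4$ is exactly what makes the strict inequality go through. There is no real obstacle beyond getting these strict versus non-strict bucket bounds right.
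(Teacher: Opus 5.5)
Your proof is correct and is essentially the paper's counting argument. Both bound every element of $B_k$ below by $2^k$ and the candidate $x$ above by $2^{k+1}$. You also make explicit the appeal to the previous lemma for elements outside $B_k$, which the paper leaves implicit.

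Your closing remark about the threshold is false, however. Separating $x$ from the other elements as you do, three elements of $B_k$ already give $T - x \ge 2 \cdot 2^k = 2^{k+1} > x$. So your own argument proves the statement whenever $|B_k| \ge 3$.

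The constant $4$ comes from the paper's cruder accounting, $2^{k+2} > 2x \ge T \ge m \cdot 2^k$. That chain bounds the same quantity $x$ by $2^{k+1}$ on the left but only by $2^k$ inside $T$, losing one element's worth of slack. The threshold $4$ is therefore sufficient but not tight. With two elements, e.g.\ $2^{k+1}-1$ and $2^k$ for $k \ge 1$, a heavy element does exist, so $3$ is the sharp threshold.

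This does not affect the truth of the lemma as stated or the $O(1)$ query. Only your explanation of why $4$ is needed is wrong.
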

\begin{proof}
Let $|B_k| = m$. Since $\forall y \in B_k, y \ge 2^k$, the total sum $T \ge m \cdot 2^k$.
For $x \in B_k$ to be heavy, we require $2x \ge T$. Since $x < 2^{k+1}$, we have:
\[ 2(2^{k+1}) > 2x \ge T \ge m \cdot 2^k \implies 2^{k+2} > m \cdot 2^k \implies 4 > m \]
Thus, if $m \ge 4$, a heavy element is mathematically impossible.
\end{proof}

\myparagraph{Parallel Heavy Element Set}
We extend the heavy element set data structure to support batches of $k$ insertions or deletions, in $O(k)$ work in expectation and $O(\log k)$ depth with high probability. The space usage is still $O(|S|)$. The core component is parallel hash tables to allow batch insertion and batch deletion. We maintain:

\begin{enumerate}
    \item An array of parallel hash tables~\cite{gil1991towards} representing the buckets $B_0, \dots, B_{\lfloor \log n \rfloor}$.
    \item A set of atomic integers representing the size $|B_i|$ of each bucket and the total sum $T$.
    \item A global bitmap $M$ indicating non-empty buckets.
\end{enumerate}

To process a batch of $k$ insertions or deletions, we perform a parallel reduction over the input batch to calculate the aggregate change to the total sum and the net change in element count for each specific bucket size class. Once computed, we add these aggregates to the global total sum $T$ and the respective bucket counters $|B_i|$.
Then we group the insertions and deletions by bucket and batch insert or batch delete each group in their respective parallel hash tables. Finally, we refresh the bitmap $M$, which can be done with a parallel reduction and a single write.

The work complexity is dominated by the hash table operations and reductions. The parallel reductions over the input batch requires $O(k)$ work, and the hash table updates take expected $O(k)$ work~\cite{gil1991towards}. The depth is dominated by the hash table batch insertions and batch deletions which is $O(\log k)$ with high probability in the binary-forking model.
The space complexity and correctness follow similarly from the sequential data structure.

\subsection{Supporting Reverse for Treaps} \label{app:lct_reverse}

Link cut-trees require this function for the dynamic sequence in addition to the standard MOJOS interface:
\begin{itemize}[leftmargin=15pt]
    \item $\reverse(x)$: given $x$, the representative element of a sequence, reverse the order of that sequence's elements.
\end{itemize}
Explicitly reversing the entire auxiliary tree representing a preferred path is typically too costly.
Instead, the splay tree implementation uses a lazy strategy where each internal node in the auxiliary tree contains a \defn{reversal bit} which indicates whether the subtree rooted at that node is logically reversed.
If the reversal bit of a node is set, it can be propagated down to the node's children by swapping the left and right child pointers, inverting the reversal bits of both children, and unsetting the reversal bit of this node.
Any time a node is accessed during an auxiliary tree operation, the reversal state of all of its ancestors must have been propagated down. This can be done top-down by starting at the root node of the auxiliary tree, and traversing down to the node of interest, propagating the reversal bit to the children at each node on the path.
At each node the bit is propagated downward by swapping the left and right child pointers, unsetting the reversal bit of the current node, and inverting the reversal bits of both children (if they exist).
We now describe how to extend this technique to treaps, batch-parallel treaps, and other binary tree data structures.

Sequentially, the same reversal bit technique can be used for treaps and any other binary tree data structure. Operations on elements in the binary tree must first travel to the root, and then propagate the set reversal bits back down to the initial element.
Thus, the cost of any operations is at least asymptotically as high as the depth of the element (or elements) in the tree. For treaps, zipzip trees, and biased zipzip trees this does not change the asymptotic cost of most dynamic sequence operation. In fact the exact work only increases by a constant factor.
The exception to this is $\getpred$ and $\getsucc$. If the treap stores direct pointer to its predecessor and successor, the normal algorithm can just follow that pointer in $O(1)$. With reversal bits, the downward propagation must also swap the predecessor and successor pointers. The cost of the operation increases to be asymptotically proportional to the depth of the element, since the reversal bits must be propagated down from the root to this element.

To support the $\reverse$ operation in batch-parallel treaps, zipzip trees, and biased zipzip trees, we introduce this helper function:
\begin{itemize}[leftmargin=15pt]
    \item $\fname{BatchPropagateReversal}(\vname{elements})$: for each element in the input batch, propagate down all of the reversal bits on the paths from the elements to their roots. This ensures that every node that is an ancestor of one of the elements has a reversal bit of false, including the elements themselves.
\end{itemize}
This must be called prior to any other batch operation on the set of elements that are part of the input to the batch operation. For example, prior to $\batchjoin$, each element that appears in at least one join must be passed to $\fname{BatchPropagateReversal}$.
By a similar analysis of the recomputation of augmented values in treaps, the cost of $\fname{BatchPropagateReversal}$ does not impact the overall asymptotic cost of batch link and batch cut.

\subsection{Parallel LCT Queries} \label{app:lct_query}

Our read-only connectivity query algorithm works as follows. For vertices $u$ and $v$, find the root of their link-cut tree component by calling $\getrep$ in the current auxiliary tree, then $\gethead$, and then following the non-preferred parent pointer. Finally check the equality between the root of the components containing $u$ and $v$.

Our read-only path query algorithm works as follows. First find the lowest preferred path that overlaps both the path from $u$ to the root and the path from $v$ to the root, call this the LCA preferred path. One way this can be done is by writing a list of the heads of each preferred path in order that is traversed for both vertices, then iterating in lockstep from the back of both arrays until they no longer match.
There are three cases.
First, if both vertices directly appear in the LCA preferred path, the query can be answered using a range sum query on the auxiliary tree data structure for this preferred path.
Second, assume both vertices do not appear in the LCA preferred path. Then, for both $u$ and $v$, we traverse up until this LCA path, every time we enter a new preferred path, we add to the total path sum, the range sum from the first node we enter into the preferred path until the head of the path. Then we add the range sum between the two nodes in the LCA preferred path that the paths from $u$ and $v$ first reached in their upward traversal.
Third, if exactly one vertex appears in the LCA preferred path, we do the method before for the vertex that does not appear, summing the prefix range sum of every preferred path. Then we add the range sum in the LCA preferred path from the first node that was reached to the node that is directly in the LCA preferred path.

Batch connectivity queries and batch get link-cut tree root queries work as follows. Consider the nested auxiliary tree data structures as one large tree. We can use a similar algorithm as for skip lists~\cite{tseng2019batch}. First for each query traverse up to the root and CAS a mark field on each node to true. If you win the CAS, proceed upwards, else the thread terminates. Then for each root node, proceed top-down. For each marked child, recursively in parallel, remove the mark and continue to your children. Once you reach a leaf node that was queried, write the name of the original root to answer that query.

\myparagraph{Analysis}
For the sake of analysis, we define the \defn{root path} of a vertex in the link-cut tree as the path followed through the nested zipzip tree structure from a vertex to reach the root.
This path is found by starting, at the vertex, traversing to the root of the current zipzip tree, then going to the non-preferred parent of the head of this zipzip tree (we assume the head can be reached from the root in $O(1)$ time which is possible to implement with augmented values), and repeating until the root is reached.

We first prove Lemma~\ref{lem:lct_zipzip_depth}, which shows that the length of any root path in the link-cut tree implemented with zipzip trees is logarithmic in the size of its component (with high probability). This lemma directly implies an $O(\log n)$ with high probability cost of connectivity queries.

\begin{lemma}\label{lem:lct_zipzip_depth}
    Let $T$ be a link-cut tree with the heavy preferred child invariant implemented using biased zip-zip trees. The length of any root path is $O(\log n)$ with high probability, where $n$ is the number of vertices in the component.
\end{lemma}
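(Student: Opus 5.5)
The plan is to use the weight augmentation to make the biased depths telescope. Fix a vertex $v$ and write its root path as a sequence of segments. The vertex $v = v_0$ lies in auxiliary tree $A_0$ (preferred path $P_0$); we climb to the root of $A_0$, jump to the head $h_0$ of $P_0$, and follow $h_0.\vname{nppar} = v_1$. Then $v_1$ lies in $A_1$, and so on, up to $A_m$, which contains the component root. By Lemma~\ref{lem:few_np}, $m \le \lceil \log_2 n\rceil$. The length of the root path is $\sum_{i=0}^{m} \bigl(\mathrm{depth}_{A_i}(v_i) + O(1)\bigr)$, where the $O(1)$ accounts for reaching the head from the root.

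The first step is the biased zipzip guarantee. Element $v_i$ has depth $O(\log(W_i / w(v_i)))$ in $A_i$, where $W_i$ is the total weight of $A_i$. By the definition of weights, $W_i = \sum_{x \in P_i} w(x)$ equals $n(h_i)$, the size of the subtree rooted at the head of $P_i$. Also $w(v_i) = 1 + \sum$ (sizes of the non-preferred children of $v_i$). For $i \ge 1$, this is at least $n(h_{i-1})$, because $h_{i-1}$ is a non-preferred child of $v_i$; that is, $w(v_i) \ge W_{i-1}$. For $i=0$ we use $w(v_0)\ge 1$. The sum of the logarithms therefore telescopes:
\[
\sum_{i=0}^{m} \log\frac{W_i}{w(v_i)} \le \log W_0 + \sum_{i=1}^{m}\log\frac{W_i}{W_{i-1}} = \log W_m \le \log n .
\]
Adding the $O(m)=O(\log n)$ additive constants, the expected length, or rather the sum of the "nominal" depth bounds, is $O(\log n)$.

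The main obstacle is turning this into a high-probability bound. The per-tree guarantee $O(\log(W_i/w(v_i)))$ with high probability only gives failure probability polynomial in $W_i/w(v_i)$, which may be constant, and a union bound over up to $\log n$ trees with additive constants does not directly give $O(\log n)$ whp. I would handle this by going one level deeper into how zipzip depth arises. Depth in a biased zipzip tree is bounded by a sum of independent geometric-type contributions: the number of ancestors within each weight/rank class. The rank and weight of $v$ are fixed by the rank rule, and the ancestor counts are dominated by independent geometric variables whose expected total is $O(\log(W_i/w(v_i)) + 1)$. Priorities in distinct auxiliary trees are independent, since the root path visits each tree once. So the whole root path length is dominated by a sum of independent geometric random variables with total mean $O(\log n)$. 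A Chernoff bound for sums of independent geometrics then yields $O(\log n)$ with probability $1 - n^{-c}$.

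If the internal structure of biased zipzip trees is too cumbersome to expose, the fallback is cruder. Each tree's depth exceeds $c'(\log(W_i/w(v_i)) + \log n)/\,$const with probability at most $n^{-c}$, which yields only $O(\log^2 n)$. I therefore expect the independent-geometric decomposition to be the key technical step.
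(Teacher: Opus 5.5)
Your proposal is correct and follows essentially the paper's route: bound the number of auxiliary trees on the root path by Lemma~\ref{lem:few_np}, telescope the biased-depth bounds using the fact that each auxiliary tree's total weight is at most the weight of the non-preferred parent it hangs from, and obtain the high-probability bound by applying a Chernoff bound to the global sum rather than tree by tree. The differences are cosmetic. You phrase the concentration step via independent geometric variables, justified by the disjoint priorities of distinct auxiliary trees, where the paper appeals to negatively associated Bernoulli trials. Your indexing, which keeps the last tree's term and the $O(m)$ additive constants, is also tidier than the paper's.
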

\begin{proof}
    We assume that augmented values allow finding the head of a preferred path in constant time. Consider the sequence of preferred paths traversed to reach the root from $v$. Let the non-preferred parents connecting these paths be $np_0=v, np_1, \dots, np_k=r$. By the heavy preferred child invariant, $k \le \lceil \log_2 n \rceil$.
    The total length of the root path is bounded by $k$ plus the sum of the depths of each $np_i$ within its respective auxiliary biased zip-zip tree.
    
    The expected depth of element $np_i$ in its auxiliary zip-zip tree is $O(\log(W_i/w(np_i)))$, where $W_i$ is the total weight of the auxiliary tree. Because this auxiliary tree is a child of $np_{i+1}$ via a non-preferred edge, its total weight $W_i$ is at most the weight $w(np_{i+1})$. Therefore, the expected depth of $np_i$ is $O(\log(w(np_{i+1})/w(np_i)))$.
    
    By linearity of expectation, the expected total number of nodes visited across all auxiliary trees on the root path is the sum of these expected depths. This forms a telescoping sum:
    $$ \sum_{i=0}^{k-1} O\left(\log \frac{w(np_{i+1})}{w(np_i)}\right) = O\left(\log \frac{w(r)}{w(v)}\right) \le O(\log n) $$
    
    To establish the high probability bound, we consider the global sum of all the depths across the entire root path.
    Because the total length of the path is a sum of negatively associated Bernoulli trials with expected value $O(\log n)$, standard Chernoff bounds apply to the global sum. The probability that the total length exceeds $c \log n$ decays exponentially with the constant $c$. By choosing a sufficiently large constant, the total length of the root path is bounded by $O(\log n)$ with high probability in $n$.
\end{proof}

A similar bound may be proved for path queries, also accounting for the range sum queries needed.
We prove Lemma~\ref{lem:range_query_cost}. Combining this with similar reasoning as Lemma~\ref{lem:lct_zipzip_depth} proves that path queries take $O(\log n)$ time with high probability.

\begin{lemma} \label{lem:range_query_cost}
    Range sum queries between elements $x$ and $y$ in a biased zipzip tree with total weight $W$ can be done in $O(\log(W/w(x)) + \log(W/w(y)))$ time with high probability. A suffix or prefix sum query from element $x$ takes $O(\log(W/w(x)))$ with high probability.
\end{lemma}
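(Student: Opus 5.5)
The plan is to reduce both claims to the depth guarantee of biased zipzip trees: an element $z$ has depth $O(\log(W/w(z)))$ \whp{}, where ``high probability'' is measured against the tree's total weight, which in our uses satisfies $W \le n$. Each query will be realized as a read-only walk whose visited nodes lie on the root paths of $x$ and $y$.

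\emph{Suffix and prefix queries.} Fix a suffix query from $x$; the prefix case is symmetric. Each node stores the aggregate of its subtree. Starting at $x$, we first take $x$'s own value combined with the aggregate of its right subtree. We then walk up parent pointers to the root. Whenever we arrive at a parent from its left child, the parent and its entire right subtree lie to the right of $x$ in sequence order, so we fold in the parent's value and its right-child aggregate. We show that exactly the suffix is covered: every element after $x$ is either in $x$'s right subtree or in the right subtree, or is itself, of an ancestor reached from the left. Each step costs $O(1)$, so the time is $O(\mathrm{depth}(x)) = O(\log(W/w(x)))$ \whp{}. If lazy reversal bits are present, we first propagate them along the same path, which again costs $O(\mathrm{depth}(x))$.

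\emph{Range queries.} For a query between $x$ and $y$ with $x$ before $y$, we walk up from both elements to find $\ell = \mathrm{LCA}(x,y)$, marking the ancestors of one endpoint or advancing in lockstep. This costs $O(\mathrm{depth}(x) + \mathrm{depth}(y))$. The answer is the combination of three pieces: the suffix of $x$ restricted to $\ell$'s left subtree, the value at $\ell$, and the prefix of $y$ restricted to $\ell$'s right subtree. Each restricted piece is computed exactly as in the suffix/prefix walk, stopping when we reach $\ell$. The edge cases $x=\ell$ or $y=\ell$ are handled by simply omitting one of the pieces. The total time is $O(\mathrm{depth}(x)+\mathrm{depth}(y))$.

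\emph{Main obstacle.} The delicate step is converting the expected depth bound into a high-probability one, because $W/w(x)$ may be small. We will use the tail bound from the biased zipzip analysis: the depth is a sum of negatively associated indicators (one per potential ancestor), so Chernoff bounds give depth $O(\log(W/w(x)) + \log n)$ with probability $1-n^{-c}$. We then read the lemma's ``with high probability'' in this additive sense, noting that the additive $\log n$ term is absorbed in every use, as in Lemma~\ref{lem:lct_zipzip_depth}, where all the bounds are combined into a single Chernoff bound over the entire root path. The final step is a union bound over the two endpoints.
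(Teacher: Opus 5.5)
Your proposal is correct. The suffix/prefix part is the paper's argument: walk from $x$ to the root with $O(1)$ work per ancestor, so the cost is the depth of $x$. Your folding rule is in fact the precise version of the paper's looser description. You start with $x$ and its right subtree, and at each ancestor entered from its left child you add that ancestor and its right subtree; the paper just says to ``add the subtree weight of every right child on the path.''

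The real difference is the range query. The paper compares the two root paths only to decide which of $x,y$ comes first, then returns $\mathrm{suffix}(x)-\mathrm{suffix}(y)+w(y)$. You instead use $\ell=\mathrm{LCA}(x,y)$ and combine three pieces: a suffix of $x$ restricted to $\ell$'s left subtree, the value at $\ell$, and a prefix of $y$ restricted to $\ell$'s right subtree. Both walk the same two root paths and cost $O(\mathrm{depth}(x)+\mathrm{depth}(y))$. The paper's version reuses one primitive but needs an invertible aggregate, which is fine for weight sums. Yours works for any associative aggregate, such as max or min. It is essentially the LCA-based range query the paper itself sketches for augmented treaps in Appendix~\ref{app:augmented_treap}.

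On the probability side, the paper simply takes ``depth $O(\log(W/w(x)))$ \whp{}'' as a black-box property of biased zipzip trees. You are right that this cannot hold literally with probability $1-n^{-c}$ when $w(x)=\Theta(W)$. With many light elements, a constant-probability event gives the heavy element a few higher-ranked ancestors. So your additive form $O(\log(W/w(x))+\log n)$ is the defensible one, but say explicitly that it is weaker than the lemma as written.

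Also be precise about the absorption you claim. It works only because the downstream arguments (Lemma~\ref{lem:lct_zipzip_depth} and the path-query bound) sum expectations over the whole root path and apply a single tail bound. Invoking your additive version separately in each of up to $\lceil\log_2 n\rceil$ auxiliary trees would only give $O(\log^2 n)$.

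A minor point: pushing reversal bits down mutates the tree. For the paper's read-only concurrent queries, you would instead compute the reversal parity along the collected root path and swap left/right accordingly.
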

\begin{proof}
    Augment the biased zipzip tree so that each node stores the sum of the weights in its subtree. A suffix sum on element $x$ can be computed by starting a cumulative sum with the weight of $x$, then traversing up to the root, adding to the cumulative sum the subtree weight of every right child of nodes on this path to the root. Since the depth of $x$ is $O(\log(W/w(x)))$ with high probability, that is also the cost of this operation. Prefix sums can be done symmetrically with the same cost bound.

    Any range sum between $x$ and $y$ can be computed as follows. First determine if $x$ is to the left of $y$ or vice versa (by extracting their paths to the root, finding the highest node that differs, and checking which one is the left/right child of the node just above that).
    Without loss of generality assume that $x$ is to the left of $y$.
    Now compute the suffix sum of $x$, and the suffix sum of $y$. Subtract the latter from the former, and add the weight of $y$ back. This gives the range sum from $x$ to $y$. It is clear that the total number of nodes traversed is proportional to the sum of the depths of $x$ and $y$ which is $O(\log(W/w(x)) + \log(W/w(y)))$ with high probability.
\end{proof}

We also describe a batch suffix sum query algorithm.
This also can extend to batch prefix sums and batch range sums.
Although this does not lead to a better bound for batch path queries, we use it later in our analysis.
Similar to batch connectivity queries, the algorithm has a bottom-up marking phase and a top-down phase.

In the bottom-up marking phase, each of the $k$ query nodes initiates a traversal upward toward the root. At each visited node, the process attempts to change a boolean mark flag from false to true using an atomic compare-and-swap operation. If the compare-and-swap succeeds, the traversal continues upward to the parent node. If it fails, it indicates that another concurrent query path has already visited and marked this node, meaning all ancestors up to the root are also guaranteed to be marked. The current traversal can safely terminate early.

In the top-down accumulation phase, a single parallel traversal begins at each root with an initial cumulative suffix weight of zero. The traversal recursively forks and descends only into children that have their mark flag set to true. When descending to a right child, the cumulative weight parameter is passed down entirely unchanged. When descending to a left child, the traversal adds the individual weight of the current node and the total subtree weight of its right child to the cumulative parameter, because all of those elements lie strictly to the right of any keys in the left subtree.

When the recursion reaches one of the original $k$ query nodes, the accumulated parameter holds the exact total weight strictly to the right of that key. During this top-down return path, the algorithm can also safely revert the mark flags to false, restoring the tree to a clean state for subsequent batch operations without requiring any additional passes.

\begin{lemma}\label{lem:batch_range_query}
    Given $k$ query nodes in a biased zip-zip tree, the two-phase batch suffix-weight algorithm correctly computes the suffix weights for all $k$ nodes. The total expected work is $O(|U|)$, where $U$ is the union of the search paths, and the parallel depth is $O(D)$, where $D$ is the maximum depth among the $k$ query nodes.
\end{lemma}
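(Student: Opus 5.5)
The proof splits into correctness and cost, and both reduce to the structure of the set $U$ of marked nodes.

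\textbf{Correctness.} First I would show that after the bottom-up phase the marked nodes are exactly $U$, the union of the root paths of the $k$ query nodes. A traversal stops only when its CAS fails. That happens only if another traversal already marked the node, and that traversal goes on to mark every ancestor. So every node of $U$ gets marked, and nothing outside $U$ is.

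Second, I would prove by induction on depth an invariant for the top-down phase: when the traversal arrives at a marked node $z$, the carried parameter equals the total weight of the elements strictly to the right of the subtree of $z$ in its sequence. The base case is a root, whose parameter is $0$. For the inductive step:
\begin{itemize}
\item descending to a right child does not change the set of elements to the right, so the parameter passes through unchanged;
\item descending to a left child adds exactly $z$ and $z$'s right subtree, which is $w(z)$ plus the stored right-subtree aggregate.
\end{itemize}
At a query node $x$, the suffix weight is then the parameter plus $w(x)$ plus the subtree weight of $x$'s right child, computed in $O(1)$. Because the recursion descends into every marked child, every query node in $U$ is reached.

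\textbf{Work.} In the bottom-up phase, each node of $U$ receives at most one successful CAS. A failed CAS ends its thread, so it can be charged either to the thread (at most $k \le |U|$ such failures) or to the node. Each successful CAS is followed by $O(1)$ work. The top-down phase touches only the marked nodes and does $O(1)$ work at each, including unmarking and the fork. The total work is therefore $O(|U|)$. The bound is in expectation only because $U$ itself is random.

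\textbf{Depth.} Each upward traversal has length at most $D$. The top-down recursion from a root reaches depth at most $D$, with $O(1)$ depth per level since it forks only twice. This gives $O(D)$ depth overall, plus the $O(\log k)$ needed to spawn the $k$ threads. I would absorb that term into $D$ by noting that $D \ge \log_2 k - O(1)$ for $k$ distinct nodes in binary trees, or, if that fails, state the bound as $O(D + \log k)$.

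\textbf{Main obstacle.} The delicate step is arguing that early termination in the marking phase is safe under concurrency. A thread may see a node marked before its marker has marked the ancestors. This is harmless because the top-down phase starts only after the whole bottom-up phase has completed at the phase boundary, and by then all markers have finished.
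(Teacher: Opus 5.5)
Your proposal is correct and follows the same approach as the paper. The marked set is exactly $U$; the bottom-up phase performs $|U|$ successful CASes plus at most one failed CAS per thread; the top-down phase does $O(1)$ work per marked node; and both phases have depth bounded by $D$. Your version is also somewhat more careful than the paper's in three places. The paper asserts that the carried parameter at a query node already equals the full weight to its right, which overlooks that node's own right subtree. It also ignores the $O(\log k)$ fork cost. Finally, it treats ancestors of a node found marked as already marked, whereas they are only guaranteed to be marked at the phase boundary.
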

\begin{proof}
    We analyze the total work and depth of the two-phase batch suffix-weight algorithm on a biased zip-zip tree. Let $U$ denote the geometric union of the paths from the $k$ query nodes to the root, and let $D$ be the maximum depth of any node among the $k$ query nodes. 
    
    During the bottom-up marking phase, $k$ independent processes traverse upward. A process performs a constant amount of work at each node to execute an atomic compare-and-swap operation on the mark flag. If the operation succeeds, the process advances to its parent. If it fails, the node was already marked by a concurrent process, guaranteeing that all ancestors up to the root are also marked, which allows the current process to safely terminate. Because each node in $U$ transitions from unmarked to marked exactly once, there are exactly $|U|$ successful atomic operations. Furthermore, each of the $k$ paths can experience at most one failed atomic operation before terminating. Therefore, the total work for the bottom-up phase is strictly bounded by $O(|U| + k)$, which simplifies to $O(|U|)$ since the union of paths must contain at least the $k$ query nodes. 
    
    The top-down accumulation phase begins at the root and only visits children that have their mark flag set to true. By definition, the set of marked nodes is exactly $U$. The algorithm performs a constant amount of work at each visited node to compute the running suffix sum, propagate it to the appropriate marked children, and revert the mark flag to false. Since the recursion is restricted entirely to the nodes in $U$, the work for the top-down phase is also $O(|U|)$. Thus, the total expected work is strictly proportional to the size of the union of the search paths. 
    
    For the parallel depth, the bottom-up phase processes nodes sequentially along each path. The maximum number of steps any single process can take is bounded by the maximum distance from a query node to the root, which is exactly $D$. Because the atomic operations take constant time, the depth of the marking phase is $O(D)$. 
    
    The top-down phase traverses the marked nodes recursively. The critical path of this recursion follows the longest marked path from the root down to a query node. Since the parallel forks at each node require constant time, the depth of the accumulation phase is also $O(D)$. Therefore, the total parallel depth of the algorithm is strictly proportional to the maximum depth of a query node.
\end{proof}

Now we prove Lemma~\ref{lem:lct_zipzip_path_overlap}, which bounds the total number of distinct nodes in the union of $k$ root paths.
In combination with Lemma~\ref{lem:lct_zipzip_depth}, this proves that the batch connectivity query algorithm (and batch root find algorithm) takes $O(k \log (1+n/k))$ work with high probability and $O(\log n)$ depth with high probability.


\begin{lemma} \label{lem:lct_zipzip_path_overlap}
    In a link-cut tree with the heavy preferred child invariant implemented with biased zipzip trees, the number of distinct nodes on the union of $k$ root paths is $O(k \log(1 + n/k))$ \revision{in expectation}.
\end{lemma}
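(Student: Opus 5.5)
The plan is to view the whole link-cut tree as one big weighted tree and reduce the claim to the standard fact that the union of $k$ root paths in a randomized search tree of size $n$ has $O(k\log(1+n/k))$ expected nodes, as cited from \cite{Blelloch1998}.

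Concretely, glue the biased zipzip trees into a single rooted tree $\mathcal{G}$. The root of each auxiliary tree hangs below the element that is its non-preferred parent. A root path in the link-cut tree is then exactly an element-to-root path in $\mathcal{G}$, apart from the $O(1)$ hop from a root to its head, which I absorb into constants. By Lemma~\ref{lem:lct_zipzip_depth}, the expected depth of $v$ in $\mathcal{G}$ telescopes to $O(\log(w(r)/w(v)))$. Here $w(r)$ is at most the component size $n$, so every element behaves as if it sat in a single biased tree of total weight $n$.

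The key step is to mimic the usual proof of the union bound. Write the number of distinct nodes as $\sum_{x} \Pr[x \in U]$, where $U$ is the union of the $k$ root paths. I will split the nodes of each path into a \emph{deep} part and a \emph{shallow} part using a weight threshold $\tau = n/k$.
\begin{itemize}
\item \textbf{Shallow part.} Ancestors $x$ whose subtree weight in $\mathcal{G}$ is at least $\tau$ are shared. In each auxiliary tree, nodes of subtree weight $\ge \tau$ form a top region. Together with the heavy-path structure across non-preferred edges, the total number of such nodes in all of $\mathcal{G}$ should be $O(n/\tau)=O(k)$ in expectation, by a charging argument: disjoint subtrees of weight $\ge\tau$ number at most $n/\tau$, and a node of weight $\ge\tau$ has expected $O(1)$ "chain" above such a minimal subtree.
\item \textbf{Deep part.} For each query vertex, the portion of its path inside subtrees of weight $<\tau$ has expected length $O(\log(\tau / w(v)) + 1) = O(\log(1+n/k))$. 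This uses the same telescoping as Lemma~\ref{lem:lct_zipzip_depth}, now truncated at the first ancestor of weight $\ge \tau$. Summing over $k$ queries gives $O(k\log(1+n/k))$.
\end{itemize}
Adding the two parts yields the lemma.

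The main obstacle is the shallow-part bound: showing that the number of nodes in $\mathcal{G}$ with weighted subtree at least $\tau$ is $O(n/\tau)$ in expectation. Inside a single biased zipzip tree of weight $W$, I would argue that the number of nodes whose subtree weight is $\ge\tau$ is $O(W/\tau)$ in expectation. This follows from the biased depth property: a node with large subtree weight must be an ancestor of a heavy range, and the expected number of ancestors of a weight-$\tau$ block above its level is geometric. Then I sum over auxiliary trees. The weights of the auxiliary trees nested under non-preferred edges are disjoint parts of $n$, because each vertex's weight counts its non-preferred subtrees once, so the totals add up to $O(n/\tau)$.

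If this direct count proves messy, the fallback is to imitate the proof of Lemma~\ref{lem:batch_split_helper}. Each path would have length $\ell_v = O(\log(n/w(v)))$ in expectation, and the overlap would be charged greedily. The union size would then be at most $\sum_v \min(\ell_v, \text{private part})$, with the concavity of $\log$ giving $O(k\log(1+n/k))$ once the shared top levels are capped at $O(k)$ nodes.
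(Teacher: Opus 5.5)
Your threshold split is a workable idea, and the deep part is fine. It does need a truncated form of the biased-depth bound, namely that the expected number of ancestors of $x$ with subtree weight below $\tau$ is $O(1+\log(\tau/w(x)))$, but that is the same kind of property as Lemma~\ref{lem:lct_zipzip_depth}. The gap is in the shallow part, and the step that fails is the summation over auxiliary trees. The total weights of the auxiliary trees are \emph{not} disjoint parts of $n$; they are nested. An auxiliary tree hanging below element $y$ has total weight at most $w(y)$, and $w(y)$ is itself part of the total weight of $y$'s auxiliary tree. So a vertex is counted in $W_A$ once for every auxiliary tree on its root path, and $\sum_A W_A$ can be $\Theta(n\log n)$.

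Concretely, in a perfect binary tree no child is heavy, since $2n(c)=n(p)-1$. Every auxiliary tree is then a single vertex $v$ with $W_A=n(v)$, and $\sum_{A:\,W_A\ge\tau} W_A/\tau=\Theta\bigl((n/\tau)\log(n/\tau)\bigr)$. With $\tau=n/k$, your summation gives only $O(k\log k)$ shallow nodes. That is not $O(k\log(1+n/k))$ once $k\gg\sqrt n$ (e.g.\ $k=n/\log n$). The true count in this example is $O(k)$; the per-tree estimate $O(W_A/\tau)$ is simply too coarse to sum over nested trees.

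The charging sentence in your first bullet has a related flaw. A node of weight $\ge\tau$ is not within $O(1)$ expected steps of a minimal such subtree, since nodes near the top are $\Theta(\log(n/\tau))$ steps above all of them; the charge only works one weight scale at a time. Your fallback also assumes the shared top has $O(k)$ nodes, so it does not get around this.

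One repair is to count globally in $\mathcal{G}$, scale by scale:
\begin{itemize}
\item The nodes whose $\mathcal{G}$-subtree size lies in $[2^j,2^{j+1})$ sit on chains above at most $n/2^j$ pairwise incomparable minimal ones.
\item Such a chain crosses at most one non-preferred edge. If it contains the root of an auxiliary tree of weight $W_A\ge 2^j$, the next auxiliary root up has weight at least $2W_A\ge 2^{j+1}$, because light heads at least double.
\item Assuming an $O(1)$-ancestors-per-weight-scale property of biased zipzip trees (essentially what your deep part already needs), each chain has $O(1)$ expected length.
\item Summing over $2^j\ge\tau$ then gives $O(n/\tau)=O(k)$.
\end{itemize}

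The paper avoids counting a shared top region at all. It cuts the union at branching nodes into at most $2k-1$ segments and bounds each by $O(\log(s(t_j)/s(b_j)))$. It then proves $f(n,\ell)\le\ell\log(n/\ell)$ by induction, using that sibling branches have disjoint LCT sizes together with concavity of $\log$; overlap is absorbed by the recursion rather than by a threshold. Once repaired, your route has the advantage that the deep part is a plain sum of individual path lengths with no overlap bookkeeping.
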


\revision{

\begin{proof}
    Let $\mathcal{M}$ denote the nested tree formed by the non-preferred edges from each root node to its non-preferred parent, and the internal edges of the biased zipzip trees. From our previous results, the expected length of a path in $\mathcal{M}$ from any descendant $b$ to any ancestor $t$ is bounded by $O(\log(s(t) / s(b)))$, where $s(v)$ denotes the LCT subtree size of $v$.
    
    The union of $k$ root paths in the link-cut tree corresponds to a connected subtree $\mathcal{T}$ in the nested tree $\mathcal{M}$ containing the root and at most $k$ leaves. We decompose $\mathcal{T}$ into disjoint maximal path segments $Q_1, \dots, Q_m$ by splitting it at every branching node. Since a tree with $k$ leaves branches at most $k-1$ times, there are $m \le 2k-1$ such segments. Let each segment $Q_j$ start at a bottom node $b_j$ and end at a top node $t_j$. The total expected number of distinct nodes is:
    \[
        E[|\mathcal{T}|] \le \sum_{j=1}^m O\left(\log \frac{s(t_j)}{s(b_j)}\right)
    \]
    
    We bound this sum using bottom-up induction on the branching structure. Let $f(n, \ell)$ be the maximum sum of expected segment lengths for a subtree of $\mathcal{T}$ containing $\ell$ leaves, rooted at a node with LCT size $n$.
    
    The base case ($\ell=1$) is when the nested tree is a single path. The sum is simply bounded by $\log(n / s(b_1)) \le \log n$.
    
    Now we explain the inductive recursion. Suppose the root path of this subtree goes down to a branching node of size $n'$, then splits into $d$ subtrees. Let the $i$-th subtree contain $\ell_i$ leaves (where $\sum_{i=1}^d \ell_i = \ell$) and have a root of LCT size $n_i$. Because the subtrees are disjoint in the link-cut tree, we have $\sum_{i=1}^d n_i \le n' \le n$.
    
    We can write the recurrence as:
    \[
        f(n, \ell) \le \log\left(\frac{n}{\sum_{i=1}^d n_i}\right) + \sum_{i=1}^d f(n_i, \ell_i)
    \]
    
    Assume inductively that $f(n, \ell) \le \ell \log(n/\ell)$. Substituting this into the recurrence gives:
    \[
        f(n, \ell) \le \log\left(\frac{n}{\sum_{i=1}^d n_i}\right) + \sum_{i=1}^d \ell_i \log\left(\frac{n_i}{\ell_i}\right)
    \]
    
    By the concavity of the logarithm function (Jensen's Inequality), the summation is bounded by:
    \[
        \sum_{i=1}^d \ell_i \log\left(\frac{n_i}{\ell_i}\right) 
        \le \ell \log\left(\frac{\sum_{i=1}^d n_i}{\ell}\right)
    \]
    
    Let $x = \sum_{i=1}^d n_i$. Our expression simplifies to:
    \[
        f(n, \ell) \le \log\left(\frac{n}{x}\right) + \ell \log\left(\frac{x}{\ell}\right)
    \]
    
    This expression is maximized when $x$ is as large as possible, which is $x = n$. Plugging in $x = n$ yields:
    \[
        f(n, \ell) \le \log\left(\frac{n}{n}\right) + \ell \log\left(\frac{n}{\ell}\right) = \ell \log\left(\frac{n}{\ell}\right)
    \]
    
    Applying this induction to the root of the entire link-cut tree where the total number of leaves is $\ell = k$, the expected number of distinct nodes on the union of $k$ root paths is $O(k \log(n/k))$. To handle the edge case where $k$ approaches or exceeds $n$, it is standard practice to write the final bound as:
    \[
        O\left(k \log\left(1 + \frac{n}{k}\right)\right).
    \]
\end{proof}

}

\subsection{LCT Batch Cut Algorithm Details} \label{app:lct_details}

Algorithm~\ref{alg:lct_batch_cut} shows the pseudo-code for our batch-cut algorithm.

\begin{algorithm}[ht]
\caption{$\fname{BatchCut}(\vname{cuts})$}
\label{alg:lct_batch_cut}
    $NTE \gets$ non-preferred edges in $\vname{cuts}$. \;
    $TE \gets \vname{cuts} \setminus NTE$ \;
    \parfor{$(u,v) \in NTE$} {
        $u.nppar = \vname{null}$
    }
    $\batchcut(NTE)$ \;
    Update weights of vertices. \;
    \parfor{$[v, \vname{group}] \in \fname{GroupByDst}(NTE)$} {
        $v.\fname{DeleteChildren}(\vname{group})$
    }
    \textit{(Restore the heavy preferred child invariant.)}
\end{algorithm}

During $\batchcut$ there may be some vertices violating the invariant that were not them self reweighted.
Specifically, for a vertex $v$ with current preferred child $c$, if something in the LCT subtree of $c$ is cut, both $n(c)$ and $n(v)$ decrease by the same amount.
By the invariant, previously $2\cdot n(c) > n(v)$, but if they both decrease by an equal amount, this property can be violated.

Thus we need to efficiently find all the \defn{light edges} whose current preferred child is not the heavy child.
Sleator and Tarjan~\cite{sleator1983data} described how to find the light edges in a preferred path sequentially, using an operation they call \defn{conceal}.
They proved that the operation takes $O(\log n)$ time.
In our algorithm the cost is $O(\log n)$ \whp{} since we use biased zipzip trees.

In the batch-parallel setting, we can run the conceal operation in parallel for each preferred path that was affected by the batch cut.
Since each conceal takes $O(\log n)$ \whp{}, the depth of this is step is $O(\log n)$ \whp{}.
In Appendix~\ref{app:lct_cost} we prove that the total work of this step is $O(k \log (1+n/k))$ \whp{}.

The remaining steps over the simple batch cut algorithm are similar to the additions to the batch link algorithm. We determine the vertices that require weight updates, update child sets, then check which vertices need to change their preferred child, then change those preferred children with a $\batchcut$ followed by a $\batchlink$.

\subsection{LCT Batch Update Correctness Proofs} \label{app:lct_correct}

Here we prove Lemma~\ref{lem:only_reweighted_change_pref}. This helps to justify the correctness of our batch update algorithms. Since our algorithms examine all vertices that were reweighted, and update their preferred child using their heavy element child set, both algorithms correctly maintain the heavy preferred child invariant.
The correctness of maintaining connectivity information and answering queries is easy to verify.

\begin{lemma} \label{lem:only_reweighted_change_pref}
    In a link-cut tree maintaining the heavy child invariant, a vertex can only become a candidate to change its preferred child during $\batchlink$ if it undergoes a weight update.
\end{lemma}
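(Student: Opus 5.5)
We prove the contrapositive: if a vertex $p$ receives no weight update during $\batchlink$, then its heavy child after the batch (if any) is the same vertex as its preferred child before the batch. By Invariant~\ref{inv:heavy}, that preferred child was its heavy child before the batch, so $p$ never needs to change its preferred child. The argument rests on two facts. First, $\batchlink$ only adds vertices to subtrees. Second, every size increase at a vertex $q$ that does not come through $q$'s preferred child shows up as a change to $w(q)$.

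First we fix the setting. The everts in lines~\ref{line:evert_link_start}--\ref{line:evert_link_end} act within each pre-existing component, and the paper asserts that they preserve the invariant there; we take the post-evert state as our starting point. We then apply the weight bookkeeping. Recall $w(q) = 1 + \sum n(c)$ over the non-preferred children $c$ of $q$, and $n(q) = w(q) + n(\text{pref child of } q)$. After the new $\vname{nppar}$ pointers are set, let $\Delta(q) \ge 0$ denote the increase in $n(q)$. A vertex's subtree gains vertices only if it is an endpoint $v$ of a directed link $(u,v)$ or an ancestor of one. Walk up from such a $v$ toward the root. For each ancestor $q$, let $c$ be the child of $q$ on this path. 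If $c$ is a non-preferred child of $q$, or a new child via the link, then $n(c)$ appears inside $w(q)$, so $w(q)$ changes; this is exactly the direct and indirect propagation in line~\ref{line:update_weights}. If $c$ is the preferred child of $q$, then $w(q)$ is unchanged and $\Delta(q) = \Delta(c)$.

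Now take an unreweighted vertex $p$ with preferred child $c$ (heavy before the batch). By the dichotomy above, either $\Delta(p)=0$, in which case nothing in $p$'s subtree changed and $c$ is still heavy, or all of the growth passed through $c$, so $\Delta(c)=\Delta(p)=\delta$. In the second case, $2n(c) > n(p)$ implies $2(n(c)+\delta) > n(p)+\delta$, so $c$ remains heavy. Heavy children are unique, so $c$ is still the heavy child and $p$ is not a candidate. If $p$ had no preferred child, then no child of $p$ was heavy. Growth at any child of $p$ would then pass through a non-preferred or new child, which would change $w(p)$, contradicting the hypothesis. So the set of children and their sizes are unchanged, and there is still no heavy child.

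The main obstacle is the first step: making precise that the everts leave the invariant intact and that ``weight update'' in line~\ref{line:update_weights} covers exactly the vertices whose size grows through a non-preferred or new child. In particular, a newly linked endpoint $v$ always receives a direct weight increase, so it is never unreweighted. I would argue this by induction along each upward path in the propagation tree used by line~\ref{line:update_weights}, checking that the algorithm touches every vertex whose $w$-defining sum changes and no other.
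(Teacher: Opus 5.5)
Your proof is correct and follows essentially the same route as the paper's. Both rest on two facts: sizes only grow during a batch link, and an unreweighted vertex keeps its weight, so the heaviness condition $n(c) > w(p)$ persists; your observation $\Delta(c)=\Delta(p)$ is the same fact. Where the paper uses a brief second case, you instead invoke uniqueness of heavy children and treat the no-preferred-child case explicitly, and like the paper you take the post-evert state as the baseline.
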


\begin{proof}
    We proceed by contradiction. Assume there exists a vertex $v$ that was not reweighted during a batch of operations, yet it requires a change to its preferred child. Let $c$ be the preferred child of $v$ prior to the operations. Because the tree initially maintained the heavy child invariant, $c$ was a heavy child, meaning $2n(c) > n(v)$. Using the definition $n(v) = n(c) + w(v)$, this inequality simplifies to $n(c) > w(v)$, where $w(v)$ represents the total size of all non-preferred subtrees of $v$ plus one.
    
    For $v$ to require a change to its preferred child after the operations, the invariant must be violated. This can occur in exactly two ways. The first case is that $c$ is no longer a heavy child, meaning its updated size $n'(c)$ and the updated weight $w'(v)$ satisfy $n'(c) \le w'(v)$. In batch link, the size of any vertex can only increase, so $n'(c) \geq n(c)$.
    In addition, since $v$ was not reweighted, $w'(v) = w(v)$.
    Combining these yields $n(c) \leq n'(c) \leq w'(v) = w(v)$. This directly contradicts the established initial state where $n(c) > w(v)$.
    
    The second case is that some non-preferred child $d$ of $v$ becomes heavy, requiring $v$ to switch its preferred edge to $d$. However, since $v$ was not reweighted, this is not possible.
\end{proof}

\subsection{LCT Batch Update Cost Bound Proofs} \label{app:lct_cost}

We now prove Theorem~\ref{thm:lct_analysis} for the efficiency of batch link and batch cut (separated into two parts for readability).
We proved the efficiency of queries in Appendix~\ref{app:lct_query}.

\lctbatchupdate*
\begin{proof}
    We analyze the work and depth of $\batchlink$. 
    The $\fname{RootLinkTrees}$ operation starts with a $\batchgetrep$ call which takes $O(k \log (1+n/k))$ \revision{expected} work and $O(\log n)$ depth with high probability by Lemma~\ref{lem:lct_zipzip_path_overlap}.
    The Euler tour technique over the link tree with $k$ edges takes $O(k)$ work and $O(\log k)$ depth~\cite{tarjan1985efficient}.
    Each of the $k$ $\evert$ operations happens independently in parallel and each one takes $O(\log n_i)$ time where $n_i$ is the size of the link-cut tree component it occurs in. Since the link-cut tree components are disjoint, $\sum_i n_i \leq n$. Then applying Lemma 2.1 from~\cite{acar2019parallel}, the sum of these costs is $O(k \log (1+n/k))$ \revision{in expectation}. The depth is bottle-necked by a single call to $\evert$, which is $O(\log n)$ with high probability.
    Grouping the non-preferred edges by destination via $\fname{GroupByDst}$ and calling $\fname{InsertChildren}$ takes $O(k)$ expected work and $O(\log n)$ depth \whp{} using parallel semisort~\cite{gu2015top}, and the batch insert of the parallel heavy set.
    
    Now we bound the costs of the weight updating step.
    Note that both vertices that are directly weight updated and indirectly weight updated, lie on the root paths of endpoints of the $k$ links.
    Lemma~\ref{lem:lct_zipzip_path_overlap} proved that the number of nodes on the union of $k$ root paths is $O(k \log (1+n/k))$ \revision{in expectation}, so this is also an upper bound for the number of weight updated vertices.
    Therefore our Euler tour technique to compute the subtree sums takes $O(k \log (1+n/k))$ \revision{expected} work and $O(\log(k \log (1+n/k)) = O(\log n)$ depth with high probability.
    To bound the work for updating the augmented values of weights, note that for every reweighted vertex, its path to the root of its biased zipzip tree also falls on the root path of one of the endpoints of the $k$ links. Therefore the total cost in recomputing all of these augmented values is also $O(k \log (1+n/k))$ \revision{expected} work, and $O(\log n)$ depth with high probability.

    The next part iterates over all the reweighted vertices in parallel, calling $\fname{GetHeavyChild}$ and $\getsucc$. We will assume that $\getsucc$ can run in $O(1)$ time which is easy to maintain in the biased zipzip tree by maintaining a direct pointer to the successor of each element.
    We use the batch-parallel suffix sum query to compute all of the $\fname{GetHeavyChild}$ calls. The cost of this is proportional to the number of nodes on the union of the paths from each of these vertices to their zipzip tree root by Lemma~\ref{lem:batch_range_query}.
    Since these are a subset of the root paths of the endpoints of the $k$ links, this step also takes $O(k \log (1+n/k))$ \revision{expected} work, and $O(\log n)$ depth with high probability.

    Finally, the algorithm calls $\batchsplit$ and $\batchjoin$ to update preferred children. The cost of these operations is proportional to the number of nodes on the union of all the spines in the biased zipzip trees that they impact. We have already observed that the paths from each reweighted vertex to their root in their biased zipzip tree are a subset of the root paths of the endpoints of the $k$ links. This is $O(k \log (1+n/k))$ nodes \revision{in expectation}. Thus the work is $O(k \log (1+n/k))$ work \revision{in expectation}, and the depth is $O(\log n)$ with high probability.

    We analyze the work and depth of $\batchcut$. 
    The algorithm begins by partitioning the $k$ input edges into non-preferred edges ($NTE$) and preferred edges ($TE$). Updating the non-preferred parent pointers $u.nppar$ for all $(u,v) \in NTE$ takes $O(k)$ work and $O(\log k)$ depth. Performing the base structural cuts on the preferred edges takes $O(k \log(1+n/k))$ work \revision{in expectation} and $O(\log n)$ depth with high probability. Grouping the non-preferred edges by destination via $\fname{GroupByDst}$ and calling $\fname{DeleteChildren}$ takes $O(k)$ expected work and $O(\log k)$ depth using parallel semisort~\cite{gu2015top}, and the batch delete of the parallel heavy set.

    Now we bound the costs of the weight updating step. 
    Similar to the $\batchlink$ algorithm, both the directly and indirectly weight-updated vertices lie on the root paths of the endpoints of the $k$ cut edges. 
    Lemma~\ref{lem:lct_zipzip_path_overlap} proved that the number of nodes on the union of $k$ root paths is bounded by $O(k \log (1+n/k))$ \revision{in expectation}, which bounds the number of weight-updated vertices. 
    Therefore, updating the augmented values of weights for these vertices takes $O(k \log (1+n/k))$ work \revision{in expectation}, and $O(\log n)$ depth with high probability, because their paths to the root of their biased zip-zip trees fall completely on the root paths of the cut endpoints.

    An additional step is to find the light vertices that were not reweighted. As described in Appendix~\ref{app:lct_details}, each conceal operation takes $O(\log n)$ depth \whp{} and runs in parallel, so the total depth of this step is $O(\log n)$ \whp{}.
    Now we prove that the total work of these conceal operations is $O(k \log (1+n/k))$ \revision{in expectation}.
    First recall that light vertices can only be ancestors of one of the cut edges. Thus the conceal operations essentially operate along the $k$ root paths based at the upper vertex of each cut.
    Since the conceal operations do not overlap, their total cost is essentially the number of nodes on the union of $k$ root paths. We have previously shown that this is $O(k \log (1+n/k))$ \revision{in expectation}.

    The next part iterates over all the reweighted vertices or light vertices in parallel, calling $\fname{GetHeavyChild}$ and $\getsucc$. We will assume that $\getsucc$ can run in $O(1)$ time, which is easy to maintain in the biased zip-zip tree by keeping a direct pointer to the successor of each element. 
    We use the batch-parallel suffix sum query to compute all of the $\fname{GetHeavyChild}$ calls. The cost of this is proportional to the number of nodes on the union of the paths from each of these vertices to their zip-zip tree root by Lemma~\ref{lem:batch_range_query}. 
    Since these are a subset of the root paths of the endpoints of the $k$ cuts, this step also takes $O(k \log (1+n/k))$ work \revision{in expectation}, and $O(\log n)$ depth with high probability.

    Finally, the algorithm calls $\batchsplit$ and $\batchjoin$ to update preferred children for vertices where the heavy child has changed. The cost of these operations is proportional to the number of nodes on the union of all the spines in the biased zip-zip trees that they impact. We have already observed that the paths from each reweighted vertex to their root in their biased zip-zip tree are a subset of the root paths of the endpoints of the $k$ cuts. This is $O(k \log (1+n/k))$ nodes \revision{in expectation}. Thus the work is $O(k \log (1+n/k))$ \revision{in expectation}, and the depth is $O(\log n)$ with high probability.
\end{proof}

\myparagraph{Diameter Bounds}
Prior work~\cite{deman2026ufo} has proven a loose $O(D^2)$ worst-case cost bound for link-cut trees. This comes from the fact that any root path can traverse through at most $D$ preferred paths, and each preferred path has at most $D$ nodes. Using splay trees, it is possible that the path from a node to the root of its auxiliary tree traverses through all of the nodes in the splay tree, so the worst-case cost is bounded by $O(D^2)$.
This bound can be improved to an expected $O(D \log D)$ when using treaps.

\begin{lemma} \label{lem:expected_root_path}
    In a link-cut tree implemented with treaps, the expected length of any root path is $O(D \log D)$, where $D$ is the diameter of the input tree.
\end{lemma}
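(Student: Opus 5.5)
We bound the root path by decomposing it into its segments inside the auxiliary treaps it crosses, mirroring the proof of Lemma~\ref{lem:lct_zipzip_depth} but with the diameter playing the role of the size bound. Fix a vertex $v$ and follow its root path. It alternates between climbing inside an auxiliary treap from some element to that treap's root and jumping across a non-preferred edge to the next auxiliary tree. As in Lemma~\ref{lem:lct_zipzip_depth}, we assume the head of a treap is reachable from its root in $O(1)$ time via augmentation. Let $np_0 = v, np_1, \dots, np_m$ be the entry elements of the preferred paths visited. The total length is then $m$ plus the sum of the depths of $np_i$ inside their respective treaps.

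First, we bound $m$. Every non-preferred edge crossed goes strictly upward in the input tree, and the distinct preferred paths visited are met at distinct depths along the path from $v$ to the root. The root path in the input tree has at most $D$ edges, so $m \le D+1$.

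Second, we bound each preferred path. A preferred path is a vertical path in the input tree, so it contains at most $D+1$ vertices. Its auxiliary treap therefore has at most $D+1$ elements. The standard treap bound gives that the expected depth of any fixed element in a treap of size $s$ is $O(\log s)$. The expected depth of $np_i$ is thus $O(\log D)$.

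The main obstacle is justifying that each treap's depth bound holds for the element the path actually uses. The entry element $np_i$ is determined by the link-cut tree structure, and that structure is determined by the sequence of operations. Since the treap priorities are drawn independently of the sequence order, a treap's shape is a random treap on its current element set. This holds as long as the choice of preferred paths and splits does not depend on the priorities. In the simple algorithm it does not: joins and splits are chosen from the input-tree topology and from sequence positions, which are priority-oblivious. I would state this obliviousness explicitly and condition on the (deterministic) preferred-path decomposition. Conditioned on it, each $np_i$ is a fixed element of a random treap of size at most $D+1$, so its expected depth is $O(\log D)$.

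Finally, we combine the two bounds by linearity of expectation. Summing over at most $D+1$ treaps, the expected root-path length is
\[
\sum_{i=0}^{m} O(1 + \log D) = O(D \log D).
\]
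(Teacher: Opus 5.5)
Your proposal is correct and follows the paper's argument. A root path crosses $O(D)$ auxiliary treaps, and each one holds a vertical path of $O(D)$ vertices, so it contributes expected depth $O(\log D)$; linearity of expectation then gives $O(D\log D)$. Your explicit argument that each auxiliary treap is a random treap, because the preferred-path decomposition does not depend on the priorities, makes precise an assumption the paper's proof uses only implicitly.
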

\begin{proof}
    In the link-cut tree, any path from a node to the root traverses at most $D$ preferred paths. Each of these preferred paths contains at most $D$ nodes.
    For a treap of size $s$, the expected depth of any node is $O(\log s)$. Since the size of any preferred path is bounded by $s \le D$, the expected traversal length within a single auxiliary tree is $O(\log D)$.
    Because a root path can transition between at most $D$ preferred paths, the traversal in the link-cut tree passes through at most $D$ auxiliary trees. By linearity of expectation, the total expected length of the root path in the link-cut tree is bounded by the sum of expected depths across these auxiliary trees, which evaluates to $D \cdot O(\log D) = O(D \log D)$.
\end{proof}

\revision{In fact, just using naive linked lists can result in root paths with length bounded by $O(D)$. We will use this simple data structure for our analysis.}
Now we prove Theorem~\ref{thm:simple_lct_analysis} which proves bounds on the simple batch link-cut tree algorithms (with no heavy preferred child invariant) in terms of the diameter of the input tree.
We also prove in Theorem~\ref{thm:lct_diameter_analysis} that similar bounds hold for the algorithms which do maintain the heavy preferred child invariant, and use biased zipzip trees.

\lctbatchupdatesimple*
\begin{proof}
    In the simple batch link-cut tree, we analyze the cost of the distinct phases of the algorithms.
    \revision{
    Our analysis will assume that we are using naive linked list based implementation of batch-dynamic sequences.
    }
    
    For $\batchlink$, the algorithm first computes a rooted forest over the components using the Euler tour technique.
    The components are found with a batch get root query on the link-cut tree which takes at most work proportional to length of the $k$ root paths.
    \revision{This is $O(k \cdot D)$ in total if we are using linked lists. The depth is similarly $O(D)$ for this step.}
    The Euler tour technique itself takes $O(k)$ work and $O(\log k)$ depth. Next, the algorithm performs at most one $\evert$ operation per connected component. Because these components are strictly disjoint, the $\evert$ operations can execute completely in parallel without contention.
    \revision{Using linked lists, each $\evert$ operation traverses $O(D)$ nodes. Across the $k$ components, this requires $O(k \cdot D)$ work.
    The depth is dominated by the longest parallel $\evert$ traversal, plus the cost of spawning $k$ threads bounding the overall $\batchlink$ depth as $O(D + \log k)$.}

    For $\batchcut$, the algorithm first identifies and clears non-preferred edges by checking parent pointers, which takes $O(k)$ work and $O(\log k)$ parallel depth.
    The remaining preferred edges are removed by executing a $\batchsplit$ on the auxiliary trees. Because each auxiliary tree represents a path of length at most $D$, \revision{these splits require at most $O(k \cdot D)$ work and $O(\log k + D)$ depth using linked lists.}
    Finally, query operations traverse a root path, taking \revision{$O(D)$ time}.
\end{proof}

\begin{theorem} \label{thm:lct_diameter_analysis}
    In a link-cut tree with the heavy preferred child invariant implemented with biased zipzip trees, the batch link and batch cut algorithms take $O(k D^2)$ expected work and $O(D^2 + \log k)$ \revision{expected} depth where $k$ is the batch size and $D$ is the diameter of the input tree. Queries take $O(D^2)$ time.
\end{theorem}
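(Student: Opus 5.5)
The plan mirrors the proof of Theorem~\ref{thm:lct_analysis}, replacing every appeal to the $O(\log n)$ root-path bound and to Lemma~\ref{lem:lct_zipzip_path_overlap} with a per-root-path bound in terms of $D$. The first step is to establish a diameter analogue of Lemma~\ref{lem:lct_zipzip_depth}: any root path in the nested biased zipzip structure has expected length $O(D^2)$.

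\textbf{Step 1: the root-path bound.} A root path crosses at most $D$ preferred paths, since every non-preferred edge moves strictly up the input tree, whose height is at most $D$. Each preferred path has at most $D$ vertices, so its auxiliary biased zipzip tree has at most $D$ nodes. The depth of any element in that tree is therefore at most $D$ deterministically, independent of the weights. Hence every root path has length at most $D \cdot D + D = O(D^2)$ in the worst case, which is stronger than needed. This is the same counting as the loose $O(D^2)$ bound cited from prior work. Queries follow immediately: they walk one or two root paths and perform a range or prefix sum inside each auxiliary tree visited, each costing $O(\text{depth}) = O(D)$ per tree. This gives $O(D^2)$ per query.

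\textbf{Step 2: batch link.} I walk through Algorithm~\ref{alg:lct_batch_link} phase by phase.
\begin{itemize}
\item $\batchgetrep$ touches at most the union of $2k$ root paths, giving $O(kD^2)$ work and $O(D^2)$ depth.
\item The Euler tour technique on the link forest costs $O(k)$ work and $O(\log k)$ depth.
\item Each $\evert$ consists of an $\expose$ that splices along at most $D$ preferred paths. Each splice is a split or join in a tree of size at most $D$, costing $O(D)$, plus a reversal-bit propagation of cost $O(D)$. So each $\evert$ costs $O(D^2)$, and the everts run independently in parallel.
\item The reweighted vertices lie on the $O(k)$ root paths, so there are $O(kD^2)$ of them. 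The second Euler tour over them, the batch augmented-value updates, the batched $\fname{GetHeavyChild}$ suffix sums (Lemma~\ref{lem:batch_range_query}, whose cost is the union of in-tree paths), and the final $\batchsplit$ and $\batchjoin$ are each bounded by the union of these root paths. Each therefore costs $O(kD^2)$ work and $O(D^2)$ depth, plus $O(\log(kD^2))$ depth for forking, semisort, and the heavy element set.
\end{itemize}

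\textbf{Step 3: batch cut.} This is symmetric. The one extra ingredient is the parallel conceal operations. Each conceal walks one affected preferred path and its ancestors, and is confined to a root path, so each costs $O(D^2)$. Their total is bounded by the union of the $k$ root paths.

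\textbf{Main obstacle.} The hard part is the depth accounting for the Euler tour and semisort steps. Their input has size up to $kD^2$, so their depth is $O(\log k + \log D)$; I will absorb the $\log D$ term into $D^2$. A second difficulty is that the biased zipzip restructuring in $\batchupdate$ could, in principle, make an element deeper. I will argue that the depth bound of Step 1 is structural: a tree of at most $D$ nodes has depth at most $D$ regardless of priorities. So no probabilistic argument is needed, and the expectation in the statement enters only through the hash tables and semisort.
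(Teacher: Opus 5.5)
Your proposal is correct and takes the same approach as the paper. Both arguments use the worst-case $O(D)$ depth of each auxiliary biased zipzip tree (a preferred path is a vertical path with at most $D+1$ vertices) and the at most $D$ preferred-path crossings to bound root paths by $O(D^2)$, then charge every phase of batch link and batch cut to the union of $O(k)$ root paths, with the $\log k$ (and absorbed $\log D$) depth coming from forking, the Euler tour technique, semisort, and the heavy element set; your explicit treatment of the conceal step and of queries via $O(D)$-cost range sums per auxiliary tree is slightly more careful than the paper's write-up.
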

\begin{proof}
    We first analyze the work and depth of $\batchlink$. 
    The $\fname{RootLinkTrees}$ part first calls $\batchgetrep$. In biased zipzip trees used for auxiliary trees, a node has a worst-case depth of $O(D)$. Since a root path can transition between at most $D$ preferred paths, traversing up to $D$ such auxiliary trees yields a worst-case root path length bounded by $O(D^2)$. Thus, for $k$ endpoints, $\batchgetrep$ takes $O(k D^2)$ work and $O(D^2 + \log k)$ depth.
    The Euler tour technique over the link tree with $k$ edges takes $O(k)$ work and $O(\log k)$ depth~\cite{tarjan1985efficient}.
    Each of the $k$ $\evert$ operations happens independently in parallel. Since each $\evert$ traverses a root path, it takes $O(D^2 + \log k)$ depth. Summed across the disjoint components, the work is $O(k D^2)$. 
    Grouping the non-preferred edges by destination via $\fname{GroupByDst}$ and calling $\fname{InsertChildren}$ takes expected $O(k)$ work and $O(\log k)$ expected depth using parallel semisort~\cite{gu2015top}, and the batch insert of the parallel heavy set.
    
    Now we bound the costs of the weight updating step.
    Note that both vertices that are directly weight updated and indirectly weight updated lie on the root paths of the endpoints of the $k$ links. Because the worst-case maximum length of a single root path is $O(D^2)$, the number of nodes on the union of the $k$ root paths is strictly bounded by $O(k D^2)$, which tightly bounds the number of weight updated vertices.
    Therefore our Euler tour technique to compute the subtree sums takes $O(k D^2)$ work and $O(\log D + \log k)$ depth.
    To bound the work for updating the augmented values of weights, note that for every reweighted vertex, its path to the root of its biased zipzip tree also falls on the root path of one of the endpoints of the $k$ links. Therefore the total cost in recomputing all of these augmented values is also $O(k D^2)$ work, and $O(D^2 + \log k)$ depth.

    The next part iterates over all the reweighted vertices in parallel, calling $\fname{GetHeavyChild}$ and $\getsucc$. We will assume that $\getsucc$ can run in $O(1)$ time which is easy to maintain in the biased zipzip tree by maintaining a direct pointer to the successor of each element.
    We use the batch-parallel suffix sum query to compute all of the $\fname{GetHeavyChild}$ calls. The cost of this is proportional to the number of nodes on the union of the paths from each of these vertices to their zipzip tree root.
    Since these are a subset of the root paths of the endpoints of the $k$ links, this step also takes $O(k D^2)$ work, and $O(\log D + \log k)$ depth.

    Finally, the algorithm calls $\batchsplit$ and $\batchjoin$ to update preferred children. The cost of these operations is proportional to the number of nodes on the union of all the spines in the biased zipzip trees that they impact. We have already observed that the paths from each reweighted vertex to their root in their biased zipzip tree are a subset of the root paths of the endpoints of the $k$ links. This is $O(k D^2)$ nodes. Thus the work is $O(k D^2)$, and the depth is $O(D^2 + \log k)$. The overall depth for $\batchlink$ evaluates to \revision{expected} $O(D^2 + \log k)$, and the overall work is expected $O(k D^2)$.

    We now analyze the work and depth of $\batchcut$. 
    The algorithm begins by partitioning the $k$ input edges into non-preferred edges ($NTE$) and preferred edges ($TE$). Updating the non-preferred parent pointers $u.nppar$ for all $(u,v) \in NTE$ takes $O(k)$ work and $O(\log k)$ depth. Performing the base structural cuts on the preferred edges takes $O(k D^2)$ work and $O(D^2 + \log k)$ depth. Grouping the non-preferred edges by destination via $\fname{GroupByDst}$ and calling $\fname{DeleteChildren}$ takes expected $O(k)$ work and \revision{expected} $O(\log k)$ depth using parallel semisort~\cite{gu2015top}, and the batch delete of the parallel heavy set.

    Now we bound the costs of the weight updating step. 
    Similar to the $\batchlink$ algorithm, both the directly and indirectly weight-updated vertices lie on the root paths of the endpoints of the $k$ cut edges. 
    The number of nodes on the union of $k$ root paths is bounded by $O(k D^2)$, which bounds the number of weight-updated vertices. 
    Therefore, updating the augmented values of weights for these vertices takes $O(k D^2)$ work, and $O(D^2 + \log k)$ depth, because their paths to the root of their biased zip-zip trees fall completely on the root paths of the cut endpoints.

    The next part iterates over all the reweighted vertices $W$ in parallel, calling $\fname{GetHeavyChild}$ and $\getsucc$. 
    We use the batch-parallel suffix sum query for the $\fname{GetHeavyChild}$ calls. The cost of this is proportional to the number of nodes on the union of the paths from each of these vertices to their zip-zip tree root. 
    Since these are a subset of the root paths of the endpoints of the $k$ cuts, this step also takes $O(k D^2)$ work, and $O(D^2 + \log k)$ depth.

    Finally, the algorithm calls $\batchsplit$ and $\batchjoin$ to update preferred children for vertices where the heavy child has changed. The cost of these operations is proportional to the number of nodes on the union of all the spines in the biased zip-zip trees that they impact. Since these paths are a subset of the root paths of the endpoints of the $k$ cuts, it impacts $O(k D^2)$ nodes. Thus the work is $O(k D^2)$, and the depth is $O(D^2 + \log k)$. The overall depth for $\batchcut$ evaluates to \revision{expected} $O(D^2 + \log k)$, and the overall expected work is $O(k D^2)$. Queries traverse root paths of length at most $D$, taking $O(D^2)$ time.
\end{proof}

\section{Experimental Evaluation}\label{app:experiments}

\subsection{MOJOS Implementation Details}

\myparagraph{MOJOS Batch-Parallel Treap}
Each treap node stores only a parent pointer, two child pointers, and a 64 bit priority integer (32 total bytes per treap node).
Our treap implementation stores exactly $n$ nodes for a dynamic sequence with $n$ elements, so the memory usage is $32n$ bytes plus a small constant.

For our two-phase batch-split algorithm, the second phase must have access to each attempted child pointer change from the first phase. To store these, we use a single thread-local array per thread to store the child pointer changes that occurred on that thread. In the second phase, each thread sequentially iterates through the entries in its thread-local array.

\myparagraph{MOJOS ETT}
Our implementation uses the implementation of skip-list based batch-parallel Euler tour trees from~\cite{tseng2019batch} as a starting point.
We modify the code with the extra steps needed to function with batch-parallel treaps.
We use the cyclic batch-parallel treap algorithm (Appendix~\ref{app:cyclic_treap}) to handle cycle-inducing join batches.
For batch cut, we don't implement the theoretically efficient algorithm. Instead we use the practical algorithm from~\cite{tseng2019batch} which picks a large random subset of the remaining cuts to execute in each round, then recursing on the remaining cuts until all have been completed.

\myparagraph{MOJOS Simple LCT}
We implement our batch-parallel link-cut trees with batch-parallel treaps instead of batch-parallel biased zipzip trees which provide stronger theoretical guarantees.
We made this decision due to the fact that biased zipzip trees are cumbersome in practice, requiring extra space and update costs to maintain the weights of elements. Each time weights change, a call to $\batchupdate$ is required which is a costly operation.

Each node contain a treap parent pointer, a non-preferred parent pointer, two child pointers, a 32 bit priority integer, a boolean reversal bit value, and a boolean mark field used for batch-queries and efficient augmented value updates. Each node is padded to 40 bytes, thus the total memory usage of the simple batch-parallel link-cut tree is $40n$ bytes plus a small constant factor.

The key component of the simple batch link algorithm is the Euler tour technique step.
Our Euler tour technique implementation uses a technique similar to that in the implementation of~\cite{dong2023provably}.
First, each undirected edge is mapped into a pair of directed edges, which are cyclically linked at each vertex to construct closed Euler tours. This structure is flattened into an Euler tour array through a parallel list-ranking algorithm that utilizes $\sqrt{n}$ node sampling and prefix sums to compute global offsets.
We extend this algorithm to deal with the fact that our use case requires the Euler tour technique to work on a forest of disconnected trees.
Our implementation handles this by using a concurrent union-find~\cite{blelloch2014concurrent} to compute the connected components over the input, and find one vertex in each tree. This is necessary to ensure that at least one vertex is sampled in each tree.
In theory list contraction is used to efficiently compute this step, but we found concurrent union-find to be faster in practice.

\myparagraph{MOJOS Robust LCT}
This implementation stores two additional 32 bit integers per treap node: one for the weight of each vertex, and one to store the sum of the weights in its treap subtree.
Thus each node takes 48 bytes, and the total memory usage is $48n$ plus a small constant factor.

We use the same Euler tour technique implementation as the simple LCT to direct the links.
Then we use thread-local buffers (similar to batch split in our treap implementation) to collect all of the direct and indirect weight changes.
We use a second Euler tour technique to compute the subtree sums. This call does not require the concurrent union find to compute connected components, because the weight change propagation tree is naturally rooted.

\subsection{Real-World Graph Inputs}

Table~\ref{tab:input_graphs} summarizes the real-world graph inputs used for our experimental evaluation.

\begin{table}[ht]
\small
   \centering
   \begin{tabular}{l l l c c c}
       \toprule
       Name & Abbrv. & Type & $|V|$ & $|E|$ & Cite \\
       \midrule
       USA Roads & USA & Road & 23.95M & 28.85M & \cite{roadgraphs} \\
       ENWiki & ENW & Web & 4.21M & 91.94M & \cite{boldi2004webgraph} \\
       StackOverflow & SO & Temporal & 6.02M & 28.18M & \cite{paranjape2017motifs} \\
       Twitter & TWIT & Social & 41.65M & 1.20B & \cite{kwak2010what} \\
       \bottomrule
   \end{tabular}
   \caption{The real-world graph datasets in our experiments.}
   \vspace{-3em}
   \label{tab:input_graphs}
\end{table}

\end{document}